\newcommand{\cand}{\mathsf{cand}}
\newcommand{\sums}{\mathsf{sums}}
\newcommand{\ch}{\mathsf{ch}}
\newcommand{\argmin}{\mathsf{argmin}}
\newcommand{\cA}{\mathcal{A}}
\newcommand{\cB}{\mathcal{B}}
\newcommand{\cL}{\mathcal{L}}
\newcommand{\cP}{\mathcal{P}}
\newcommand{\cR}{\mathcal{R}}
\newcommand{\cS}{\mathcal{S}}
\newcommand{\cT}{\mathcal{T}}
\newcommand{\cW}{\mathcal{W}}
\newcommand{\cX}{\mathcal{X}}
\newcommand{\cY}{\mathcal{Y}}
\newcommand{\A}{\mathbf{A}}
\newcommand{\U}{\mathbb{U}}
\newcommand{\ind}{\mathsf{ind}}
\newcommand{\pref}{\mathsf{pref}}
\newcommand{\suff}{\mathsf{suff}}
\title{On Minimizers of Minimum Density}
\author{Arseny Shur}{Bar Ilan University, Israel}{shur@datalab.cs.biu.ac.il}{https://orcid.org/0000-0002-7812-3399}{
}
\authorrunning{A. Shur} 
\keywords{Bioinformatics, Minimizer, Minimum density, Finite automaton, Finite antidictionary, Perron--Frobenius theory}
\begin{document}

\maketitle

\begin{abstract}
Minimizers are sampling schemes with numerous applications in computational biology.
Assuming a fixed alphabet of size $\sigma$, a minimizer is defined by two integers $k,w\ge2$ and a linear order $\rho$ on strings of length $k$ (also called $k$-mers).
A string is processed by a sliding window algorithm that chooses, in each window of length $w+k-1$, its minimal $k$-mer with respect to $\rho$.
A key characteristic of the minimizer is its density, which is the expected frequency of chosen $k$-mers among all $k$-mers in a random infinite $\sigma$-ary string.
Minimizers of smaller density are preferred as they produce smaller samples with the same guarantee: each window is represented by a $k$-mer. 

The problem of finding a minimizer of minimum density for given input parameters $(\sigma,k,w)$ has a huge search space of $(\sigma^k)!$ and is representable by an ILP of size $\tilde\Theta(\sigma^{k+w})$, which has worst-case solution time that is doubly-exponential in $(k+w)$  under standard complexity assumptions.
We solve this problem in $w\cdot 2^{\sigma^k+O(k)}$ time and provide several additional tricks reducing the practical runtime and search space.
As a by-product, we describe an algorithm computing the average density of a minimizer within the same time bound.
Then we propose a novel method of studying minimizers via regular languages and show how to find, via the eigenvalue/eigenvector analysis over finite automata, minimizers with the minimal density in the asymptotic case $w\to\infty$. 
Implementing our algorithms, we compute the minimum density minimizers for $(\sigma,k)\in\{(2,2),(2,3),(2,4),(2,5),(4,2)\}$ and \textbf{all} $w\ge 2$.
The obtained densities are compared against the average density and the theoretical lower bounds, including the new bound presented in this paper.
\end{abstract}
\clearpage

\section{Introduction}

Sampling short substrings ($k$-mers) in long biological sequences is an important step in solving many bioinformatics tasks.
Typically, a ``window guarantee'' is required: each window of $w$ consecutive $k$-mers (i.e., of length $w+k-1$) in the input sequence should be represented by at least one $k$-mer in the sample.
Minimizers, introduced in \cite{SWA03,RHHMY04}, are simple sampling schemes with the window guarantee: a linear order on $k$-mers is fixed, and in each window, the minimal $k$-mer w.r.t. this order is selected, with ties broken to the left.
A comprehensive list of bioinformatics applications using minimizers can be found in \cite{Ndi2024survey}.
Note that some other sampling schemes use minimizers as an intermediate step \cite{Edg21,KP24,Sah21}.

Minimizers are characterized by their density, which is the expected fraction of sampled $k$-mers in an infinite sequence of i.i.d. symbols.
The average density (a.k.a. expected density of a random minimizer) is close to $\frac{2}{w+1}$ unless $w\gg k$ \cite{SWA03}; for more details see \cite{GoSh25}.
Minimizers of lower density are desirable, as they produce smaller samples with the same window guarantee.
Many methods were designed to build low-density minimizers.
Some of them are constructions, like miniception \cite{zheng2020miniception}, double-decycling \cite{PPEKBSO2023decycling}, and open-closed syncmers \cite{OPMSK2017DOCS}, while others are search algorithms, like DOCKS \cite{OPMSK2017DOCS}, PASHA \cite{EBO2020PASHA}, and GreedyMini \cite{GTKOS25greedymini}.
However, one can only conjecture how close their results are to the minimum density, as the exact values of the latter are not known apart from few small particular cases.

There are two trivial lower bounds on minimizers density: $\frac{1}{w}$ (due to window guarantee) and $\frac{1}{\sigma^k}$ (all occurrences of the minimal $k$-mer are sampled).
Both are ``weakly'' reachable \cite{MDK18asymptotic}: for fixed $w$, there is an infinite sequence of minimizers with densities approaching $\frac{1}{w}$ as $k\to\infty$; for fixed $k$, infinitely growing $w$ and \emph{every} order, the density approaches $\frac{1}{\sigma^k}$. 
The recent lower bound $\max\big\{\frac{1}{w+k}\big\lceil\frac{w+k}{w}\big\rceil, \frac{1}{w+k'}\big\lceil\frac{w+k'}{w}\big\rceil\big\}$, where $k'=\big\lceil\frac{k-1}{w}\big\rceil w+1$, \cite{KKMLT24lower} works for a wider class of \emph{forward} sampling schemes, but for $w\le k$ is nearly tight for minimizers also.
This bound is hit by minimizers in several points \cite{GTKOS25greedymini}.
However, apart from few hits, there is still a gap in density between the lower bounds and the best known minimizers.

In this paper, we study minimum densities and the minimizers reaching these minima.
The size $\sigma$ of the alphabet is assumed constant.
In Section~\ref{s:exhaust}, we present an algorithm finding the minimizer of minimum density for the triple $(\sigma,k,w)$ in time linear in $w$ and doubly exponential in $k$. 
This is a significant improvement over an ILP of size $\tilde{O}(\sigma^{k+w})$ \cite{KKMLT24lower}, as confirmed by the computational results in Section~\ref{s:experiments}.
As a byproduct, we give an algorithm for average density with the same working time.
In Section~\ref{s:asymp}, we develop a novel method of studying minimizers through regular languages.
We show how to construct orders that generate minimum density minimizers for infinitely many window sizes, then build such orders for several small $(\sigma,k)$ pairs, and improve the density lower bound for big $w$.
Finally, in Section~\ref{s:experiments} we present and discuss our computational results: the minimum densities for $(\sigma,k)\in\{(2,2),(2,3),(2,4),(2,5),(4,2)\}$ and \emph{all} $w$.
Appendix~\ref{ss:proofs} contains omitted proofs.



\section{Preliminaries}

In what follows, $\Sigma, \sigma,k$, and $w$ denote, respectively, the alphabet $\{0,\ldots,\sigma{-}1\}$, its size, the length of the sampled substrings ($k$-mers) and the number of $k$-mers in a window.
We write $s[1..n]$ for a length-$n$ string, denote the length of $s$ by $|s|$, and use standard definitions of substring, prefix, and suffix of a string.
We write $n$-string ($n$-prefix, $n$-suffix, $n$-window) to indicate length.
We use the notation $[i..j]$ for the range of integers from $i$ to $j$, and $s[i..j]$ for the substring of $s$ covering the range $[i..j]$ of positions.
A string $s$ \emph{avoids} a string $t$ is $t$ is not a substring of $s$.
An integer $p<|s|$ is a \emph{period} of $s$ if $s[1..|s|-p]=s[p{+}1..|s|]$.
For a string $s$ and a rational $\alpha>1$, we write $s^\alpha$ for the $(\alpha|s|)$-prefix of the infinite string $sss\cdots$. 

We consider deterministic finite automata (DFAs) with \emph{partial} transition function: for a state $q$ and letter $a$, the transition from $q$ by $a$ can be missing. 
We view DFAs as labelled digraphs.
For a fixed DFA, we write $q.w$ to denote the state reached from the state $q$ by a walk labelled by $w$.
We write $\tilde{O}(f)$ to suppress $\mathrm{polylog}(f)$ factors.

Let $\pi$ be a permutation ($=$ a linear order) of $\Sigma^k$.
We view $\pi$ as a bijection $\pi:[1..\sigma^k]\to\Sigma^k$, binding all $k$-mers to their \emph{$\pi$-ranks}.
The \emph{minimizer} $(\pi,w)$ is a map $f:\Sigma^{w+k-1}\to [1..w]$ assigning to each $(w{+}k{-}1)$-window the starting position of its minimum-$\pi$-rank $k$-mer, with ties broken to the left.
This map acts on strings over $\Sigma$, selecting one position in each window so that in a window $v=S[i..i{+}w{+}k{-}2]$ the position $i+f(v)-1$ in $S$ is selected.
Minimizers are evaluated by the density of selected positions.
Let $f(S)$ denote the set of positions selected in a string $S$ by a minimizer $f$. 
The \emph{density of $f$} is the limit $d_{f}=\lim_{n\to \infty}\frac{1}{\sigma^n}\sum_{S\in\Sigma^n} d_{f}(S)$.

An \emph{arrangement} of $\Sigma^k$ with the \emph{domain} $\varnothing\ne U\subseteq\Sigma^k$ is an arbitrary permutation of $U$.
We view $\pi$ as a string of length $|\pi|=|U|$.
We write $\pi_1\cdot\pi_2$ for the arrangement that is the concatenation of arrangements $\pi_1$ and $\pi_2$.
Arrangements of length $\sigma^k$ are exactly (linear) orders.
We extend the notion of $\pi$-rank to arbitrary arrangement $\pi$.
The \emph{$\pi$-minimal} $k$-mer in a string (e.g., in a window) is its $k$-mer of minimum $\pi$-rank.

A subset $H\subseteq \Sigma^k$ is a \emph{universal hitting set} (UHS) \emph{for $w$} if every $(w{+}k{-}1)$-window contains at least one $k$-mer from $H$. 
For example, $\{00,01,11\}$ is a UHS for $\sigma=k=2$ and any $w>1$.
We call an arrangement $\pi$ a \emph{UHS order for $w$} if its domain is a UHS for $w$.
Then any two orders $\pi\cdot\pi_1$ and $\pi\cdot\pi_2$ define the same minimizer for $w$, because the minimum-rank $k$-mer in every window is in $\pi$; we denote this minimizer by $(\pi,w)$.
We write $\U_{\sigma,k}$ for the set of all arrangements of $\Sigma^k$ that are UHS orders for some $w$ (and hence for all sufficiently big $w$).

For a given minimizer $f=(\rho,w)$, a $(w{+}k)$-window $v$ (which contains two consecutive $(w{+}k{-}1)$-windows) is \emph{charged} if its minimum-rank $k$-mer is either its prefix or its \emph{unique suffix} (i.e., the $k$-suffix of $v$ having no other occurrence in $v$); otherwise, $v$ is \emph{free}.
An important observation is that every string $S$ contains exactly $|f(S)|-1$ (not necessarily distinct) charged $(w{+}k)$-windows \cite[Lemma 6]{ZMK2023sketches}. 
Since all possible $n$-strings have, in total, the same number of occurrences of each $(w{+}k)$-window, the density $d_{f}$ of a minimizer equals the fraction of charged windows in $\Sigma^{w+k}$~\cite{MPBOSK2017improving,zheng2020miniception}. 
For fixed $k,w$, ``window'' means a $(w{+}k)$-window.

For an  arrangement $\pi$ of $\Sigma^k$, a window $v$ is \emph{charged by $\pi$ due to $\pi[i]$} if the $\pi$-minimal $k$-mer of $v$ is $\pi[i]$ and is a prefix (\emph{prefix-charged}) or a unique suffix (\emph{suffix-charged}) of $v$.
A window is \emph{live} (w.r.t. $\pi$) if it has no $k$-mers in $\pi$.
A UHS order $\rho\in\mathbb{U}_{\sigma,k}$ is \emph{optimal for $w$}, if the minimizer $(\rho,w)$ has the minimum density among all $(\sigma,k,w)$-minimizers.
Furthermore, $\rho\in\mathbb{U}_{\sigma,k}$ is \emph{eventually optimal} if it is optimal for infinitely many values of $w$.

If $\rho=(u_1,\ldots, u_s)$ is a UHS order and $h$ is a permutation of $\Sigma$, then trivially $h(\rho)=(h(u_1),\ldots,h(u_s))$  is a UHS order with exactly the same characteristics.
Namely, any window $v$ is prefix-charged by $\rho$ due to $\rho[i]$ if and only if $h(v)$ is prefix-charged by $h(\rho)$ due to $h(\rho)[i]$; the same property holds for suffix-charged windows.
In particular, $d_{(\rho,w)}=d_{(h(\rho),w)}$ for all $w$.
Due to this symmetry, we focus on \emph{lexmin} orders, which are lexicographically smaller than their image under any permutation of $\Sigma$.

\section{Computing optimal minimizers} \label{s:exhaust}

In \cite[Suppl. C]{KKMLT24lower}, an integer linear program of size  $\Theta(\sigma^{k+w})$ was given to define a forward scheme of minimum density.
With all improvements over the basic ILP, the authors of \cite{KKMLT24lower} were able to find the minimum-density schemes for only a few non-trivial $(\sigma,k,w)$ triples.
Their program can be adjusted to get a $\tilde{\Theta}(\sigma^{k+w})$-size ILP defining the minimum-density minimizer.
This means that the solution is worst-case doubly-exponential in $k+w$.  
We present a more efficient search, utilizing a combinatorial property of minimum-density orders.

Suppose $\sigma,k$, and $w$ are fixed, $n=k+w$ is the window size.
With every arrangement $\pi$ of $\Sigma^k$ we associate two sequences of sets $\{\cX_{\pi,i}\}_1^{|\pi|}$ and $\{\cY_{\pi,i}\}_1^{|\pi|}$ such that $\cX_{\pi,i}\subset \Sigma^n$ consists of all $n$-windows with the $\pi$-minimal $k$-mer $\pi[i]$, and $\cY_{\pi,i}\subseteq \cX_{\pi,i}$ is the set of all $n$-windows charged by $\pi$ due to $\pi[i]$.
Let $\ch(\pi,i)=|\cY_{\pi,i}|$; the number of windows charged by $\pi$ is $\ch(\pi)=\sum_{i=1}^{|\pi|} \ch(\pi,i)$.
The following lemma is immediate.
\begin{lemma} \label{l:locality}
    Given an arrangement $\pi$, let $\pi'$ be an arrangement obtained from $\pi$ by permuting some $k$-mers inside the range $[j_1..j_2]$ of indices.
    Then $\cX_{\pi,i}=\cX_{\pi',i}$ and $\cY_{\pi,i}=\cY_{\pi',i}$ for every $i\in[1..j_1-1]\cup[j_2+1..|\pi|]$.
\end{lemma}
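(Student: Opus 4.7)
The plan is to unpack the definitions of $\cX_{\pi,i}$ and $\cY_{\pi,i}$ and verify that each defining ingredient is preserved when passing from $\pi$ to $\pi'$, for an index $i$ outside $[j_1..j_2]$.

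The two facts I would establish first are: (a) $\pi[i] = \pi'[i]$ for every $i \notin [j_1..j_2]$, since the permutation leaves positions outside this range untouched; and (b) the rank-prefix set $\{\pi[j] : j < i\}$ equals $\{\pi'[j] : j < i\}$ for every such $i$. Statement (b) splits into two cases: if $i < j_1$ the two sequences of ranks agree pointwise, and if $i > j_2$ they agree on indices $1,\ldots,j_1{-}1$ and $j_2{+}1,\ldots,i{-}1$, while on $[j_1..j_2]$ the two arrangements list exactly the same $k$-mers (just in different orders), so the unordered collections coincide.

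With (a) and (b) in hand, $\cX_{\pi,i} = \cX_{\pi',i}$ is immediate: a window $v$ lies in $\cX_{\pi,i}$ iff $\pi[i]$ occurs in $v$ and no $k$-mer of rank less than $i$ occurs in $v$, and both conditions are the same under $\pi$ and $\pi'$. For $\cY_{\pi,i} = \cY_{\pi',i}$, it remains to observe that being prefix-charged or suffix-charged depends only on the $k$-mer $\pi[i]$ and on the window $v$ itself, not on the surrounding arrangement. Since $\pi[i] = \pi'[i]$ and the $\cX$-sets already coincide, the $\cY$-equality follows at once.

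I do not anticipate any real obstacle — as the statement emphasizes, the lemma is immediate from the definitions. The only care needed is writing the multiset argument of (b) cleanly so that the case $i > j_2$ is not glossed over; this is the one spot where a one-line proof could inadvertently skip a step.
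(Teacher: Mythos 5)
Your proof is correct: the paper states this lemma without proof (calling it immediate), and your unpacking — $\pi[i]=\pi'[i]$ together with the preservation of the set of $k$-mers ranked below $i$, plus the observation that charging depends only on $\pi[i]$ and the window — is exactly the intended argument.
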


An arrangement $\pi$ with the domain $U$ is \emph{optimal} if $\ch(\pi)\le \ch(\pi')$ for every arrangement $\pi'$ of $U$.
By Lemma~\ref{l:locality} and the definition of $\ch(\pi)$, we have
\begin{lemma} \label{l:opt_arrange}
    Every prefix of an optimal arrangement is optimal.
\end{lemma}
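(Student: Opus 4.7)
The plan is to argue by contradiction. Suppose $\pi$ is optimal with domain $U$, but for some $m<|\pi|$ the prefix $\pi[1..m]$ is not optimal; then some arrangement $\pi''$ of the domain $\{\pi[1],\ldots,\pi[m]\}$ satisfies $\ch(\pi'')<\ch(\pi[1..m])$. Set $\pi'=\pi''\cdot \pi[m{+}1..|\pi|]$. This $\pi'$ is obtained from $\pi$ by permuting the $k$-mers inside the index range $[1..m]$, so Lemma~\ref{l:locality} applies with $[j_1..j_2]=[1..m]$ and yields $\cY_{\pi,i}=\cY_{\pi',i}$ for every $i>m$. Thus $\pi$ and $\pi'$ agree on the contribution of positions $>m$ to the charge count.

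The remaining step is to identify the contribution of positions $\le m$ in $\pi$ (respectively in $\pi'$) with $\ch(\pi[1..m])$ (respectively with $\ch(\pi'')$). For this I would verify the identity $\cY_{\pi,i}=\cY_{\pi[1..m],i}$ for $i\le m$. The prefix/unique-suffix part of the charging condition depends only on the $k$-mer $\pi[i]$ and the window $v$, so it transfers immediately. For the minimum-rank part: if $\pi[i]$ is the $\pi$-minimal $k$-mer of $v$ then, a fortiori, it is the $\pi[1..m]$-minimal one; conversely, if $\pi[i]$ is the $\pi[1..m]$-minimal $k$-mer of $v$, then any other $k$-mer of $v$ either has $\pi$-rank $>m\ge i$, or has rank in $[1..m]$ (hence lies in the domain of $\pi[1..m]$) and so has rank $>i$ by assumption. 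Live windows (those with no $k$-mer in $\{\pi[1],\ldots,\pi[m]\}$) contribute nothing on either side, so they need no separate treatment.

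Combining the two pieces gives
\[
\ch(\pi')=\ch(\pi'')+\sum_{i>m}\ch(\pi,i)<\ch(\pi[1..m])+\sum_{i>m}\ch(\pi,i)=\ch(\pi),
\]
contradicting the optimality of $\pi$ as an arrangement of $U$. I do not expect any serious obstacle here; the only point worth writing out carefully is the identity $\cY_{\pi,i}=\cY_{\pi[1..m],i}$ for $i\le m$, since the definitions of $\cX$ and $\cY$ are phrased in terms of $\pi$-minimality rather than $\pi[1..m]$-minimality and one has to check that restricting the arrangement does not alter which $k$-mer realises the minimum in any window that contains a $k$-mer of rank $\le m$.
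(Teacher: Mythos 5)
Your proof is correct and is exactly the exchange argument the paper intends: the paper derives the lemma directly from Lemma~\ref{l:locality} and the definition of $\ch$ without writing out details, and your splicing of a better prefix arrangement $\pi''$ into $\pi$ plus the check that $\cY_{\pi,i}=\cY_{\pi[1..m],i}$ for $i\le m$ is precisely what that derivation amounts to. No gaps.
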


To present the main result of this section, we need the following lemma, implicit in \cite{GTKOS25greedymini}.
\begin{restatable}{lemma}{densitylemma} \label{l:ch}
    For given $\sigma$, $k$, $w$, an arrangement $\pi$ on $\Sigma^k$, and an index $i\in[1..|\pi|]$, the number $\ch(\pi,i)$ can be computed in time $O(\min\{\sigma^w,w\sigma^k\})$.
\end{restatable}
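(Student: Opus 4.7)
The plan is to characterize $\cY_{\pi,i}$ combinatorially and then exhibit two counting algorithms, one matching each term of the minimum. Let $F=\{\pi[1],\ldots,\pi[i-1]\}$ be the set of $k$-mers with $\pi$-rank smaller than $\pi[i]$. By the definitions, a window $v\in\Sigma^{w+k}$ lies in $\cY_{\pi,i}$ iff $v$ avoids every pattern in $F$ and $\pi[i]$ occurs in $v$ either as its $k$-prefix (prefix-charged) or as its unique $k$-suffix (suffix-charged). These two cases are disjoint: if $\pi[i]$ stands at both ends of $v$, then $\pi[i]$ has at least two occurrences and so is not a unique suffix. Hence $\ch(\pi,i)=P+S$, where $P$ counts $u\in\Sigma^w$ with $\pi[i]\cdot u$ avoiding $F$, and $S$ counts $u\in\Sigma^w$ such that $u\cdot\pi[i]$ avoids $F$ and contains $\pi[i]$ only at the rightmost $k$-mer position.

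The central tool is a pattern-avoidance DFA $A(G)$ for $G\subseteq\Sigma^k$: its states are the $(k-1)$-mers over $\Sigma$ plus a failed sink, and the transition from $q$ on letter $a$ goes to $q[2..k-1]\cdot a$ precisely when $q\cdot a\notin G$. The DFA has at most $\sigma^{k-1}+1$ states and can be built in $O(\sigma^k)$ time; its surviving walks are exactly the strings avoiding $G$. In these terms, $P$ is the number of length-$w$ surviving walks in $A(F)$ starting at state $\pi[i][2..k]$, while $S$ is the number of length-$w$ surviving walks in $A(F\cup\{\pi[i]\})$ from the initial empty state whose end-state further survives reading $\pi[i][1..k-1]$. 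Enlarging $F$ by $\pi[i]$ forbids any extra occurrence of $\pi[i]$ in $u\cdot\pi[i][1..k-1]$, exactly what suffix-charging requires.

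Both counts admit two evaluation strategies. A bottom-up DP with $N_t[q]$ equal to the number of length-$t$ surviving walks from $q$, obeying $N_t[q]=\sum_a N_{t-1}[q.a]$, fills a table of size $O(w\sigma^{k-1})$ in $O(\sigma)$ per cell, yielding the $O(w\sigma^k)$ bound. Alternatively, depth-first enumeration of surviving length-$w$ walks visits at most $1+\sigma+\cdots+\sigma^w=O(\sigma^w)$ nodes, each in $O(1)$ time, yielding the $O(\sigma^w)$ bound; for the $S$ case, the ``surviving $\pi[i][1..k-1]$'' flag is memoized over the at-most-$\sigma^w$ distinct depth-$w$ end-states. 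Returning whichever count is cheaper matches the claim. The main subtlety I expect is the bookkeeping needed to ensure that suffix-charged walks enforce the uniqueness of the $\pi[i]$ occurrence at the right end; this is resolved cleanly by the augmentation of $F$ just described.
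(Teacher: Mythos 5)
Your proposal is correct and takes essentially the same route as the paper: the same characterization of prefix-/suffix-charged windows via avoidance of the lower-ranked $k$-mers (with $\pi[i]$ added to the forbidden set to enforce uniqueness of the suffix occurrence), a de Bruijn-style dynamic program giving $O(w\sigma^k)$, and a pruned depth-first enumeration giving $O(\sigma^w)$. The paper's DP runs on the order-$k$ de Bruijn graph with rank thresholds and its DFS handles the suffix case via a trie of reversed windows, but these are inessential variants of your $(k{-}1)$-mer avoidance automaton and forced-suffix check.
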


\begin{theorem} \label{t:exhaustive}
    For given $\sigma$, $k$, and $w$, a minimizer $(\rho,w)$ of minimum density can be found in $O(T{\cdot} 2^{\sigma^k})$ time and $O(\sigma^{k/2}{\cdot} 2^{\sigma^k})$ space, where $T=\min\{\sigma^{w+k},w\sigma^{2k}\}$.
\end{theorem}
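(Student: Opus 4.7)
The plan is a subset dynamic program over $2^{\Sigma^k}$, enabled by the two preceding lemmas. Lemma~\ref{l:locality} tells us that $\ch(\pi,i)$ depends only on $\pi[i]$ and on the \emph{set} $\{\pi[1],\ldots,\pi[i{-}1]\}$ (not on its internal order), and Lemma~\ref{l:opt_arrange} says that an optimal arrangement on $U$ must extend an optimal arrangement on some $U\setminus\{u\}$. Hence the minimum total charge is a function of the placed set alone, and a natural DP fits.

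For $U\subseteq\Sigma^k$, define $D(U)=\min\{\ch(\pi):\pi\text{ arranges }U\}$, and for $V\subseteq\Sigma^k$ and $u\notin V$ let $c(V,u)$ count the $(w{+}k)$-windows that contain no $k$-mer of $V$ and have $u$ as prefix or as unique suffix. By Lemma~\ref{l:locality} this coincides with $\ch(\tau\cdot u,\,|V|{+}1)$ for any arrangement $\tau$ of $V$, so the recurrence
$$D(\varnothing)=0,\qquad D(U)=\min_{u\in U}\bigl(D(U\setminus\{u\})+c(U\setminus\{u\},u)\bigr)$$
is well posed. An optimal order $\rho$ is recovered by tracing argmins from $\Sigma^k$ back to $\varnothing$; the density of $(\rho,w)$ equals $D(\Sigma^k)/\sigma^{w+k}$, and any suffix extending the UHS prefix produced by the DP may be appended arbitrarily without affecting density.

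For the resource bound: the DP visits $2^{\sigma^k}$ subsets, each $U$ triggers at most $|U|\le\sigma^k$ evaluations of $c$, and each evaluation costs $O(\min\{\sigma^w,w\sigma^k\})=O(T/\sigma^k)$ by Lemma~\ref{l:ch}, giving the stated $O(T\cdot 2^{\sigma^k})$ time. Space holds one count and one back-pointer per subset (already $O(2^{\sigma^k})$), together with auxiliary data shared across the $\sigma^k$ candidate evaluations at a single $U$ — essentially a succinct sketch of windows still avoiding $V$ — which accounts for the $\sigma^{k/2}$ factor in the space bound.

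The main obstacle I anticipate is ensuring that \emph{both} branches of Lemma~\ref{l:ch} remain attainable throughout the DP. The $\sigma^w$ branch enumerates windows directly and is stateless, so nothing is lost by re-invoking it per call. The $w\sigma^k$ branch, however, typically rests on an automaton or hit-table keyed by $V$; to keep each evaluation within its $O(w\sigma^k)$ budget one must share this structure across the $\sigma^k$ candidates at a fixed $U$ rather than rebuild it from scratch. Matching this sharing with the $O(\sigma^{k/2}\cdot 2^{\sigma^k})$ space target is where I expect the engineering to be subtle, even though the DP skeleton itself is elementary.
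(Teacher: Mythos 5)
Your proposal is correct and follows essentially the same route as the paper: a subset dynamic program over $2^{\Sigma^k}$ justified by Lemmata~\ref{l:locality} and~\ref{l:opt_arrange}, with the identical time accounting ($2^{\sigma^k}$ subsets, $O(\sigma^k)$ candidate extensions each, $O(\min\{\sigma^w,w\sigma^k\})$ per evaluation via Lemma~\ref{l:ch}). The only cosmetic difference is the space bookkeeping: in the paper the $\sigma^{k/2}$ factor arises because only two consecutive subset-size levels (at most $O(2^{\sigma^k}/\sigma^{k/2})$ entries each) are kept, each storing a full arrangement of size $O(\sigma^k)$, whereas your count-plus-backpointer table uses $O(2^{\sigma^k})$ and needs no shared auxiliary structure, since Lemma~\ref{l:ch} already bounds each evaluation independently.
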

\begin{proof}
    The algorithm computing the order $\rho$ proceeds in phases.
    At the end of $t$'th phase, it stores $\tbinom{\sigma^k}{t}$ key-value pairs.
    The keys are all $t$-element subsets of $\Sigma^k$, represented by $\sigma^k$-bit masks; the value of the key $U$ is the pair $(\pi,\ch(\pi))$, where $\pi$ is an optimal arrangement with the domain $U$.
    Thus, at the end of the last phase the only stored value contains an optimal order $\rho$.
    By definition of optimal, $\rho$ has the minimum density (which is $\frac{\ch(\rho)}{\sigma^{w+k}}$).

    Let $\pi$ be an optimal arrangement with the domain $U=\{u_1,\ldots,u_t\}\subseteq \Sigma^k$, and let $\pi[t]=u_i$.
    By Lemma~\ref{l:opt_arrange}, $\pi=\pi_i\cdot(u_i)$, where
    $\pi_i$ is an optimal arrangement with the domain $U\setminus\{u_i\}$.
    Accordingly, the algorithm computes an optimal arrangement with the domain $U$ as $\pi=\argmin\{\ch(\pi_i\cdot(u_i))\mid i=1,\ldots,t\}$, where  $\pi_i$ is optimal with the domain $U\setminus\{u_i\}$.
    
    Let us describe the details. 
    In the first phase the algorithm processes all 1-element arrangements $\pi=(u)$, $u\in\Sigma^k$, computing $\ch(\pi)=|\cY_{\pi,1}|$ and storing the value $(\pi,\ch(\pi))$ by the key $\{u\}$. 
    During the $(t+1)$th phase, where $t\ge 1$, it loops over all keys, which are $t$-element sets of $k$-mers.
    Given a key $U$ with the value $(\pi,\ch(\pi))$, it creates the arrangement $\pi_u=\pi\cdot(u)$ for every $k$-mer $u\notin U$, computes $\ch(\pi_u)=\ch(\pi)+\ch(\pi_u,t{+}1)$, and looks up  $U'=U\cup\{u\}$ in the dictionary. 
    If no entry for $U'$ exists, it is created with the value $(\pi_u, \ch(\pi_u))$.
    If the entry exists (i.e., it was created earlier during this phase when processing another $t$-element subset of $U'$), its value $(\tau,\ch(\tau))$ is replaced with $(\pi_u, \ch(\pi_u))$ if $\ch(\pi_u)<\ch(\tau)$.
    Hence, when all $t$-element subsets of $U'$ are processed, the value by the key $U'$ contains an optimal arrangement with the domain $U'$.
    After processing all $u\notin U$, the key $U$ is deleted.
    Therefore, at the end of the $(t+1)$th phase the keys in the dictionary are exactly the $(t+1)$-element subsets of $\Sigma^k$.
    This means that at any moment the keys are subsets of $\Sigma^k$ of two consecutive sizes.
    Then the number of dictionary entries is $O(2^{\sigma^k}\!/\sigma^{k/2})$, with each entry of size $O(\sigma^k)$.
    Thus the size of the dictionary fits into the theorem's space bound.

    For each subset $U$ of $\Sigma^k$, the algorithm computes $O(\sigma^k)$ numbers $\ch(\pi_u,|U|+1)$, where $\pi_u$ is defined above; the first phase corresponds to $U=\varnothing$.
    By Lemma~\ref{l:ch}, such a number can be computed in $O(\min\{\sigma^w,w\sigma^k\})$ time.
    Multiplying this by the number of computations, we get the theorem's time bound.
    The time spent for dictionary calls and the space used by a computation of $\ch(\cdot)$ are both negligible compared to the already computed bounds.
\end{proof}
Since the algorithm of Theorem~\ref{t:exhaustive} is linear in $w$, it can be used for almost any $w$ but only very small $k$.
Below we describe a few improvements that greatly decrease both the time and space used.
E. g., the case of $\sigma=2$, $k=5$ becomes feasible for an ordinary laptop.
See Section~\ref{s:experiments} for the computational results.

\noindent \textbf{1. Using an upper bound.} 
Given an order $\bar\rho$, drop every arrangement $\pi$ such that $\ch(\pi)\ge \ch(\bar\rho)$ from further processing.
Thus, if no arrangement $\pi$ with the domain $U$ satisfies $\ch(\pi)<\ch(\bar\rho)$, then $U$ never appears in the dictionary.
This simple trick decreases the number of subsets added to the dictionary, and reduces the runtime accordingly.
A further reduction in runtime is achieved due to the fact that each of the algorithms of Lemma~\ref{l:ch} consists of two consecutive subroutines to count prefix-charged and suffix-charged windows. 
Comparing the number of charged windows to $\ch(\bar\rho)$ after the first subroutine sometimes allows one to drop the current arrangement $\pi$ without running the second subroutine.

The quality of search reduction depends on the proximity of $\bar\rho$ to the optimal density.
In our computations, we took the best order obtained by the greedy method \cite{GTKOS25greedymini} for each $w\le 15$; otherwise we took the order from the optimal minimizer $(\rho,w-1)$ computed before.

\noindent \textbf{2. Using UHSs.}
If $\pi$ is a UHS order, then $\bigcup_{i=1}^{|\pi|}\cX_{\pi,i}=\Sigma^{k+w}$.
Note that the domain $U$ of $\pi$ is a UHS if and only if $\ch(\pi\cdot(u))=\ch(\pi)$ for every $u\notin U$; this condition can be checked for free when processing $U$.
If the domain of $\pi$ is a UHS, then $\ch(\rho)=\ch(\pi)$ for every order $\rho$ with the prefix $\pi$.
This equality has two implications: $\ch(\pi)$ is a new upper bound for the remaining search (it is smaller than the previous bound because $U$ appeared in the dictionary) and all arrangements with the prefix $\pi$ can be dropped from the search.

The above UHS trick suffers from the property that a UHS must contain all unary $k$-mers, because every unary window is charged due to its unique $k$-mer.
We enhance this trick as follows.
For an arrangement $\pi$ with the domain $U$, consider $\ch'(\pi)=\ch(\pi)+|\{a\mid a^k\notin U\}|$.
Let $\rho$ be an order with the prefix $\pi$.
Then $\ch(\rho)\ge \ch'(\pi)$, as each unary $k$-mer charges at least one string.
Now if $\ch'(\pi\cdot(u))=\ch'(\pi)$ for all $u\notin U$, then all windows having no $k$-mer in $U$ are unary, implying $\ch(\rho)=\ch'(\pi)$.
Hence $\ch'(\pi)$ is a new upper bound, and all arrangements with the prefix $\pi$ can be dropped.

\noindent \textbf{3. Reducing space.}
As mentioned in the preliminaries, we can consider only lexmin arrangements.
However, this only marginally effects the storage.
E. g., a subset $U\in\{0,1\}^k$ is guaranteed to not appear in the dictionary only if all $k$-mers in $U$ start with 1.

A much better trick is to store in the dictionary the values $\ch(\pi)$ instead of $(\pi,\ch(\pi))$.
Indeed, the arrangement $\pi$ is not needed inside the algorithm: due to Lemma~\ref{l:locality}, the algorithms of Lemma~\ref{l:ch} can take any arrangement with the domain $U$.
Thus, the algorithm will return $\ch(\rho)$ for an optimal order $\rho$.
To restore $\rho$, store the set $U$ that gave you the last upper bound (i.e., $\ch(\rho)=\ch'(\pi)$ for an optimal arrangement $\pi$ with the domain $U$).
Next we run the search version that \emph{stores} the arrangements, but only those picked from the set $U$ instead of $\Sigma^k$.
In our experiments for the critical case $\sigma=2, k=5$, the size of $U$ was never above 18 (out of $|\Sigma^k|=32$), so the second run took negligible time and space.

\smallskip
A simplified version of the algorithm from Theorem~\ref{t:exhaustive} can solve another density problem for minimizers.
Given $\sigma,k$, and $w$, the \emph{average density} is the average of densities of all minimizers with the parameters $(\sigma,k,w)$.
In the case $w\le k$, it can be computed in $O(w\log w)$ time due to a special structure of windows containing repeated $k$-mers \cite{GoSh25}, but for $w>k$ no algorithms that are polynomial in $k$ and/or $w$ were known.
We show the following result.

\begin{restatable}{theorem}{average} \label{t:random}
    The average density $\cR_\sigma(k,w)$ of a minimizer with the parameters $(\sigma,k,w)$ can be computed in $O(w2^{\sigma^k+2k})$ time and $O(\sigma^k)$ space.
\end{restatable}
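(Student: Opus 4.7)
The plan is to apply linearity of expectation over a uniformly random order $\pi$ of $\Sigma^k$. Since the density of $(\pi,w)$ equals the fraction of charged $(w{+}k)$-windows, I can write
\[ \cR_\sigma(k,w) \;=\; \frac{1}{\sigma^{w+k}}\sum_{v\in\Sigma^{w+k}}\Pr\nolimits_\pi[v\text{ is charged}]. \]
For a fixed window $v$, let $D(v)$ denote the set of distinct $k$-mers in $v$. Under a uniformly random $\pi$, the prefix of $v$ is the $\pi$-minimum of $D(v)$ with probability $1/|D(v)|$, and the suffix is the selected $\pi$-minimum with probability $1/|D(v)|$ iff the suffix has a single occurrence in $v$ (and $0$ otherwise). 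These two events are disjoint, so $\Pr_\pi[v\text{ charged}] = (1+[\text{suffix of }v\text{ is unique}])/|D(v)|$, reducing the task to evaluating two sums of the form $\sum_v g(v)/|D(v)|$ with $g \in \{1, [\text{suf.\ unique}]\}$.

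Next I reindex by the subset $U=D(v)\subseteq\Sigma^k$ and apply Möbius inversion on the Boolean lattice. Let $G_V$ denote the de Bruijn graph of order $k$ restricted to vertex set $V$, let $\alpha(V)$ be the number of walks of length $w$ in $G_V$, and let $\beta(V)$ be the number of such walks whose final vertex does not appear earlier in the walk. These respectively count windows with $D(v)\subseteq V$, and such windows whose suffix is unique. Standard inclusion-exclusion then yields
\[ \sum_v \frac{1}{|D(v)|} = \sum_{U\ne\emptyset} c_{|U|}\,\alpha(U), \qquad \sum_{v:\,\text{suf.\ unique}} \frac{1}{|D(v)|} = \sum_{U\ne\emptyset} c_{|U|}\,\beta(U), \]
where $c_m = \sum_{j=0}^{\sigma^k-m}\binom{\sigma^k-m}{j}(-1)^j/(m+j) = (m-1)!\,(\sigma^k-m)!/\sigma^k!$ is a Beta-integral value, precomputable for all $m$ in $O(\sigma^k)$ time.

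I then iterate once over all nonempty subsets $V\subseteq \Sigma^k$. For each $V$ I run a walk-counting DP on $G_V$: starting from $N_e^{0}(V)=[e\in V]$, use the recurrence $N_e^{t+1}(V) = \sum_{e'\to e,\, e'\in V} N_{e'}^{t}(V)$ for $t=0,\dots,w-1$. Each transition touches at most $\sigma$ in-neighbors (the de Bruijn in-degree), so one DP costs $O(w\sigma|V|)$ time and $O(|V|)$ space (keeping only the two most recent levels). From it, $\alpha(V)=\sum_e N_e^{w}(V)$ contributes $c_{|V|}\alpha(V)$ to a running $\alpha$-accumulator. For the $\beta$-sum I use the reindexing $V'=U\setminus\{u\}$ with $u\notin V'$: a walk of length $w$ in $G_U$ whose last vertex $u$ is unique is exactly a walk of length $w-1$ in $G_{V'}$ ending at some $e$ with $e\to u$. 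Hence
\[ \sum_U c_{|U|}\beta(U) \;=\; \sum_{V'} c_{|V'|+1}\sum_{e\in V'} N_e^{w-1}(V')\cdot\bigl|\{u\notin V': e\to u\}\bigr|, \]
so each subset's single DP contributes to exactly one summand of the $\beta$-sum (the one at index $|V'|{+}1$). Summing over $V$ yields $\sum_m \binom{\sigma^k}{m}\cdot O(w\sigma m) = O(w\sigma^{k+1}\,2^{\sigma^k}) = O(w\,2^{\sigma^k+2k})$ for constant $\sigma$, and total space $O(\sigma^k)$ (a subset bitmask, two DP vectors, the $c_m$ table, and two scalar accumulators). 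Dividing the final accumulators' sum by $\sigma^{w+k}$ gives $\cR_\sigma(k,w)$.

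The main obstacle I anticipate is the suffix-uniqueness constraint: a naive plan that tries to compute $\beta(U)$ per subset $U$ directly would require walk counts on every subgraph $G_{U\setminus\{u\}}$, inflating the per-subset work to $O(w\sigma|U|^2)$ and exceeding the space budget if any $\beta(U)$ were retained. The reindexing above resolves this by charging each sub-DP to its natural subset $V'$ rather than to the many $U$'s it would feed, so the per-subset cost stays at $O(w\sigma|V|)$ time and $O(|V|)$ space and the accumulators never need to store $\alpha(U)$ or $\beta(U)$ individually.
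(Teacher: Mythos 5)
Your proposal is correct, but it reaches the bound by a genuinely different route than the paper. The paper works on the side of \emph{orders}: it writes $\cR_\sigma(k,w)=\frac{1}{\sigma^{k+w}(\sigma^k)!}\sum_\rho\sum_t \ch(\rho,t)$, swaps the sums, and uses Lemma~\ref{l:locality} (that $\ch(\rho,t)$ depends only on $\rho[t]$ and the \emph{domain} of $\rho[1..t{-}1]$) to replace the sum over $(\sigma^k)!$ orders by a loop over pairs (subset $U$, $k$-mer $u\notin U$), invoking the DP of Lemma~\ref{l:ch} once per pair and weighting by the number of orders realizing that pair. You work on the side of \emph{windows}: linearity of expectation reduces the problem to $\sum_v \bigl(1+[\text{suffix unique}]\bigr)/|D(v)|$ (your charge-probability formula is correct, and the two events are indeed disjoint since a minimal $k$-mer that is both the prefix and a unique suffix is impossible), and the $1/|D(v)|$ weights are then unfolded by M\"obius inversion into the Beta coefficients $c_m=(m-1)!(\sigma^k-m)!/\sigma^k!$ applied to walk counts in vertex-restricted de Bruijn graphs; your reindexing of the suffix-unique sum by $V'=U\setminus\{u\}$ is exactly what is needed to keep one DP per subset. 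Both proofs share the skeleton ``one $O(w\cdot\mathrm{poly}(\sigma^k))$ dynamic program per subset of $\Sigma^k$,'' which is why the bounds coincide; in fact your per-subset cost $O(w\sigma|V|)$ is slightly tighter than the paper's $O(\sigma^k)$ invocations of an $O(w\sigma^k)$ routine per subset. What the paper's route buys is verbatim reuse of Lemma~\ref{l:ch} and no inclusion--exclusion identity to verify; what yours buys is a cleaner combinatorial core (the exact charge probability of a fixed window) and no bookkeeping over arrangements. The only caveats are cosmetic and shared with the paper: the identification $\sigma^{k+1}=O(2^{2k})$ implicitly assumes $\sigma$ small/constant (as does the paper's own accounting), and the $O(\sigma^k)$ space bound counts machine words holding large exact numbers (your $c_m$, the paper's $\sums[t]$).
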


\section{Asymptotic comparison of orders} \label{s:asymp}

For $\rho,\rho'\in\U_{\sigma,k}$ we say that $\rho$ is \emph{asymptotically better} than $\rho'$, denoted by $\rho\triangleleft \rho'$, if there exists $w_0$ such  that $w\ge w_0$ implies $d_{(\rho,w)}< d_{(\rho',w)}$.
We extend this notion to arbitrary arrangements: given two arrangements $\pi,\pi'$ of $\Sigma^k$, we write $\pi\triangleleft\pi'$ if $\rho\triangleleft \rho'$ for all $\rho,\rho'\in\U_{\sigma,k}$ such that $\pi$ is a prefix of $\rho$ and $\pi'$ is a prefix of $\rho'$.
The tools for order comparison are presented in Section~\ref{ss:regcharged} after describing necessary results on regular languages in Section~\ref{ss:growth}.

The search for eventually optimal UHS orders is organized as follows.
Due to symmetry, we process only lexmin arrangements.
At $i$'th iteration, we build a list $\cand_i$ of ``candidate'' arrangements of size $i$.
We make this list as small as possible while preserving the main property: $\cand_i$ contains $i$-prefixes of all eventually optimal lexmin orders, up to trivial permutations described in Section~\ref{ss:optimal}.
The choice of candidates is governed by a simple rule: if \emph{we can prove} $\pi\triangleleft\pi'$ for the arrangements $\pi$ and $\pi'$ of size $i$, then $\pi'\notin \cand_i$.
If for some $i$ there is a UHS order $\rho\in\cand_i$ such that \emph{we can disprove} the condition $\pi\triangleleft\rho$ for each $\pi\in\cand_i$, then we conclude that $\rho$ is eventually optimal.
Using this strategy, we find in Section~\ref{ss:optimal} the unique, up to symmetry and trivial permutations, eventually optimal orders for $\sigma=2$ and $k=2,3,4,5$, and also for $\sigma=4$ and $k=2$.
Finally, in Section~\ref{ss:lower} we discuss the density lower bound for big $w$.

\subsection{Growth functions of regular languages} \label{ss:growth}

We need some notation and facts on the growth of regular languages.
For any language $L\subseteq\Sigma^*$, its \emph{growth function} (a.k.a. combinatorial complexity) is the function counting strings: $C_L(n)=|L\cap\Sigma^n|$.
Its main asymptotic characteristic is the \emph{growth rate} $g(L)=\limsup_{n\to\infty}(C_L(n))^{1/n}$; if $L$ is closed under taking substrings, then $\limsup$ can be replaced by $\lim$.
If $L$ is regular, there exist efficient algorithms to calculate $g(L)$ within any prescribed error and to determine other asymptotic characteristics of $C_L(n)$ \cite{Shur10tcs,Shur10imm}.
The general form of $C_L(n)$ is given by the following theorem, combined from results of Salomaa and Soittola \cite{SaSo79}.

\begin{theorem}[\cite{SaSo79}] \label{t:SaSo}
For every regular language $L$ there exist $n_0,r\in\mathbb{N}$ such that for every $j\in[0..r-1]$ and every $n\ge n_0$ satisfying $n\bmod r = j$, the growth function of $L$ equals either 0 or the following finite sum:
\begin{equation} \label{e:SaSo}
    C_L(n)=p_j(n)\alpha_j^n+\sum\nolimits_i p_{ij}(n)\alpha_{ij}^n,
\end{equation}
where $\alpha_j\in\mathbb{R}^+$, $\alpha_{ij}\in\mathbb{C}$  are algebraic numbers, $\alpha_j>\max|\alpha_{ij}|$, $p_j\in\mathbb{R}[x]$ and $p_{ij}\in\mathbb{C}[x]$ are polynomials with algebraic coefficients.
\end{theorem}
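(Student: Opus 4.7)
The plan is to work with a DFA $M = (Q, \Sigma, \delta, q_0, F)$ for $L$ and encode its growth function via transfer-matrix methods. Let $A$ be the $|Q| \times |Q|$ nonnegative integer matrix with $A_{p,q} = |\{a \in \Sigma : \delta(p,a) = q\}|$. Then $C_L(n) = \mathbf{e}_{q_0}^\top A^n \mathbf{f}$, where $\mathbf{f}$ is the characteristic vector of $F$. Since $A$ has integer entries, its eigenvalues are algebraic numbers, and the Jordan decomposition writes every entry of $A^n$ as a finite sum of terms $p(n)\alpha^n$ with $\alpha$ an eigenvalue of $A$ and $\deg p$ bounded by the Jordan block size at $\alpha$. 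This already yields a decomposition of the right shape; what remains is to isolate a single dominant term with real positive base, handle periodicity, and verify algebraicity of the coefficients.

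The next step is to analyze the strongly connected components (SCCs) of the underlying digraph of $A$. By Perron--Frobenius theory applied component-wise, the spectral radius of the submatrix $A_C$ associated with an SCC $C$ is a nonnegative real algebraic number $\alpha_C$, and the peripheral eigenvalues of $A_C$ are exactly $\{\alpha_C \zeta : \zeta^{r_C} = 1\}$, where $r_C$ is the gcd of cycle lengths in $C$. Call an SCC \emph{useful} if it lies on some $q_0 \to F$ walk; non-useful SCCs do not contribute to $C_L(n)$. Set $\alpha = \max_C \alpha_C$ over useful SCCs, and let $r$ be the lcm of the periods $r_C$ over those useful $C$ achieving $\alpha_C = \alpha$.

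Finally, for each residue $j \in [0..r-1]$ and all $n \equiv j \pmod{r}$, I would apply the root-of-unity filter $\frac{1}{r}\sum_{\zeta^r=1}\zeta^{-j}(\cdot)$ to the Jordan expression for $C_L(n)$. This collapses the contributions of $\alpha$ and its rotates into a single dominant term $p_j(n)\alpha^n$ (so $\alpha_j = \alpha$), while eigenvalues strictly inside the dominant circle fall into the error sum $\sum_i p_{ij}(n)\alpha_{ij}^n$. Algebraicity of all coefficients follows because the filter is a $\mathbb{Q}$-linear combination of Galois-stable quantities; reality of $p_j$ follows because $C_L(n) \in \mathbb{Z}$ and the filter is invariant under complex conjugation. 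The degenerate case $C_L(n) = 0$ occurs precisely when the period structure of the dominant useful SCCs forbids any $q_0 \to F$ walk of length $\equiv j \pmod{r}$. The main obstacle is threading the two Perron--Frobenius inputs (positivity of each $\alpha_C$ and the rigid cyclic structure of its peripheral spectrum) through the Jordan decomposition so that, for $n \ge n_0$, the strict inequality $\alpha > \max_i |\alpha_{ij}|$ holds and the ``transient'' contributions from Jordan blocks attached to walks that have not yet entered a dominant useful SCC are safely absorbed into the error polynomial.
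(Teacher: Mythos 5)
The paper does not prove this statement; it imports it from Salomaa--Soittola, so the only question is whether your reconstruction is sound. Your setup (transfer matrix $A$ with $C_L(n)=\mathbf{e}_{q_0}^\top A^n\mathbf{f}$, per-SCC Perron--Frobenius, Jordan expansion, root-of-unity filtering) is the right machinery, and the parts you flag as "the main obstacle" (reality/positivity of the leading coefficient via walk counts through chains of dominant SCCs, algebraicity via Galois-stable filters) are indeed routine once set up. But there is a genuine gap in how you choose $r$ and in your final claim about the degenerate case. You fix a \emph{single} $\alpha=\max_C\alpha_C$ over useful SCCs and let $r$ be the lcm of the periods of only those SCCs attaining $\alpha$; you then assert that $C_L(n)=0$ on a residue class exactly when the dominant SCCs admit no accepting walk of that length. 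That equivalence is false: a residue class can be missed by every dominant SCC yet still carry accepting walks through sub-dominant SCCs. Take $L=\{a,b\}^{2\mathbb{N}}\cup(aaa)^*$: here $\alpha=2$ and your $r=2$, but for odd $n$ one has $C_L(n)=[\,3\mid n\,]=\frac13(1+\omega^n+\bar\omega^n)$ with $|\omega|=1$, which is neither identically $0$ nor of the form $p_1(n)\alpha_1^n+\sum_i p_{i1}(n)\alpha_{i1}^n$ with $\alpha_1>\max_i|\alpha_{i1}|$ for the modulus $r=2$. One needs $r=6$ here, i.e., the modulus must also absorb the periodic structure of the \emph{sub}-dominant part.

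Note that the theorem indexes the dominant base $\alpha_j$ by the residue class precisely because different classes can have different genuine growth rates; your construction can only ever output the one global $\alpha$, so it cannot produce the correct decomposition on classes where the top SCCs contribute nothing. The standard repair is a recursion (or induction on the spectral radius): after filtering out the residue classes genuinely dominated by $\alpha$, restrict to the remaining classes, discard the SCCs of radius $\alpha$ (their contribution there is zero), and repeat the whole argument on the sub-dominant part, refining $r$ to a common multiple at each level; the process terminates because the automaton has finitely many SCCs. You also need to track not just the periods $r_C$ but the lengths of the access/exit paths $q_0\to C\to F$ modulo $r_C$, since these phase offsets determine which residue classes each SCC actually feeds. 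With that recursive structure added, your argument goes through; without it, the case analysis at the end of your proposal is incorrect as stated.
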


As the strings in a regular language $L$ are in 1-to-1 correspondence with the accepting walks in any DFA $\cA$ accepting $L$, one can count walks instead of strings.
We recall some facts about counting walks in digraphs (see \cite{CDS95}).
Given a digraph (in particular, a DFA) $G$, its characteristic polynomial $\chi_G(r)$ and its eigenvalues are those of the adjacency matrix $A$ of $G$.
The number $W_{uv}(n)$ of $(u,v)$-walks of length $n$ in $G$ equals the entry $A^n[u,v]$ of the $n$'th power of $A$.
Hence this number $W_{uv}(n)$, as well as the total number $W_{**}(n)$ of length-$n$ walks in $G$ and the number $W_{u*}$ of such walks with a fixed initial vertex $u$, satisfies the homogeneous linear recurrence relation with the characteristic polynomial equal to $\chi_G(r)$. 
In particular, all non-zero numbers $\alpha_j,\alpha_{ij}$ in \eqref{e:SaSo} are the roots of $\chi_{\cA}(r)$, where $\cA$ accepts $L$.
The \emph{index} $\ind(G)$ is the \emph{dominant} eigenvalue of $G$, i.e, a positive eigenvalue $\alpha$ such that $\alpha\ge|\gamma|$ for any other eigenvalue $\gamma\in\mathbb{C}$ of $G$.
The dominant eigenvalue exists by the Perron--Frobenius theorem and  determines the asymptotic growth of the entries of the matrices $A^n$.
We say that $\alpha$ is \emph{strictly dominant} if its multiplicity is 1 and $\alpha>|\gamma|$ for every eigenvalue $\gamma\ne\alpha$.
For the case where $\ind(G)$ is strictly dominant, the following lemma provides quite precise asymptotic estimates of the functions $W_{uv}(n)$.

\begin{restatable}{lemma}{limitmatrix} \label{l:limit}
    Let $G$ be a digraph with the adjacency matrix $A$, let $\alpha=\ind(G)>1$ be strictly dominant, and let $\gamma$ be the second largest in absolute value eigenvalue of $G$.
    Then\\
    \emph{(i)} there exists a limit matrix $\A=\lim_{n\to\infty} (\alpha^{-1}A)^n=\vec{x}\vec{y}^\top$, where $\vec{x}$ and $\vec{y}^\top$ are, respectively, the column eigenvector and the row eigenvector of $A$ for the eigenvalue $\alpha$;\\
    \emph{(ii)} for every pair $(u,v)$ of vertices in $G$, $W_{uv}(n)=\A[u,v]\cdot\alpha^n+\tilde{O}(|\gamma|^n)$.
\end{restatable}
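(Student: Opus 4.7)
The plan is to reduce the lemma to the spectral decomposition of $A$ afforded by the hypothesis that $\alpha$ is both simple and strictly dominant. I would put $A$ into Jordan normal form $A = PJP^{-1}$ with the blocks arranged so that the $1{\times}1$ block $[\alpha]$ sits in position $(1,1)$; the remaining blocks correspond to eigenvalues of absolute value at most $|\gamma|<\alpha$. Take $\vec{x}$ to be the first column of $P$ and $\vec{y}^\top$ the first row of $P^{-1}$. From $AP=PJ$ and $P^{-1}A=JP^{-1}$ one reads off $A\vec{x}=\alpha\vec{x}$ and $\vec{y}^\top A=\alpha\vec{y}^\top$, and from $P^{-1}P=I$ the normalization $\vec{y}^\top\vec{x}=1$. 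Splitting $J=\alpha E_{11}+\tilde{J}$, where $E_{11}$ has a single $1$ in position $(1,1)$ and $\tilde{J}$ collects the remaining Jordan blocks, a direct computation yields
\[
    A^n \;=\; \alpha^n\,\vec{x}\vec{y}^\top \;+\; P\tilde{J}^n P^{-1}.
\]

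For part (i), I would invoke the standard estimate $\|\tilde{J}^n\|=O(n^{m-1}|\gamma|^n)$, where $m$ is the size of the largest Jordan block of $\tilde{J}$ (the entries of a Jordan block $J_\lambda^n$ are binomials $\binom{n}{j}\lambda^{n-j}$). Consequently $\|\alpha^{-n} P\tilde{J}^n P^{-1}\| = O\bigl(n^{m-1}(|\gamma|/\alpha)^n\bigr) \to 0$, since $|\gamma|/\alpha<1$. Dividing the display by $\alpha^n$ and passing to the limit gives $(\alpha^{-1}A)^n \to \vec{x}\vec{y}^\top$, which is the claimed matrix $\A$. For part (ii), I would simply read off the $[u,v]$-entry of the same display: $W_{uv}(n) = A^n[u,v] = \A[u,v]\cdot\alpha^n + R_n[u,v]$ with $R_n = P\tilde{J}^n P^{-1}$, and the entry-wise bound $|R_n[u,v]| \le \|R_n\| = O(n^{m-1}|\gamma|^n) = \tilde{O}(|\gamma|^n)$, since any polynomial factor in $n$ is polylogarithmic in $|\gamma|^n$ and thus absorbed by $\tilde{O}$.

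The only real difficulty is notational bookkeeping around the Jordan form: one must check that the first column of $P$ and first row of $P^{-1}$ really form a compatible pair of right/left eigenvectors with $\vec{y}^\top\vec{x}=1$, so that $\vec{x}\vec{y}^\top$ is the true spectral projector onto the $\alpha$-eigenspace and is independent of eigenvector scaling. After that, everything is a direct consequence of textbook asymptotics for powers of matrices in Jordan form, e.g.\ as presented in \cite{CDS95}.
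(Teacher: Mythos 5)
Your proof is correct, and for part (i) it follows essentially the paper's own route: the paper likewise takes the Jordan form of $\alpha^{-1}A$ (with the simple dominant eigenvalue isolated in the $(1,1)$ position) and identifies $\vec{x}$ and $\vec{y}^\top$ as the first column of the transition matrix and the first row of its inverse; your extra care about the normalization $\vec{y}^\top\vec{x}=1$ via $P^{-1}P=I$ is exactly the bookkeeping that makes $\vec{x}\vec{y}^\top$ the scaling-independent spectral projector (the paper instead recovers the eigenvector property a posteriori from $(\alpha^{-1}A)\A=\A$ and $\A(\alpha^{-1}A)=\A$). Where you genuinely diverge is part (ii): the paper labels $G$ as a DFA, views $W_{uv}(n)$ as the growth function of a regular language, and invokes the Salomaa--Soittola form (its Theorem~\ref{t:SaSo}) together with the limit from (i) to get the error term, whereas you stay entirely inside linear algebra, writing $A^n=\alpha^n\vec{x}\vec{y}^\top+P\tilde{J}^nP^{-1}$ and bounding the remainder by the standard estimate $\|\tilde{J}^n\|=O(n^{m-1}|\gamma|^n)$. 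Your version is more self-contained and yields an explicit remainder bound without appealing to the regular-language machinery; the paper's version reuses a theorem it needs anyway and avoids spelling out Jordan-power asymptotics. The only caveat, which you flag correctly and which is inherent in the paper's own $\tilde{O}(|\gamma|^n)$ notation rather than a defect of your argument, is the degenerate reading of polylogarithmic factors when $|\gamma|\in\{0,1\}$; with the intended reading (polynomial-in-$n$ factors suppressed) both proofs are fine.
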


We call a digraph $G$ \emph{simple} if it consists of zero or more trivial strong components (singletons and cycles) and one nontrivial strong component $H$, and the greatest common divisor of lengths of all cycles in $H$ is $1$.
We call $H$ the \emph{major} component of $G$.
By the Perron--Frobenius theorem, the condition on components implies that $\ind(G)=\ind(H)>1$ is a simple eigenvalue, while the condition on cycle lengths guarantees that all other eigenvalues of $G$ have smaller absolute values than $\ind(G)$.
Therefore, Lemma~\ref{l:limit} holds for simple digraphs.

We call a digraph \emph{flat} if all its strong components are trivial.
A flat digraph has $\mathsf{poly}(n)$ walks of any length $n$.
All instances of DFAs that appear in this study are simple or flat.

\subsection{Regular languages of charged windows}
\label{ss:regcharged}

For any finite set $\varnothing\ne M\subset\Sigma^+$ of nonempty strings (\emph{antidictionary}), we define the language $L_M\subset\Sigma^*$ of all strings avoiding every string from $M$.
The language $L_M=\Sigma^*-\Sigma^*M\Sigma^*$ is regular; such languages are widely used in symbolic dynamics (subshifts of finite type), in algebra, and to study the growth of more complicated languages (see, e.g., \cite{Shur10tcs}).
In \cite{CMR98}, a modification of the Aho--Corasick automaton was proposed to accept languages with finite antidictionaries.
Given an antidictionary $M$, a DFA $\cA_M$, accepting $L_M$, is built as follows:
\begin{enumerate}
    \item create a trie $\cT$ for $M$, name each vertex by the label of the path from the root to it;
    \item add edges to get a complete DFA: an edge $u\xrightarrow{a}v$ is added if $v$ is the longest suffix of $ua$ that is a vertex of $\cT$;
    \item delete all leaves of $\cT$, mark $\lambda$ as the initial state, mark all vertices as terminal states.
\end{enumerate}
We refer to $\cA_M$ as the \emph{canonical} DFA for $L_M$.
It is \emph{trim} (each state is reachable from the initial state) but not necessarily minimal.
Every walk in $\cA_M$ is labelled by a string from $L_M$, and if $u$ is the label of a walk to the vertex $v$, then one of $u,v$ is the suffix of the other.

\smallskip
Now we define regular languages related to charged windows.
For every $\rho\in\U_{\sigma,k}$ and each $i\in[1..|\rho|]$, let $\cL_{\rho,i}$ be the language with the antidictionary $\{\rho[1],\ldots,\rho[i]\}$, $\cA_{\rho,i}$ be the canonical DFA accepting $\cL_{\rho,i}$, and $\alpha_{\rho,i}=g(\cL_{\rho,i})$.
We also formally define $\cL_{\rho,0}=\Sigma^*$.
Though all the automata $\cA_{\rho,i}$ that arise in our studies are either simple or flat, in general they can be much more complicated; some examples can be found in \cite[Section 6]{Shur13}.

Let $\cP_{\rho,i}$ be the language of all windows prefix-charged by $\rho$ due to the $k$-mer $\rho[i]$.
That is, $\cP_{\rho,i}\cap \Sigma^{w+k}$ is the set of all $(w+k)$-windows prefix-charged by the minimizer $(\rho,w)$ due to $\rho[i]$.
Thus, the growth function $C_{\cP_{\rho,i}}(n)$ counts such charged windows.
Note that $C_{\cP_{\rho,1}}(w+k)=\sigma^w$ for any $\rho$ and $w$ (all windows with the $k$-prefix $\rho[1]$ are charged due to it).
In a symmetric way, we define the languages $\cS_{\rho,i}$ of suffix-charged windows.
Finally, let $\cW_\rho=\bigcup_{i=1}^{|\rho|}\cP_{\rho,i}\cup \bigcup_{i=1}^{|\rho|}\cS_{\rho,i}$ be the language of all windows charged by $\rho$.
As the union is disjoint, we have $C_{\cW_\rho}(n)=\sum_{i=1}^{|\rho|}\big(C_{\cP_{\rho,i}}(n)+C_{\cS_{\rho,i}}(n)\big)=\ch(\rho)$.
Note that by definition of density $C_{\cW_\rho}(w+k)=d_{(\rho,w)}\sigma^{w+k}$ .
The following lemma proves regularity of the languages $\cP_{\rho,i}, \cS_{\rho,i}$, and expresses their growth functions in terms of the automata $\cA_{\rho,i}$.
\begin{lemma} \label{l:regular}
    Let $\rho\in\U_{\sigma,k}$, $i\in[1..|\rho|]$, and $\rho[i]=u=va$, where $a\in\Sigma$. 
    Then \\
    \emph{(i)} $\cP_{\rho,i}=u\Sigma^*\cap \cL_{\rho,i-1}$ and $\cS_{\rho,i}=(\cL_{\rho,i}\cap \Sigma^* v)a$; in particular, $\cP_{\rho,i}$ and $\cS_{\rho,i}$ are regular;\\
    \emph{(ii)} $C_{\cP_{\rho,i}}(n)$ is the number of walks of length $n-k$, starting at the vertex $\lambda.u$, in $\cA_{\rho,i-1}$;\\
    \emph{(iii)} $C_{\cS_{\rho,i}}(n)$ is the number of $(\lambda,v)$-walks of length $n-1$ in $\cA_{\rho,i}$.
\end{lemma}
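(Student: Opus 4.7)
The plan is to prove the three clauses in order: first unpack the set-theoretic characterizations in (i) from the definitions of prefix- and suffix-charged, then translate them into walk counts in (ii)--(iii) via the defining property of the canonical DFA $\cA_{\rho,j}$.

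For (i), a window $W$ is prefix-charged by $\rho$ due to $u = \rho[i]$ iff $u$ is the $k$-prefix of $W$ and no $k$-mer of $W$ has smaller $\pi$-rank; the latter condition is exactly $W \in \cL_{\rho,i-1}$, so $\cP_{\rho,i} = u\Sigma^* \cap \cL_{\rho,i-1}$. For the suffix case, write $W = W'a$ with $|W'|=|W|-1$. Being suffix-charged by $u = va$ unfolds into three conditions: (a) $W'$ ends with $v$; (b) $W$ avoids $\rho[1],\ldots,\rho[i-1]$ (the minimality condition); and (c) $u$ occurs only as the suffix of $W$, which is equivalent to $W'$ avoiding $u = \rho[i]$, since any other occurrence of $u$ in $W$ would lie entirely inside $W'$. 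Combining (b) and (c), $W'$ avoids all of $\rho[1],\ldots,\rho[i]$, so $W' \in \cL_{\rho,i} \cap \Sigma^* v$, giving $\cS_{\rho,i} = (\cL_{\rho,i} \cap \Sigma^* v)a$. Regularity of both languages follows because every $\cL_{\rho,j}$ has a finite antidictionary, and regular languages are closed under intersection and under concatenation with fixed strings.

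For (ii), I would first note that $u \in \cL_{\rho,i-1}$: being of length $k$, $u$ cannot contain any of the length-$k$ forbidden words $\rho[1],\ldots,\rho[i-1]$. Thus the walk from $\lambda$ labelled $u$ exists and terminates at the state $\lambda.u$. Walks of length $n-k$ from $\lambda.u$ then biject with strings $x \in \Sigma^{n-k}$ such that $ux$ labels an accepting walk from $\lambda$, i.e.\ $ux \in \cL_{\rho,i-1}$. The number of such $x$ is $|u\Sigma^{n-k} \cap \cL_{\rho,i-1}|$, which by (i) equals $C_{\cP_{\rho,i}}(n)$.

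For (iii), the key observation is that reaching state $v$ in $\cA_{\rho,i}$ after reading a string $y$ means $y$ literally ends with $v$. By the construction of the canonical DFA, the state reached is the longest suffix of $y$ that is a proper prefix of some $\rho[j]$ with $j\le i$; every such proper prefix has length at most $k-1=|v|$, and $v$ itself is a proper prefix of $\rho[i]=va$. Hence the state equals $v$ iff the $(k-1)$-suffix of $y$ is $v$. Therefore $(\lambda,v)$-walks of length $n-1$ biject with $\cL_{\rho,i}\cap \Sigma^{n-1}v$, whose size equals $C_{\cS_{\rho,i}}(n)$ by (i). The one delicate point in the whole argument is precisely this ``reach state $v$ iff input ends with $v$'' equivalence in (iii); it hinges on the coincidence $|v|=k-1$ with the maximal depth of the trie underlying $\cA_{\rho,i}$, which rules out any longer proper prefix that could override the longest-suffix rule.
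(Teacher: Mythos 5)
Your proof is correct and takes essentially the same route as the paper's: unfold the definitions of prefix-/suffix-charged windows to get (i), then use the structure of the canonical DFA --- in particular that $v$ is a vertex of maximal length $k-1$, so any walk whose label ends with $v$ must terminate at $v$ --- to translate (ii) and (iii) into walk counts. (Only a cosmetic slip: in (iii) the relevant set is $\cL_{\rho,i}\cap\Sigma^{n-k}v$, i.e., the length-$(n-1)$ strings of $\cL_{\rho,i}$ ending in $v$, not $\cL_{\rho,i}\cap\Sigma^{n-1}v$.)
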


\begin{proof}
    By definition, $\cP_{\rho,i}$ consists of all strings that have the $k$-prefix $u$ and avoid the set $\{\rho[1],\ldots,\rho[i-1]\}$.
    The latter condition is equivalent to the membership in $\cL_{\rho,i-1}$ and the acceptance by $\cA_{\rho,i-1}$.
    Then we immediately have (ii) and the formula for $\cP_{\rho,i}$ in (i).

    Again by definition, $\cS_{\rho,i}$ consists of all strings that have the $k$-suffix $u$ and, after deleting the last letter, avoid the set $\{\rho[1],\ldots,\rho[i]\}$.
    Thus $\cS_{\rho,i}$ consists of all strings $xva$, where $xv\in\cL_{\rho,i}$; this gives us the formula for (i).
    The number of such strings equals the number of the strings $xv$ accepted by $\cA_{\rho,i}$.
    Since $va=\rho[i]$ is in the antidictionary of $\cL_{\rho,i}$, $v$ is a vertex in $\cA_{\rho,i}$, and no vertices $v'$ with $|v'|>|v|$ exist.
    Then every walk in $\cA_{\rho,i}$, having the label with the suffix $v$, ends in the vertex $v$ by the definition of canonical DFA.
    Hence all accepting walks in $\cA_{\rho,i}$ labelled by the strings of the form $xv$ are $(\lambda,v)$-walks, which implies (iii).  
\end{proof}

By Lemma~\ref{l:regular}(i), $g(\cP_{\rho,i{+}1}), g(\cS_{\rho,i})\le g(\cL_{\rho,i})=\alpha_{\rho,i}$.
The following lemma shows that under a natural condition these inequalities turn into equalities.

\begin{lemma} \label{l:PSformulas}
    Suppose that $\rho\in\U_{\sigma,k}$, $i\ge 1$, and the DFA $\cA_{\rho,i}$ is a simple digraph with the second largest eigenvalue $\gamma$ and the limit matrix $\A$.
    Then \\
    \emph{(i)} if $\rho[i]=u=va$ for $a\in\Sigma$, then $C_{\cS_{\rho,i}}(n)=\A[\lambda,v]\cdot\alpha_{\rho,i}^{n-1}+\tilde{O}(|\gamma|^n)$; if, moreover, $\alpha_{\rho,i}<\alpha_{\rho,i{-}1}$ and $\cA_{\rho,i-1}$ is simple, then $\A[\lambda,v]>0$;\\
    \emph{(ii)} if $\rho[i+1]=z$, then $C_{\cP_{\rho,i{+}1}}(n)=\big(\sum_{x\in \cA_{\rho,i}} \A[\lambda.z,x]\big)\cdot\alpha_{\rho,i}^{n-k}+\tilde{O}(|\gamma|^n)$; if, moreover, $\alpha_{\rho,i+1}<\alpha_{\rho,i}$, then $\sum_{x\in \cA_{\rho,i}} \A[\lambda.z,x]>0$.
\end{lemma}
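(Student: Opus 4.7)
The plan is to combine the walk-count identities of Lemma~\ref{l:regular} with the asymptotic expansion of Lemma~\ref{l:limit}, then derive the positivity claims from the Perron--Frobenius structure of simple digraphs. The asymptotic formulas are essentially mechanical. For (i), Lemma~\ref{l:regular}(iii) gives $C_{\cS_{\rho,i}}(n) = W_{\lambda,v}(n-1)$ in $\cA_{\rho,i}$; since $\cA_{\rho,i}$ is simple with dominant eigenvalue $\alpha_{\rho,i}$, Lemma~\ref{l:limit}(ii) yields the stated expansion (the error $\tilde{O}(|\gamma|^{n-1})$ absorbs into $\tilde{O}(|\gamma|^n)$). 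For (ii), Lemma~\ref{l:regular}(ii) writes $C_{\cP_{\rho,i+1}}(n)$ as the total number of walks of length $n-k$ starting at $\lambda.z$ in $\cA_{\rho,i}$, namely $\sum_{x} W_{\lambda.z,x}(n-k)$; applying Lemma~\ref{l:limit}(ii) termwise and summing over $x$ produces $\bigl(\sum_x \A[\lambda.z,x]\bigr)\alpha_{\rho,i}^{n-k}+\tilde{O}(|\gamma|^n)$.

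For the positivity I factor $\A = \vec{x}\vec{y}^\top$ using Lemma~\ref{l:limit}(i) and invoke the standard Perron--Frobenius characterization for reducible non-negative matrices: $\vec{x}[u]>0$ iff $u$ reaches the major component $H$ of $\cA_{\rho,i}$, and $\vec{y}[v]>0$ iff $v$ is reachable from $H$. Since $\cA_{\rho,i}$ is trim and $H$ is non-empty, $\vec{x}[\lambda]>0$; also $\sum_x \vec{y}[x]>0$ because $\vec{y}$ is positive on $H$. Thus (i) reduces to showing that $v$ is reachable from $H$, and (ii) to showing that $\lambda.z$ reaches $H$. For (ii), I argue by contradiction. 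If $\lambda.z$ does not reach $H$, walks from $\lambda.z$ are confined to trivial components, so $C_{\cP_{\rho,i+1}}(n)$ is polynomial in $n$. On the other hand, $\alpha_{\rho,i+1}<\alpha_{\rho,i}$ together with simplicity of $\cA_{\rho,i}$ forces $C_{\cL_{\rho,i}\setminus\cL_{\rho,i+1}}(n)=\Theta(\alpha_{\rho,i}^n)$. Decomposing each such $s$ uniquely by its first $z$-occurrence as $s=pzs'$ with $p\in\cL_{\rho,i+1}$ and $zs'\in\cP_{\rho,i+1}$ gives the convolution bound
\[
C_{\cL_{\rho,i}\setminus\cL_{\rho,i+1}}(n) \le \sum_{j=0}^{n-k} C_{\cL_{\rho,i+1}}(j)\,C_{\cP_{\rho,i+1}}(n-j) = O(\mathrm{poly}(n)\,\alpha_{\rho,i+1}^n),
\]
contradicting the $\Theta(\alpha_{\rho,i}^n)$ lower bound.

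The hard part will be the symmetric claim in (i): that $v$ is reachable from $H$ in $\cA_{\rho,i}$. The direct analogue---split $s\in\cL_{\rho,i-1}\setminus\cL_{\rho,i}$ by its first occurrence of $u$ as $s=pus'$, note that $pv\in\cL_{\rho,i}$ ends in $v$, and count extensions---does not yield a clean convolution, because the number of valid $s'$ depends on the state $\lambda.(pu)$ in $\cA_{\rho,i-1}$ rather than only on $|p|$, and a crude uniform bound loses only a single factor of $\alpha_{\rho,i-1}$ relative to the target. My plan is to refine this by grouping prefixes $pu$ according to the state they reach in $\cA_{\rho,i-1}$, using the starting-vertex version of Lemma~\ref{l:limit}(ii) (i.e.\ walks from a fixed vertex in $\cA_{\rho,i-1}$) to estimate extensions precisely, and then pitting the polynomial bound on $W_{\lambda,v}$ assumed for contradiction against $C_{\cL_{\rho,i-1}\setminus\cL_{\rho,i}}(n)=\Theta(\alpha_{\rho,i-1}^n)$. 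An alternative route is to pass to a reverse automaton, turning ``$H$ reaches $v$'' into a forward reachability question of the type already handled in (ii); I expect one of these approaches to close the gap.
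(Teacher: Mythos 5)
Your derivation of the asymptotic formulas is fine and matches the paper (Lemma~\ref{l:regular} plugged into Lemma~\ref{l:limit}), and your proof of the positivity claim in (ii) is correct, though it takes a different route: the paper argues directly that, since adding $z$ to the antidictionary drops the growth rate, some walk inside the major component of $\cA_{\rho,i}$ is labelled $z$ and therefore ends at $\lambda.z$, so $\lambda.z$ lies in the major component; your contradiction via the first-$z$-occurrence factorization $s=p\,zs'$ and the convolution bound is a legitimate alternative (it only shows $\lambda.z$ reaches the major component, but that suffices for $\vec{x}[\lambda.z]>0$). The reduction of both positivity claims to reachability statements via $\A=\vec{x}\vec{y}^\top$ is also sound.

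However, the positivity claim in (i) — that $v$ is reachable from the major component of $\cA_{\rho,i}$ when $\alpha_{\rho,i}<\alpha_{\rho,i-1}$ and $\cA_{\rho,i-1}$ is simple — is left unproven: you only sketch two possible strategies and say you expect one to work. This is precisely the hard half of the lemma, and neither sketch obviously closes it. The refined convolution idea still has to control how many extensions $s'$ are available from the state $\lambda.(pu)$ of $\cA_{\rho,i-1}$, and nothing in that bookkeeping connects occurrences of $u$ to the location of $v$ relative to the major component of the \emph{smaller} automaton $\cA_{\rho,i}$; the reverse-automaton idea does not reduce to (ii) either, because after reversal the statement you need mixes ``begins with $v^R$'' with ``avoids $av^R$'' (so it is not of the form handled in (ii)), and simplicity of the canonical DFA for the reversed antidictionary is not among your hypotheses. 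Tellingly, your plan never uses the assumption that $\cA_{\rho,i-1}$ is simple, which is exactly the ingredient the paper's argument leans on: it builds an auxiliary DFA $\cA$ accepting $\cL_{\rho,i-1}$ on the trie of $\{\rho[1],\ldots,\rho[i]\}$ with the leaf $u$ retained, proves $\cA$ is simple by a dedicated argument (a would-be second nontrivial component would force arbitrarily long strings of the form $xs\bar y$ to be rejected by $\cA$ but accepted by $\cA_{\rho,i-1}$), then notes that deleting $u$ drops the index, so $u$ — and hence its unique in-neighbour $v$ — lies in the major component of $\cA$, from which reachability of $v$ from the major component of $\cA_{\rho,i}$ follows. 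Without an argument of this kind (or a worked-out substitute), part (i) of the lemma remains unestablished in your proposal.
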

\begin{proof}
    Plugging Lemma~\ref{l:regular}(ii,iii) into Lemma~\ref{l:limit}, we immediately get the formulas for the growth functions in both (i) and (ii).
    Let us prove the inequalities for the elements of $\A$.

    The DFA $\cA_{\rho,i}$ contains $\Omega(\alpha_{\rho,i}^n)$ walks of length $n$ within the major component; all these walks are labelled by the strings in $\cL_{\rho,i}$.
    Since the language $\cL_{\rho,i+1}=\cL_{\rho,j}-\Sigma^*z\Sigma^*$ has the growth rate $\alpha_{\rho,i+1}<\alpha_{\rho,i}$, some walk in the major component of $\cA_{\rho,i}$ has the label $z$.
    Since all vertices of $\cA_{\rho,i}$ are strings shorter than $z$, such a walk ends in the longest suffix of $z$ that is a vertex, i.e., in the vertex $\lambda.z$.
    Hence $\lambda.z$ is in the major component of $\cA_{\rho,i}$, and thus is the starting point of $\Omega(\alpha_{\rho,i}^n)$ walks of length $n$; then $\sum_{x\in \cA_{\rho,i}} \A[\lambda.z,x]>0$ and (ii) is proved.

    We build an auxiliary DFA $\cA$ as follows: perform two first steps of constructing the canonical DFA $\cA_{\rho,i}$ and then delete all leaves \emph{except for $u$} in the third step ($u$ is labeled as terminal).
    Then $\cA$ accepts the language $\cL_{\rho,i-1}$ and can be transformed into $\cA_{\rho,i}$ by deleting the vertex $u$.
    Let us show that $\cA$ is simple.
    Suppose it is not.
    Then it has two components $H_1,H_2$ such that $\ind(H_1), \ind(H_2)>1$ and $H_2$ is not reachable from $H_1$.
    Let $x\in H_1, y\in H_2$.
    There are exponentially many walks in $\cA$ ending in $y$.
    Then the labels of exponentially many of them have some common suffix $\bar y$ of length $k+1$.
    Since $|y|\le k$, $y$ is a proper suffix of $\bar y$.
    Then all walks in $\cA$ with the labels of the form $s\bar y$ end in $y$; as a result, no such walk starts in $x$. 
    Since $x$ is a vertex of $\cA$, we have $\lambda.x=x$ and hence $\cA$ accepts no strings of the form $xs\bar y$.
    On the other hand, it accepts exponentially many strings of each of four types: with the prefix $x$, with the suffix $x$, with the prefix $\bar y$, and with the suffix $\bar y$.
    All these strings are also accepted by $\cA_{\rho,i-1}$, as it shares the accepted language $\cL_{\rho,i-1}$ with $\cA$.
    Since $\cA_{\rho,i-1}$ is simple, its vertices $\lambda.x$ and $\lambda.\bar y$ both belong to its major component.
    Then $\cA_{\rho,i-1}$ accepts arbitrarily long strings of the form $xs\bar y$.
    This contradicts the fact that $\cA$ and $\cA_{\rho,i-1}$ accept the same language.
    Hence our assumption was wrong and $\cA$ is indeed simple.
    
    The deletion of the vertex $u$ from $\cA$ reduces the index of the DFA: $\ind(\cA)=\alpha_{\rho,i-1}>\alpha_{\rho,i}=\ind(\cA_{\rho,i})$.
    Hence $u$ belongs to the major component $H$ of $\cA$.
    As the only ingoing edge of $u$ is $v\xrightarrow{a}u$, one has $v\in H$.
    Since in $\cA$ $v$ is reachable from all vertices of $H$ and $u$ is reachable only through $v$, in $\cA_{\rho,i}$ $v$ is reachable from all vertices of the set $H-\{u\}$. 
    All vertices in $\cA$ outside $H$ belong to trivial components, hence the vertices of the major component of $\cA_{\rho,i}$ belong to the set $H-\{u\}$.
    Hence $v$ is reachable from the major component of $\cA_{\rho,i}$, which is in turn reachable from the vertex $\lambda$.
    Therefore, the number of $(\lambda,v)$-walks of length $n$ in $\cA_{\rho,i}$ is $\Omega(\alpha_{\rho,i}^n)$, yielding $\A[\lambda,v]>0$ and the statement (i).
\end{proof}

Since $\cL_{\rho,i+1}\subseteq \cL_{\rho,i}$ for all $i\in[1..|\rho|{-}1]$, the sequence $\{\alpha_{\rho,i}\}$ is (non-strictly) decreasing.
Accordingly, we approximate the language $\cW_\rho$ of all charged windows with the sequence of languages $\cW_{\rho,i}=\bigcup_{j=1}^{i\pmb{+1}}\cP_{\rho,j}\cup \bigcup_{j=1}^{i}\cS_{\rho,j}$.
Then $g(\cW_\rho-\cW_{\rho,i})\le \alpha_{\rho,i+1}$ by Lemma~\ref{l:PSformulas}.
The following lemma is our main tool for the asymptotic comparison of arrangements.

\begin{lemma} \label{l:comparison}
    Let $\pi,\rho\in\U_{\sigma,k}$.
    If there exist $i,j\in\mathbb{N}$, $\varepsilon>0$ such that $C_{\cW_{\pi,j}}(n)-C_{\cW_{\rho,i}}(n)=\Omega((\alpha_{\rho,i+1}+\varepsilon)^n)$, then $\rho\triangleleft \pi$.
\end{lemma}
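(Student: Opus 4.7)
The plan is to translate the asymptotic growth comparison directly into a comparison of charged window counts, and then into a comparison of densities. Recall from the preliminaries that $d_{(\rho,w)}=C_{\cW_\rho}(w+k)/\sigma^{w+k}$ and likewise for $\pi$. Hence $\rho\triangleleft\pi$ is equivalent to $C_{\cW_\rho}(n)<C_{\cW_\pi}(n)$ for all sufficiently large $n=w+k$, and this is what I will establish under the hypothesis.

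The starting observation is the sandwich $\cW_{\pi,j}\subseteq\cW_\pi$ and $\cW_{\rho,i}\subseteq\cW_\rho$, together with the key size estimate $g(\cW_\rho-\cW_{\rho,i})\le\alpha_{\rho,i+1}$ stated just before the lemma (a consequence of Lemma~\ref{l:regular}(i), since every window in $\cW_\rho-\cW_{\rho,i}$ either lies in $\cP_{\rho,\ell}\subseteq\cL_{\rho,\ell-1}\subseteq\cL_{\rho,i+1}$ for some $\ell\ge i+2$, or has the form $xva$ with $xv\in\cL_{\rho,\ell}\subseteq\cL_{\rho,i+1}$ for some $\ell\ge i+1$; the union of finitely many such languages still has growth rate at most $\alpha_{\rho,i+1}$). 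Writing this out, I decompose
\begin{equation*}
    C_{\cW_\pi}(n)-C_{\cW_\rho}(n)=\bigl(C_{\cW_\pi}(n)-C_{\cW_{\pi,j}}(n)\bigr)+\bigl(C_{\cW_{\pi,j}}(n)-C_{\cW_{\rho,i}}(n)\bigr)-\bigl(C_{\cW_\rho}(n)-C_{\cW_{\rho,i}}(n)\bigr).
\end{equation*}
The first bracket is nonnegative, the second is $\Omega\bigl((\alpha_{\rho,i+1}+\varepsilon)^n\bigr)$ by hypothesis, and the third is $O\bigl((\alpha_{\rho,i+1}+\varepsilon/2)^n\bigr)$ by the growth bound above (applied with tolerance $\varepsilon/2$).

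Since $\alpha_{\rho,i+1}+\varepsilon>\alpha_{\rho,i+1}+\varepsilon/2$, the $\Omega$-term strictly dominates the $O$-term, so the whole difference is eventually positive. Dividing by $\sigma^{n}$ and setting $n=w+k$ yields $d_{(\pi,w)}>d_{(\rho,w)}$ for all sufficiently large $w$, i.e., $\rho\triangleleft\pi$.

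There is no real obstacle; the content of the proof is essentially bookkeeping around the sandwich inequalities, with the only substantive ingredient being the bound $g(\cW_\rho-\cW_{\rho,i})\le\alpha_{\rho,i+1}$ that is already in place. The one small care point is to choose the tolerance $\varepsilon/2$ (or any strictly smaller positive constant than the hypothesized $\varepsilon$) when invoking the definition of growth rate, so that the polynomial prefactors hidden in the $O(\cdot)$ are absorbed by the exponential gap.
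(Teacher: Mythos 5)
Your proposal is correct and follows essentially the same route as the paper: bound $C_{\cW_\pi}(n)$ from below by $C_{\cW_{\pi,j}}(n)$, bound the tail $C_{\cW_\rho}(n)-C_{\cW_{\rho,i}}(n)$ using $g(\cW_\rho-\cW_{\rho,i})\le\alpha_{\rho,i+1}$, and let the hypothesized $\Omega((\alpha_{\rho,i+1}+\varepsilon)^n)$ term dominate. Your explicit justification of the growth bound and the $\varepsilon/2$ tolerance remark are just slightly more detailed bookkeeping than the paper's one-line argument.
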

\begin{proof}
    Since $g(\cW_\rho-\cW_{\rho,i})\le \alpha_{\rho,i+1}$ by Lemma~\ref{l:PSformulas}, we have $C_{\cW_\pi}(n)-C_{\cW_\rho}(n)\ge C_{\cW_{\pi,j}}(n)-C_{\cW_\rho}(n)=\Omega((\alpha_{\rho,i+1}+\varepsilon)^n)$, implying $d_{(\rho,w)}< d_{(\pi,w)}$ for all $w$ big enough.
\end{proof}

The growth rate of a regular language is a zero of a polynomial, so in general it cannot be found exactly.
However, indices of digraphs, and hence growth rates of regular languages, can always be compared (see, e.g., \cite{Shur10imm}).
For details, see Appendix~\ref{ss:proofs}, Remark~\ref{r:compare_indices}.

\subsection{Eventually optimal orders for small $\sigma$ and $k$}
\label{ss:optimal}

For a UHS order $\rho\in\mathbb{U}_{\sigma,k}$, let $i\in[1..|\rho|]$ be such that $\alpha_{\rho,i-1}>\alpha_{\rho,i}=1$.
Then $\pi=\rho[1..i]$ is the \emph{head} of $\rho$; we write $\rho=\pi\cdot\pi'$ and refer to $\pi'$ as the \emph{tail} of $\rho$.
Then for sufficiently large $w$ the density of $\rho$ is determined by its head $\pi$, with a negligible additional effect of $\pi'$.
We call the head $\pi$ \emph{good} if $\alpha_{\pi,j-1}<\alpha_{\pi,j}$ for all $j\le i$.
We say that an eventually optimal UHS order $\rho\in\mathbb{U}_{\sigma,k}$ is \emph{essentially unique} if its head $\pi$ is a unique, up to renaming the letters, good head of an eventually optimal order of $\Sigma^k$.
In this section, we compute eventually optimal UHS orders and prove their essential uniqueness.
In the simplest case $\sigma=k=2$, one order is optimal for \emph{all} $w$ (not just eventually optimal).

\begin{restatable}{theorem}{asymptotictwo} \label{t:22w_opt}
    (1) The minimum density of a $(2,2,w)$-minimizer is $\frac{2^w+w+5}{2^{w+2}}$.\\
    (2) The order $\rho=(01,10,00,11)\in\U_{2,2}$ is optimal for all $w\ge2$ and essentially unique.
\end{restatable}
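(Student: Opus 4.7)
The plan has three parts: establish the density formula by direct counting for the candidate order $\rho=(01,10,00,11)$; prove minimality via an exhaustive case analysis of the $12$ lexmin orders of $\Sigma^2$; and derive essential uniqueness from the growth-rate framework of Section~\ref{ss:growth}.

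For the density, the key observation is that $\cL_{\rho,1}=\{1^a0^b:a,b\ge 0\}$ and $\cL_{\rho,2}=\{0^\ast\}\cup\{1^\ast\}$, so $\alpha_{\rho,1}=\alpha_{\rho,2}=1$. Decomposing $\ch(\rho)=\sum_{i=1}^4\ch(\rho,i)$ and counting prefix- and suffix-charged windows yields $\ch(\rho,1)=2^w+(w{+}1)$ (the $2^w$ windows with prefix $01$, plus the $w{+}1$ suffix-charged windows of form $1^a0^{w+1-a}1$ for $a\in[0..w]$), $\ch(\rho,2)=2$ (the single windows $10^{w+1}$ and $1^{w+1}0$), and $\ch(\rho,3)=\ch(\rho,4)=1$ (the monochromatic windows $0^{w+2}$ and $1^{w+2}$). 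Summing gives $\ch(\rho)=2^w+w+5$ and the claimed density.

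For optimality, I tabulate $\ch(\rho')$ in closed form for each of the $12$ lexmin orders of $\Sigma^2$ (the $24$ permutations modulo the $0\leftrightarrow 1$ swap). Orders with $\rho'[1]\in\{00,11\}$ satisfy $\ch(\rho',1)=2^w+F_{w+1}$, where $F_n$ denotes the Fibonacci numbers (the suffix-charged windows correspond bijectively to length-$(w{+}1)$ strings avoiding $00$ and ending in $0$); summing over $i\ge 2$, the six orders in this group yield totals of form $2^w+F_{w+3}+c$ or $2^w+2F_{w+1}+c$ with $c\in\{2,3,4\}$. Orders with $\rho'[1]\in\{01,10\}$ have $\ch(\rho',1)=2^w+(w{+}1)$; the tail contributions sum to $4$, $w{+}3$, or $w{+}4$, giving totals of $2^w+w+5$ (achieved by four orders including $\rho$), $2^w+2w+3$, or $2^w+2w+4$. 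The Fibonacci bounds $F_{w+3}\ge w+3$ and $2F_{w+1}\ge w+2$ for $w\ge 2$, together with $2w+3\ge w+5$ for $w\ge 2$, then show $\ch(\rho')\ge 2^w+w+5$ in every case, with equality achieved by $\rho$.

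For essential uniqueness, a good head $\pi$ must satisfy $\alpha_{\pi,0}>\alpha_{\pi,1}>\cdots>\alpha_{\pi,|\pi|}=1$. For $\sigma=k=2$, $\alpha_{(u),1}=1$ if and only if $u\in\{01,10\}$, while $\alpha_{(u),1}=\phi=(1{+}\sqrt5)/2$ for $u\in\{00,11\}$. Hence the good heads of length $1$ are $(01)$ and $(10)$, equivalent under letter relabeling. Good heads of length $\ge 2$ necessarily start with $00$ or $11$, yet any such order satisfies $\ch(\rho)\ge 2^w+F_{w+1}>2^w+w+5$ for $w\ge 6$; thus these orders are not eventually optimal. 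Therefore $(01)$ is, up to renaming, the essentially unique good head of an eventually optimal order. The principal burden of the whole proof is the case analysis in the optimality step: each order requires enumerating strings avoiding binary $2$-mer antidictionaries and carefully tracking prefix/suffix constraints when computing $\ch(\rho',i)$. Once the formulas are tabulated, the resulting Fibonacci-vs-linear inequalities are elementary, and the $0\leftrightarrow 1$ symmetry halves the work.
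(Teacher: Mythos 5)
Your proposal is correct and follows essentially the same route as the paper's proof: direct counting of charged windows for orders beginning with $01$ (giving $2^w+w+5$ and identifying the orders attaining it), Fibonacci-type counts for orders beginning with $00$ or $11$ compared against the linear count via elementary inequalities valid for $w\ge 2$, with the $0\leftrightarrow1$ symmetry halving the case analysis and essential uniqueness following because only $01$/$10$ yield growth rate $1$ after one $k$-mer while $00$/$11$-first orders are strictly worse for large $w$. The only differences are organizational (you tabulate all $12$ lexmin orders where the paper bounds the minimum over rank-$2$ choices, and you make the head/growth-rate argument for uniqueness explicit), plus a harmless slip in your narrative where the rank-$2{=}11$ tail contributions should be $w+2$ and $w+3$, matching your stated (correct) totals $2^w+2w+3$ and $2^w+2w+4$.
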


For all remaining $(\sigma,k)$ pairs, we build eventually optimal UHS orders step by step, comparing arrangements by the $\triangleleft$ relation to build consecutively the sets $\cand_1,\cand_2,\ldots$.
By the definition of $\triangleleft$, it is safe to choose the arrangements for $\cand_{i+1}$ only among those having prefixes in $\cand_i$.
Our main tool is Lemma~\ref{l:comparison}.
We also use two auxiliary rules to narrow the search; see Appendix~\ref{sss:search}.
The following lemma, which compiles several quite technical results by Guibas and Odlyzko \cite{GuOd78, GuOd81}, restricts the choice of $\cand_1$.

\begin{lemma}[\!\cite{GuOd78, GuOd81}] \label{l:GuOd}
Let $u,u'\in\Sigma^k$, $L_{u},L_{u'}\subset \Sigma^*$ be the languages avoiding $u$ and $u'$ respectively.
If $u$ and $u'$ have the same set of periods, then $C_{L_u}(n)=C_{L_{u'}}(n)$ for all $n$.
If $u$ has no periods and $u'$ has at least one period, then $g(L_{u})<g(L_{u'})$.    
\end{lemma}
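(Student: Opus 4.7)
The plan is to invoke the classical Guibas--Odlyzko generating function machinery for single-pattern avoidance. To each $u\in\Sigma^k$ associate the \emph{autocorrelation polynomial}
\[
c_u(z)\;=\;1+\sum_{p\in P(u)}z^p,
\]
where $P(u)$ is the set of periods of $u$. The central identity, which I would derive via inclusion--exclusion on clusters of overlapping occurrences of $u$ (equivalently, via a transfer-matrix computation on the Aho--Corasick DFA $\cA_u$), is
\[
G_u(z)\;:=\;\sum_{n\ge 0}C_{L_u}(n)\,z^n\;=\;\frac{c_u(z)}{z^k+(1-\sigma z)c_u(z)}.
\]
The first claim of the lemma falls out immediately: two strings with identical period sets share the same $c_u$, hence the same $G_u$, hence the same counting sequence $C_{L_u}(n)$ for every $n$.

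For the second claim, set $D_u(z):=z^k+(1-\sigma z)c_u(z)$. Since $c_u$ has non-negative coefficients and $c_u(0)=1$, the radius of convergence $r_u=1/g(L_u)$ equals the smallest positive real root of $D_u$. For $z\in(0,1/\sigma]$ both $z^k>0$ and $(1-\sigma z)c_u(z)\ge0$, so $D_u>0$ on that interval; in particular $r_u>1/\sigma$, and the same holds for $r_{u'}$. The assumption $c_u\equiv1$ (no periods) yields
\[
D_{u'}(z)-D_u(z)\;=\;(1-\sigma z)\sum_{p\in P(u')}z^p,
\]
which is strictly negative for every $z>1/\sigma$. Suppose for contradiction that $r_u\le r_{u'}$. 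If $r_u<r_{u'}$, then $D_u(r_u)=0$ while $D_{u'}(r_u)>0$ (since $D_{u'}$ is positive on $(0,r_{u'})$), contradicting $D_{u'}(r_u)<D_u(r_u)$; if $r_u=r_{u'}$, then both sides of the display vanish at $r_u$, again contradicting strict negativity. Hence $r_u>r_{u'}$, that is, $g(L_u)<g(L_{u'})$.

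The main obstacle is the first step: establishing the rational form of $G_u$ is where all the combinatorial content of Guibas--Odlyzko resides, and both parts of the lemma are inaccessible without it. Once the identity is in hand, the first claim is purely formal, and the second reduces to the monotonicity of the smallest positive real root of $D_u$ as a function of the correlation data, which is the short analytic argument above. I would either quote the identity verbatim from \cite{GuOd78,GuOd81} and cite the cluster inclusion--exclusion as folklore, or devote the bulk of the proof to a careful derivation of it, noting that the only overlap corrections are indexed precisely by the periods of $u$, so that the same derivation simultaneously prepares the ground for both parts.
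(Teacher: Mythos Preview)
The paper does not prove this lemma at all; it is stated as a compilation of results from \cite{GuOd78,GuOd81} and used as a black box. Your proposal is a correct self-contained reconstruction of the Guibas--Odlyzko argument: the autocorrelation polynomial $c_u(z)=1+\sum_{p\in P(u)}z^p$ and the identity $G_u(z)=c_u(z)/\bigl(z^k+(1-\sigma z)c_u(z)\bigr)$ are exactly what those papers establish, and both claims follow as you indicate. The first part is immediate from the identity; for the second part your root-comparison argument is sound (it is worth noting explicitly that Pringsheim's theorem guarantees the dominant singularity lies on the positive real axis, and that $c_u$ and $D_u$ cannot share a positive root since $c_u(z_0)=0$ would force $z_0^k=0$). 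So there is nothing to compare against in the paper itself---you have supplied the proof that the paper omits.
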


\begin{restatable}{corollary}{corGO} \label{cor:cand1}
   Let $\pi,\rho\in\U_{\sigma,k}$ such that $\rho[1]$ has no periods.
   Then\\
   \emph{(i)} $\rho\triangleleft\pi$ if $\pi[1]$ has a period; \emph{(ii)} $C_{\cS_{\pi,1}}(n)=C_{\cS_{\rho,1}}(n)$ for all $n$ if $\pi[1]$ has no periods.  
\end{restatable}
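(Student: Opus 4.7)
The plan is to handle both parts via a single combinatorial identity derived from Lemma~\ref{l:regular}(i). Writing $\mu[1] = va$ with $a\in\Sigma$ for $\mu\in\U_{\sigma,k}$, the set $\cS_{\mu,1}$ consists of strings $sa$ with $s\in\cL_{\mu,1}$ ending in $v$. For a length-$(n{-}1)$ string $s\in\cL_{\mu,1}$, extending it by a letter $b\in\Sigma$ leaves $\cL_{\mu,1}$ exactly when $s$ ends in $v$ and $b=a$; counting the exceptions yields
\[
C_{\cS_{\mu,1}}(n) \;=\; \sigma\cdot C_{\cL_{\mu,1}}(n{-}1) \,-\, C_{\cL_{\mu,1}}(n).
\]

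For (ii), $\pi[1]$ and $\rho[1]$ share the (empty) period set, so Lemma~\ref{l:GuOd} gives $C_{\cL_{\pi,1}}(n) = C_{\cL_{\rho,1}}(n)$ for every $n$, and substituting the identity for $\mu = \pi,\rho$ yields $C_{\cS_{\pi,1}} = C_{\cS_{\rho,1}}$.

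For (i), I plan to invoke Lemma~\ref{l:comparison} with $i=0$, $j=1$. Then $\cW_{\rho,0} = \cP_{\rho,1}$, and $C_{\cP_{\mu,1}}(n) = \sigma^{n-k}$ for every $\mu$; since $\cW_{\pi,1}\supseteq\cP_{\pi,1}\cup\cS_{\pi,1}$ (a disjoint union for $n>k$), we get $C_{\cW_{\pi,1}}(n) - C_{\cW_{\rho,0}}(n) \ge C_{\cS_{\pi,1}}(n)$. Lemma~\ref{l:GuOd} yields $\alpha_{\rho,1} < \alpha_{\pi,1}$; since $\cL_{\rho,1}$ is infinite, $\alpha_{\rho,1}\ge 1$, so $\alpha_{\pi,1} > 1$. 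Lemma~\ref{l:PSformulas}(i) applied to $\pi$ with $i=1$ then produces $C_{\cS_{\pi,1}}(n) = \Theta(\alpha_{\pi,1}^n)$. Picking any $\varepsilon \in (0,\, \alpha_{\pi,1} - \alpha_{\rho,1})$ turns this into $\Omega\big((\alpha_{\rho,1}+\varepsilon)^n\big)$, and Lemma~\ref{l:comparison} delivers $\rho\triangleleft\pi$.

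The main obstacle is discharging the hypotheses of Lemma~\ref{l:PSformulas}(i), namely that $\cA_{\pi,1}$ is simple, that $\cA_{\pi,0}$ is simple, and that $\alpha_{\pi,1}<\alpha_{\pi,0}=\sigma$. The latter two are immediate: $\cA_{\pi,0}$ is a single vertex with $\sigma$ self-loops, and $\alpha_{\pi,1}<\sigma$ because $\cL_{\pi,1}$ misses at least one length-$k$ string. For $\cA_{\pi,1}$, I would argue that the canonical DFA avoiding a single $k$-mer $u=\pi[1]$ with $\sigma,k\ge 2$ has state set $\{\lambda, u[1..1], \ldots, u[1..k{-}1]\}$ forming a single strongly connected component (each $u[1..j]$ is reachable from $\lambda$ via $u[1..j]$, and $\lambda$ is reachable from every state by reading a long enough run of any letter $b\ne u[1]$), while the self-loop $\lambda\xrightarrow{b}\lambda$ forces the cycle-length gcd to equal $1$.
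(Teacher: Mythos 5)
Your part (ii) is correct and takes a slightly different route from the paper. You use the extension identity $C_{\cS_{\mu,1}}(n)=\sigma\,C_{\cL_{\mu,1}}(n-1)-C_{\cL_{\mu,1}}(n)$, which is valid for \emph{every} $k$-mer $\mu[1]=va$ (a new occurrence of $\mu[1]$ created by appending one letter can only be a suffix occurrence, and $\cS_{\mu,1}=(\cL_{\mu,1}\cap\Sigma^*v)a$ by Lemma~\ref{l:regular}(i)); the paper instead uses the bijection $s\mapsto s\cdot\rho[1]$ between $\cL_{\rho,1}\cap\Sigma^{n-k}$ and $\cS_{\rho,1}\cap\Sigma^n$, which needs the no-period hypothesis to ensure the appended occurrence is unique. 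Both routes then finish with Lemma~\ref{l:GuOd}; yours is marginally more general and equally short. Your part (i) follows the paper's argument exactly: $\alpha_{\rho,1}<\alpha_{\pi,1}$ from Lemma~\ref{l:GuOd}, exponential growth of $C_{\cS_{\pi,1}}$ from Lemma~\ref{l:PSformulas}(i), then Lemma~\ref{l:comparison} with $i=0$, $j=1$.

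The gap is in your verification that $\cA_{\pi,1}$ is simple. The claim that ``$\lambda$ is reachable from every state by reading a long enough run of any letter $b\ne u[1]$'' fails whenever such a run reaches the state $u[1..k{-}1]$ with $b=u[k]$: that transition is missing in the partial DFA. For $\sigma=2$ you cannot dodge this by choosing another $b$; e.g., for $u=0101$ (which has a period, hence a legitimate instance of case (i)) the run of $1$'s from the state $010$ is undefined at the very first step --- $\lambda$ is still reachable, via $010\xrightarrow{0}0\xrightarrow{1}01\xrightarrow{1}\lambda$, but not by the walk you exhibit. Moreover, the unrestricted statement you assert is false: for $u=01^{k-1}$ the canonical DFA is not strongly connected at all (Fig.~\ref{f:aut01}), so any argument must either exploit the hypothesis on $\pi[1]$ or settle for the weaker property actually required by Lemma~\ref{l:PSformulas}, namely simplicity (one nontrivial strong component plus trivial ones), which does hold for all these automata. (The paper's own proof of the corollary silently assumes this hypothesis of Lemma~\ref{l:PSformulas}(i); your instinct to check it is sound, but the check needs repair, e.g., by reaching $\lambda$ with letters chosen adaptively so that the unique missing transition is never attempted, or by arguing simplicity directly as in the discussion around Fig.~\ref{f:aut01}.) A second, minor point: justifying $\alpha_{\pi,1}<\sigma$ by ``$\cL_{\pi,1}$ misses at least one length-$k$ string'' is not by itself a proof --- a language can omit one string of each length and keep growth rate $\sigma$; the standard bound $C_{\cL_{\pi,1}}(n)\le\sigma^n(1-\sigma^{-k})^{\lfloor n/k\rfloor}$ settles it.
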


By Corollary~\ref{cor:cand1}, $\cand_1=\{(u)\mid u\in\Sigma^k \text{ is lexmin and has no periods}\}$ is a valid set of 1-candidate arrangements of $k$-mers.
For $\sigma=2$, this set contains two ``exceptional'' strings $0^{k-1}1$ and $01^{k-1}$: their canonical DFAs are not strongly connected (Fig.~\ref{f:aut01}).
(As long as $\sigma,k\ge 2$, the canonical DFA avoiding any other lexmin $k$-mer is strongly connected.)
The following lemma about these strings is helpful for the main theorems of this section.

\begin{restatable}{lemma}{exceptional} \label{l:0001}
    Let $u\in\{0,1\}^k$, $a\in\{0,1\}$, $v=a01^{k-2}$.
    Then $(01^{k-1},v)\triangleleft(0^{k-1}1,u)$.
    In addition, if $u\ne a01^{k-2}$ for $a\in\{0,1\}$, then $(01^{k-1},v)\triangleleft(01^{k-1},u)$.
\end{restatable}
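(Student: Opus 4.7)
The plan is to apply Lemma~\ref{l:comparison} with $i=j=1$ and reduce everything to a comparison of the second-position prefix-charged counts. Both $01^{k-1}$ and $0^{k-1}1$ have empty period sets, so by Lemma~\ref{l:GuOd} the languages $\cL_{\pi,1}$ and $\cL_{\rho,1}$ share a common growth function with dominant growth rate $\alpha<2$. Corollary~\ref{cor:cand1}(ii) then gives $C_{\cS_{\pi,1}}(n)=C_{\cS_{\rho,1}}(n)$ for every $n$, and combined with the identity $C_{\cP_{\pi,1}}(n)=C_{\cP_{\rho,1}}(n)=2^{n-k}$, this yields $C_{\cW_{\pi,1}}(n)-C_{\cW_{\rho,1}}(n)=C_{\cP_{\pi,2}}(n)-C_{\cP_{\rho,2}}(n)$. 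The case $k=2$ is already covered by Theorem~\ref{t:22w_opt}, so I assume $k\ge 3$ throughout.

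First I would make the DFAs $\cA_{\rho,1}$ and $\cA_{\pi,1}$ explicit and compute their Perron--Frobenius right eigenvectors; both are simple digraphs with common index $\alpha$. In $\cA_{\rho,1}$ (antidictionary $\{01^{k-1}\}$) the state $\lambda$ is a trivial component with self-loop on $1$, and the major component is the strongly connected chain $\{0,01,\ldots,01^{k-2}\}$ with back-edges $01^j\xrightarrow{0}0$. Solving the eigenvector recurrence gives $\vec{x}[01^{k-2}]=\vec{x}[0]/\alpha$, entries $\vec{x}[01^j]$ strictly decreasing in $j$, and $\vec{x}[\lambda]=\vec{x}[0]/(\alpha-1)>\vec{x}[0]$ because $\alpha<2$; hence $01^{k-2}$ attains the strictly smallest $\vec{x}$-value over all vertices. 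In $\cA_{\pi,1}$ (antidictionary $\{0^{k-1}1\}$) the roles are essentially swapped: $\lambda$ lies inside the major component $\{\lambda,0,\ldots,0^{k-2}\}$, while $0^{k-1}$ is a trivial sink with PF entry $0$. By Lemma~\ref{l:regular}(ii) and Lemma~\ref{l:PSformulas}(ii), the two counts have leading form $\A'[\lambda.u,*]\alpha^{n-k}$ and $\A[\lambda.v,*]\alpha^{n-k}$ in the respective DFAs.

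The second statement follows immediately: both counts live in the same DFA $\cA_{\rho,1}$, so the leading-order ratio is $\vec{x}[\lambda.u]/\vec{x}[\lambda.v]$; a short suffix analysis shows $\lambda.u=01^{k-2}$ iff $u\in\{001^{k-2},101^{k-2}\}$, and for every other $u$ one has $\vec{x}[\lambda.u]>\vec{x}[01^{k-2}]$, producing a $\Theta(\alpha^n)$ excess in $C_{\cP_{\pi,2}}-C_{\cP_{\rho,2}}$. Because $\lambda.v=01^{k-2}$ lies in the major component, Lemma~\ref{l:PSformulas}(ii) gives $\alpha_{\rho,2}<\alpha$, so Lemma~\ref{l:comparison} closes the case. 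The first statement is handled analogously whenever $u\ne 0^k$: one computes $\A'[\lambda.u,*]$ from the explicit PF eigenvectors of $\cA_{\pi,1}$ and verifies, after matching normalizations, that $\A'[\lambda.u,*]>\A[01^{k-2},*]$ strictly.

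The main obstacle is the remaining case $u=0^k$ in Part~1: then $\lambda.u=0^{k-1}$ is the trivial sink of $\cA_{\pi,1}$, $\vec{x}'[0^{k-1}]=0$, and $C_{\cP_{\pi,2}}(n)=O(1)$ is in fact dominated by $C_{\cP_{\rho,2}}(n)$ rather than dominating it. My plan is to push to $j=2$: because $\cP_{\pi,2}$ and $\cS_{\pi,2}$ depend only on $\pi[1..2]$, the bound $C_{\cW_\pi}(n)\ge C_{\cP_{\pi,1}}(n)+C_{\cP_{\pi,2}}(n)+C_{\cS_{\pi,1}}(n)+C_{\cS_{\pi,2}}(n)$ holds for every UHS extension of $\pi$. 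The exponential mass ``swallowed'' by the sink must re-surface in $\cS_{\pi,2}$, which by Lemma~\ref{l:regular}(iii) counts $(\lambda,0^{k-1})$-walks in $\cA_{\pi,2}$ (the canonical DFA for the antidictionary $\{0^{k-1}1,0^k\}$). A limit-matrix analysis on $\cA_{\pi,2}$ --- whose major component is obtained from that of $\cA_{\pi,1}$ by deleting the transition $0^{k-2}\xrightarrow{0}0^{k-1}$, still leaving a strongly connected subgraph of index $\alpha_{\pi,2}$ --- yields $C_{\cS_{\pi,2}}(n)=\Omega(\alpha_{\pi,2}^n)$ with strictly positive leading constant, and a separate verification that $\alpha_{\pi,2}>\alpha_{\rho,2}$ (which holds because the choice $v=a01^{k-2}$ cuts more deeply into the major component of $\cA_{\rho,1}$ than $0^k$ does in $\cA_{\pi,1}$) then supplies the required $\Omega((\alpha_{\rho,2}+\varepsilon)^n)$ excess, allowing the argument of Lemma~\ref{l:comparison} to go through.
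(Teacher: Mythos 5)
Your setup (reduce to $i=j=1$ via Lemma~\ref{l:comparison}, use Corollary~\ref{cor:cand1}(ii) to cancel the rank-1 suffix counts, and compare limit-matrix entries of $\cA_{\pi,1}$ and $\cA_{\rho,1}$) matches the paper, and your treatment of the second claim is essentially the paper's. But there are two genuine gaps in the first claim. First, you split into ``$u\ne 0^k$'' versus ``$u=0^k$'', claiming that for all $u\ne 0^k$ the prefix-count comparison gives $\A'[\lambda.u,*]>\A[01^{k-2},*]$. This fails for $u=10^{k-1}$: in $\cA_{\pi,1}$ one has $\lambda.(10^{k-1})=0^{k-1}$, and the only walks leaving $0^{k-1}$ are along its self-loop, so $C_{\cP_{\pi,2}}(n)=O(1)$ and the corresponding limit-matrix row is zero --- the inequality goes the wrong way. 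The paper handles $u=10^{k-1}$ by switching to the suffix-charged set at rank 2 (taking $j=2$ in Lemma~\ref{l:comparison}): $C_{\cS_{\pi,2}}(n)$ counts walks into $0^{k-2}$ and has leading coefficient $\mu$, which strictly beats the coefficient $\mu/\alpha$ of $C_{\cP_{\rho,2}}(n)$. You would also need to actually prove the normalization matching $\mu=\mu'$ that you only gesture at; the paper does this via Corollary~\ref{cor:cand1}(ii), Lemma~\ref{l:PSformulas}(i) and the idempotency $\A^2=\A$, and without it none of the cross-automaton coefficient comparisons are justified.

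Second, your plan for $u=0^k$ does not close the case. Note that deleting $0^k$ does not touch the major component $\{\lambda,0,\dots,0^{k-2}\}$ of $\cA_{\pi,1}$, so $\alpha_{\pi,2}=\alpha=\alpha_{\rho,1}$, and indeed $C_{\cS_{\pi,2}}(n)=\Theta(\alpha^n)$ with positive leading constant, as you say. But the quantity you must bound below is $C_{\cW_{\pi,2}}(n)-C_{\cW_{\rho,1}}(n)$, and $\cW_{\rho,1}$ already contains $\cP_{\rho,2}$, whose count is also $\Theta(\alpha^n)$ --- in fact with exactly the same leading coefficient $\mu/\alpha$ as $C_{\cS_{\pi,2}}(n)$ (the paper computes both). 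So these terms cancel at leading order, the difference from them is only $\tilde O(|\gamma|^n)$, and ``positive leading constant plus $\alpha_{\pi,2}>\alpha_{\rho,2}$'' does not yield the required $\Omega((\alpha_{\rho,2}+\varepsilon)^n)$ excess. The missing idea is the paper's tie-breaking step: for any UHS order $\pi'$ extending $(0^{k-1}1,0^k)$ there is a first index $j>2$ with $\alpha_{\pi',j}<\alpha_{\pi',j-1}=\alpha$, and by Lemma~\ref{l:PSformulas}(ii) the term $C_{\cP_{\pi',j}}(n)$ then contributes a strictly positive coefficient in $\alpha^n$; applying Lemma~\ref{l:comparison} with that $j$ gives $\rho\triangleleft\pi'$ for every extension $\pi'$, and hence $\rho\triangleleft\pi$ by the definition of $\triangleleft$ on arrangements. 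Without this (or an equivalent) argument, the case $u=0^k$ remains open in your proposal.
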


\begin{figure}[!htb]
    \centering
    \includegraphics[scale=0.82, trim = 45 730 70 41, clip]{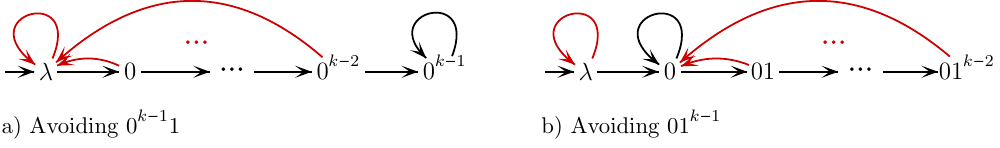}
    \caption{Canonical DFAs avoiding exceptional $k$-mers $0^{k-1}1$ and $01^{k-1}$ (Lemma~\ref{l:0001}). Black (red) edges are labelled by 0 (resp., by 1).}
    \label{f:aut01}
\end{figure}

Now we present optimal UHS orders for the announced $(\sigma,k)$ pairs.

\begin{restatable}{theorem}{asymptoticthree}  \label{t:23w_opt}
    The UHS order $\rho=(011,001,101,000,110,111)\in\U_{2,3}$ is eventually optimal and essentially unique.
\end{restatable}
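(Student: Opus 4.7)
The plan is to follow the search strategy of Section~\ref{ss:optimal}: iteratively build the candidate sets $\cand_1, \cand_2, \ldots$, pruning arrangements via the $\triangleleft$ relation until a single good head is isolated, and then verify that the given $\rho$ realises it. By Corollary~\ref{cor:cand1}, $\cand_1 = \{(001), (011)\}$, the two lexmin period-free $3$-mers. Lemma~\ref{l:0001} then eliminates every extension of $(001)$ in favour of some extension of $(011)$, and further restricts $\cand_2 \subseteq \{(011, 001), (011, 101)\}$. The crux is to show $(011, 001) \triangleleft (011, 101)$.

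Set $\pi_1 = (011, 001)$ and $\pi_2 = (011, 101)$. The DFAs $\cA_{\pi_1, 1}$ and $\cA_{\pi_2, 1}$ both coincide with the three-state DFA avoiding $\{011\}$, whose index is the golden ratio $\phi$. The key structural observation I would exploit is that in this DFA both walks $\lambda.001$ and $\lambda.101$ terminate at the state $01$; by Lemma~\ref{l:regular}(ii) this gives the exact identity $C_{\cP_{\pi_1, 2}}(n) = C_{\cP_{\pi_2, 2}}(n)$. Combined with the fact that $\cP_{\rho, 1}$ and $\cS_{\rho, 1}$ depend only on $\rho[1] = 011$, I conclude $C_{\cW_{\rho_1, 1}}(n) = C_{\cW_{\rho_2, 1}}(n)$ for any UHS extensions $\rho_1$ of $\pi_1$ and $\rho_2$ of $\pi_2$.

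Next I would construct $\cA_{\pi_j, 2}$ for $j = 1, 2$ to expose the asymptotic separation. The automaton $\cA_{\pi_1, 2}$ has four states and every strong component has index $1$, so $\alpha_{\pi_1, 2} = 1$ and the counts $C_{\cS_{\pi_1, 2}}(n)$ and $C_{\cP_{\rho_1, 3}}(n)$ are polynomial. By contrast, $\cA_{\pi_2, 2}$ is simple with major component $\{0, 01, 10\}$; the self-loop at $0$ together with the $3$-cycle $0 \to 01 \to 10 \to 0$ makes it aperiodic, its characteristic polynomial is $t^3 - t^2 - 1$, and its index is the plastic number $\beta$. Since $v = 10$ lies in the major component, Lemma~\ref{l:PSformulas}(i) yields $C_{\cS_{\pi_2, 2}}(n) = \Omega(\beta^n)$. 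Putting the pieces together,
\[
C_{\cW_{\rho_2, 2}}(n) - C_{\cW_{\rho_1, 2}}(n) \ge C_{\cS_{\pi_2, 2}}(n) - C_{\cS_{\pi_1, 2}}(n) - C_{\cP_{\rho_1, 3}}(n) = \Omega(\beta^n),
\]
and since $\alpha_{\rho_1, 3} \le \alpha_{\pi_1, 2} = 1 < \beta$, Lemma~\ref{l:comparison} (with $\varepsilon = \beta - 1$) gives $\rho_1 \triangleleft \rho_2$. Hence $\cand_2 = \{(011, 001)\}$.

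To finish, $\pi_1$ is itself the head of any UHS extension (since $\alpha_{\pi_1, 2} = 1$) and is good because $2 > \phi > 1$ strictly. The specified $\rho$ carries this head, so it is asymptotically optimal; it is a UHS order because a binary string avoiding $\{011, 001, 101, 000, 110, 111\}$ can contain only the period-$2$ $3$-mers $010$ and $100$ and hence has length at most $4$ (the maximal example being $0100$). Essential uniqueness follows because, up to the swap $0 \leftrightarrow 1$, no other lexmin arrangement survives the pruning. The main obstacle I anticipate is establishing the exact cancellation $C_{\cW_{\rho_1, 1}} = C_{\cW_{\rho_2, 1}}$; without it the $\phi^n$-order contributions at position $2$ would mask the smaller $\beta^n$-order separation that drives the comparison.
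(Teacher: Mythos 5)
Your treatment of the head is correct and matches the paper's: $\cand_1=\{(001),(011)\}$ via Corollary~\ref{cor:cand1}, reduction to $(011,001)$ vs.\ $(011,101)$ via Lemma~\ref{l:0001}, and the separation $\alpha_{\pi_1,2}=1<\alpha_{\pi_2,2}\approx1.4656$ fed into Lemmas~\ref{l:PSformulas} and~\ref{l:comparison}. Making the exact cancellation $C_{\cP_{\pi_1,2}}=C_{\cP_{\pi_2,2}}$ explicit (both $\lambda.001$ and $\lambda.101$ land on the state $01$ of the DFA avoiding $011$) is a point the paper leaves implicit, and it is indeed what keeps the $\alpha_{\pi,1}^n$-order terms from swamping the comparison. (Minor slip: the root of $t^3-t^2-1$ is the supergolden ratio $\approx1.4656$, not the plastic number $\approx1.3247$; this does not affect the argument.)

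The genuine gap is the step ``the specified $\rho$ carries this head, so it is asymptotically optimal.'' Having the unique good head only shows that $\rho$ eventually beats every order with a \emph{different} head; it says nothing about orders sharing the head $(011,001)$. After rank $2$ the live windows are exactly the strings $1^i(01)^j0^\ell$, whose number grows polynomially, so two orders with the same head differ in density by terms of order $\mathrm{poly}(n)/2^n$ --- and such a difference, if strictly in favour of a competitor, persists for \emph{every} $w$, so $\rho$ would be optimal for no large $w$ at all. The paper therefore spends the second half of its proof on the tail: it shows that $\rho[3]=101$ charges $\lfloor\frac{n+1}{2}\rfloor$ live windows and the tail $(000,110,111)$ charges only $4$ more ($0^n$, $1^{n-3}0^3$, $1^{n-1}0$, $1^n$), and then checks every alternative choice of the rank-$3$ (and subsequent) $3$-mers --- e.g.\ $111$ charges $\Omega(n^2)$ windows, while $000,010,100,110$ each lead to at least $\lfloor\frac{n+1}{2}\rfloor+4$ charged windows --- so no order with the same head does better. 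Without this count your proposal proves only that every eventually optimal order has head $(011,001)$, not that the stated $\rho$ is eventually optimal; note also that the ``essential uniqueness'' claim needs the existence of an eventually optimal order with this head, which again rests on the tail analysis.
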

\begin{proof}
By Corollary~\ref{cor:cand1}, we take $\cand_1=\{(001),(011)\}$.
For the list $\cand_2$, we consider the extensions of the elements of $\cand_1$.
By Lemma~\ref{l:0001}, it suffices to consider the arrangements $\pi=(011,001)$ and $\pi'=(011,101)$; see Fig.~\ref{f:aut_k3} for the automata $\cA_{\pi,2}$ and $\cA_{\pi',2}$.
We observe that $\cA_{\pi,2}$ is flat, while $\cA_{\pi',2}$ is simple, with the major component on the vertices $0,01,10$.
Then $\alpha_{\pi,2}=1<\alpha_{\pi',2}\approx 1.4656<\alpha_{\pi',1}\approx 1.6180$.
By Lemma~\ref{l:PSformulas}(i), $C_{\cS_{\pi',2}}(n)=\Omega(\alpha_{\pi',2}^n)$.
We apply Lemma~\ref{l:comparison} to $\pi',\pi$ with $i=1,j=2,\varepsilon=\alpha_{\pi',2}-1$, obtaining $\pi\triangleleft\pi'$.
Accordingly, we take $\cand_2=\{\pi\}$.
In particular, every eventually optimal lexmin UHS order has the head $\pi$, so we proved essential uniqueness of such an order.
It remains to study the tail, which is done in the full proof in Appendix~\ref{sss:main}.
\end{proof}
\begin{figure}[!htb]
    \centering
    \includegraphics[scale=0.82, trim = 45 700 220 48, clip]{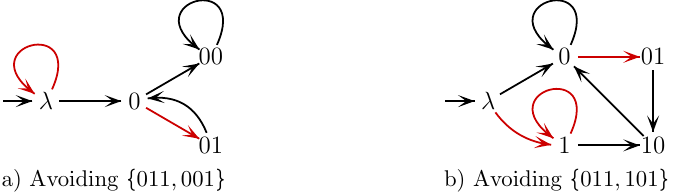}
    \caption{Canonical DFAs avoiding the sets $\{011,001\}$ and $\{011,101\}$ (Theorem~\ref{t:23w_opt}). Black (red) edges are labelled by 0 (resp., by 1).}
    \label{f:aut_k3}
\end{figure}

\begin{restatable}{theorem}{asymptotic} \label{t:24w_opt}
    The UHS order $\rho=(0011,0001,1100,0100,1110,1011,0000,0101,1111)\in\U_{2,4}$ is eventually optimal and essentially unique.
\end{restatable}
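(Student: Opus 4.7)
The plan is to iterate the refinement scheme used for Theorem~\ref{t:23w_opt}, lifted to $k=4$: starting from $\cand_1$ given by Corollary~\ref{cor:cand1}, I build $\cand_2,\cand_3,\ldots$ one $k$-mer at a time using the shortcut rules of Lemma~\ref{l:0001} and the $\triangleleft$-comparisons of Lemma~\ref{l:comparison}, until a unique (modulo the swap symmetry, quotiented out by the lexmin convention) head survives. That head must then coincide with a prefix of the claimed $\rho$; the trailing $k$-mers $0000$, $0101$, $1111$ are unary or $2$-periodic, so any ordering of them is asymptotically irrelevant and is fixed for definiteness.

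The lexmin aperiodic $4$-mers are $0001,0011,0111$, so $\cand_1=\{(0001),(0011),(0111)\}$. For $k=4$ one has $v=a011\in\{0011,1011\}$, so Lemma~\ref{l:0001} immediately eliminates every extension of $(0001)$ and restricts admissible extensions of $(0111)$ to second elements in $\{0011,1011\}$. Hence $\cand_2$ is contained in $\{(0111,0011),(0111,1011)\}\cup\{(0011,u):u\in\{0,1\}^4\setminus\{0011\}\}$. For each surviving $\pi$ I would then construct the canonical DFA $\cA_{\pi,i}$ exactly as in the proof of Theorem~\ref{t:23w_opt}, determine (or effectively compare, via the index-comparison procedure from Appendix~\ref{ss:proofs}) the dominant eigenvalue $\alpha_{\pi,i}$, and use Lemma~\ref{l:PSformulas} to extract the leading terms of $C_{\cS_{\pi,i}}(n)$ and $C_{\cP_{\pi,i+1}}(n)$, so that Lemma~\ref{l:comparison} can discard dominated arrangements.

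The delicate and central step is to establish that some extension of $(0011)$ asymptotically beats both $(0111,0011)$ and $(0111,1011)$, since the claimed $\rho$ starts with $0011$ rather than with $0111$. Unlike the $(0001)$-case, this does not follow from Lemma~\ref{l:0001}: it requires an explicit eigenvalue comparison between $\cA_{(0011),1}$ and the two $\cA_{(0111,v),2}$ automata, together with the leading-order estimate of $C_{\cS_{\cdot,i}}(n)$ from Lemma~\ref{l:PSformulas}(i), to produce an $\varepsilon>0$ gap in the sense required by Lemma~\ref{l:comparison}. Once this branching is resolved, the refinements for $i=3,\ldots,6$ proceed analogously, at each level retaining only those extensions minimizing the leading contribution $\A[\lambda,v]\cdot\alpha_{\pi,i}^{n-1}$ to $\ch(\pi)$; after six rounds the head $(0011,0001,1100,0100,1110,1011)$ should emerge as the unique lexmin survivor. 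One then verifies that its domain is a UHS (so $\alpha$ drops to $1$) and appends the tail in any UHS-preserving manner.

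The principal obstacle is twofold. First, the combinatorial scale is large: with $16$ candidate $k$-mers and six nontrivial refinement levels, there is a substantial bank of canonical DFAs to build and eigenvalues to compare, far beyond what was needed at $k=3$. Second, and more subtly, the $(0011)$-versus-$(0111,\cdot)$ split requires a genuine quantitative eigenvalue gap rather than the direct knockout that disposed of $(0001)$, so Lemma~\ref{l:PSformulas} has to be invoked in full strength (including the positivity of the relevant limit-matrix entries) to verify the hypothesis of Lemma~\ref{l:comparison}. I expect the bulk of the proof to be a careful enumeration of these per-level comparisons, with each individual step handled by the effective index-comparison routine referenced after Lemma~\ref{l:comparison}, and with essential uniqueness following immediately once the iteration terminates at a single lexmin candidate.
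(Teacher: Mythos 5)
Your overall strategy is the paper's: build $\cand_1,\cand_2,\ldots$ by Corollary~\ref{cor:cand1}, Lemma~\ref{l:0001}, Lemma~\ref{l:PSformulas} and Lemma~\ref{l:comparison} until a unique lexmin head survives, then attach a tail. But as written this is a plan, not a proof—none of the automata, growth rates or limit-matrix coefficients that every exclusion hinges on are actually computed—and the one step you single out as central is misdiagnosed. You propose to kill the $(0111,\cdot)$ branch at level 2 by ``an explicit eigenvalue comparison between $\cA_{(0011),1}$ and the two $\cA_{(0111,v),2}$ automata.'' That comparison cannot produce the $\Omega((\alpha+\varepsilon)^n)$ gap required by Lemma~\ref{l:comparison}: by Lemma~\ref{l:GuOd} all aperiodic first choices give identical counts at level 1, and at level 2 the branches are in fact exactly tied—in the paper $\pi_1=(0111,0011)$ and $\pi_3=(0011,0001)$ have the same minimal coefficient of $C_{\cP_{\cdot,2}}$ in $\alpha_1^n$, the same $\alpha_2\approx1.7221$, and equal leading coefficients of $C_{\cS_{\cdot,2}}$, so \emph{both} must be kept in $\cand_2$. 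The $(0111,\cdot)$ branch is eliminated only at the next iteration, by comparing the coefficients in $\alpha_2^n$ of $C_{\cP_{\pi,3}}(n)$ over extensions of both candidates (the minimum is attained only by $\pi_3\cdot(0100)$ and $\pi_3\cdot(1100)$). Relatedly, your per-level criterion ``retain the extensions minimizing $\A[\lambda,v]\cdot\alpha_{\pi,i}^{n-1}$'' uses only the suffix-charge term of Lemma~\ref{l:PSformulas}(i); the paper's comparisons are driven first by the prefix-charge coefficients of Lemma~\ref{l:PSformulas}(ii), then by the new growth rate $\alpha_{\pi,i}$, and only then by suffix coefficients, and each exclusion must be checked against the hypothesis of Lemma~\ref{l:comparison} (the discarded arrangement's surplus must outgrow the uncounted remainder $\cW_\rho-\cW_{\rho,i}$).

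The second gap is the tail. You declare the ordering and choice of $0000,0101,1111$ ``asymptotically irrelevant'' and say one may append the tail ``in any UHS-preserving manner.'' Eventual optimality means exact optimality for infinitely many $w$, so constant-size contributions do matter: once the head $\pi_{12}=(0011,0001,1100,0100,1110,1011)$ is fixed there remain ten live windows (in three isolated groups), and different UHS-completing tails charge different numbers of them. The paper's Iteration~7 shows that the chosen tail charges exactly three—$0^n$, $1^n$, and one alternating window—which is unavoidable for any order, and this minimality argument is what makes $\rho$ optimal rather than merely ``optimal up to $O(1)$ windows.'' Without it (and without the actual eigenvalue/coefficient computations for iterations 2--6, where rule $(\star)$ is also used to handle $\{1000,0000\}$ and the worst choice $1001$), the claimed theorem is not established.
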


\begin{restatable}{theorem}{asymptoticfive} \label{t:25w_opt}
    The UHS order $\rho=(01011, 00101, 10101, 00010, 11010, 10010, 11001, 11100$, $11110, 00111, 10111, 00001, 01100, 11011, 00000, 11111)\in\U_{2,5}$ is eventually optimal and essentially unique.
\end{restatable}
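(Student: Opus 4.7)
The plan is to extend the candidate-elimination strategy from the proof of Theorem~\ref{t:23w_opt} (and the analogous proof of Theorem~\ref{t:24w_opt}) to the case $\sigma=2$, $k=5$, iteratively building the candidate lists $\cand_1,\cand_2,\ldots$ until I isolate the unique, up to renaming the letters, good head of an eventually optimal lexmin UHS order of $\Sigma^5$.

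First, I would initialize $\cand_1$ via Corollary~\ref{cor:cand1}: enumerating the aperiodic binary $5$-mers and keeping lexmin representatives yields exactly six singletons, $\cand_1=\{(00001),(00011),(00101),(00111),(01011),(01111)\}$. Next, Lemma~\ref{l:0001} disposes of the two exceptional candidates: every arrangement extending $(00001)=(0^{k-1}1)$ is beaten by $(01111,a0111)$ for $a\in\{0,1\}$, and among extensions of $(01111)=(01^{k-1})$ only $(01111,00111)$ and $(01111,10111)$ remain viable. This prunes the entire subtree below $(00001)$ and most of the subtree below $(01111)$, bringing the remaining work down to comparing extensions rooted at the four aperiodic, non-exceptional prefixes, plus the two surviving branches below $(01111)$.

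The bulk of the argument then consists of growing $\cand_{i+1}$ from $\cand_i$ level by level. For each $\pi\in\cand_i$ and each lexmin extension $\pi'=\pi\cdot(u)$, I would build the canonical DFA $\cA_{\pi',i+1}$, determine whether it is simple, flat, or more complex, and compute its dominant eigenvalue $\alpha_{\pi',i+1}$ together with the second-largest eigenvalue modulus. Lemma~\ref{l:PSformulas} then yields leading asymptotics for $C_{\cW_{\pi',i+1}}(n)$, and pairwise applications of Lemma~\ref{l:comparison} — together with the symmetry restriction to lexmin arrangements and the auxiliary rules of Appendix~\ref{sss:search} — let me discard all but one arrangement at each level. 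I expect that exactly one lexmin extension survives at every step, which simultaneously singles out the stated prefix $(01011,00101,10101,00010,11010,\ldots)$ as the only good head and establishes essential uniqueness. Once the growth rate drops to $1$, the remaining $k$-mers can be appended in the stated tail order to reach a UHS order; by Lemma~\ref{l:PSformulas} they contribute only polynomially many new charged windows and therefore do not affect eventual optimality.

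The main obstacle is the sheer size of the search tree. With $32$ possible $5$-mers and a head plausibly of length around ten, a naive enumeration explodes, and because the density gaps at late levels are small, the pairwise comparisons cannot be carried out with floating-point approximations; I would rely on the exact comparison of spectral radii described in Remark~\ref{r:compare_indices}. In practice the proof therefore reduces to a computer-assisted verification at each level, with the appendix recording only the pivotal pairwise comparisons — in particular the ones forcing $\rho[1]=01011$ over the other four viable first $k$-mers — together with a confirmation that at every subsequent level no competing lexmin extension of the current head survives Lemma~\ref{l:comparison}.
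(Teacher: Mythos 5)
Your plan follows the paper's proof of this theorem almost exactly: the same $\cand_1$ of six aperiodic lexmin $5$-mers from Corollary~\ref{cor:cand1}, pruning the exceptional branches via Lemma~\ref{l:0001}, level-by-level elimination using the leading coefficients and growth rates from Lemma~\ref{l:PSformulas} fed into Lemma~\ref{l:comparison}, the rule $(\star)$, and exact spectral-radius comparison per Remark~\ref{r:compare_indices}. Two quantitative expectations are slightly off but harmless: the head has length $13$ (with $\alpha_{13}=1$), not about ten, and the surviving candidate set is not always a singleton --- the paper carries two candidates at levels $4$, $7$ and $8$ and breaks the ties only at the next level using the suffix-charged coefficients, which your framework of lists $\cand_i$ accommodates anyway.

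There is, however, one genuine flaw in your last step. You claim the tail $k$-mers ``contribute only polynomially many new charged windows and therefore do not affect eventual optimality.'' That inference is wrong: two orders with the same head but different tails differ in $\ch(\cdot)$ by a \emph{constant} (or polynomial) amount, and any such positive difference makes the worse order strictly denser for \emph{every} $w$, hence not optimal for any large $w$. So to conclude that the specific $\rho$ in the statement is eventually optimal, you must still verify that its tail charges the minimum possible number of the remaining live windows. The paper does this in its final iteration: once $\alpha_{13}=1$ the live windows form a constant-size set containing $0^n$, $1^n$ and the rotations of $(011)^{n/3}$ (up to bounded prefix perturbations); any order must charge $0^n$, $1^n$ and at least one of the period-$3$ windows, and the stated tail charges exactly these three, hence is optimal. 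Without this (short) argument your proof establishes uniqueness of the good head but not that the displayed order itself is eventually optimal.
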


\begin{restatable}{theorem}{asymptoticdna} \label{t:42w_opt}
    The UHS order $\rho=(01, 20, 30, 10, 21, 32, 12, 31, 13, 33, 22, 02, 00, 11)\in\U_{4,2}$ is eventually optimal and essentially unique.
\end{restatable}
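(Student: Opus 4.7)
The plan is to follow the same iterative candidate-pruning strategy used in Theorems \ref{t:23w_opt}--\ref{t:25w_opt}, now exploiting the full alphabet-symmetry group $S_4$ (of order 24) to keep the case analysis tractable. By Corollary \ref{cor:cand1}, the first element of any eventually optimal lexmin UHS order must be a periodless 2-mer; over $\Sigma=\{0,1,2,3\}$ these are precisely the strings $ab$ with $a\neq b$, and they form a single $S_4$-orbit with lexmin representative $01$. Hence I start with $\cand_1=\{(01)\}$.

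At each subsequent step $i+1$, I extend every $\pi\in\cand_i$ by a 2-mer $u\notin\mathsf{dom}(\pi)$, taking one $u$ per orbit of the stabilizer $\mathsf{Stab}(\pi)\le S_4$ acting on $\Sigma^2\setminus \mathsf{dom}(\pi)$. For $\pi=(01)$ this stabilizer is generated by the transposition $(2\,3)$, so the 15 remaining 2-mers split into 9 orbits, giving only 9 candidate extensions. For each candidate $\pi'$, I build the canonical DFA $\cA_{\pi',i+1}$ following Section \ref{ss:regcharged}, identify its major component (if any), and compare its index $\alpha_{\pi',i+1}$ against those of the other surviving candidates via the exact procedure of Remark \ref{r:compare_indices}. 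Lemma \ref{l:comparison}, applied with suitable $i,j,\varepsilon$, then discards any $\pi'$ whose dominant term in $C_{\cW_{\pi'}}(n)$ strictly exceeds that of a competitor. Ties in index are broken either by comparing leading coefficients through Lemma \ref{l:PSformulas}, or by carrying the recursion one more level, or by recognising a genuine combinatorial symmetry between the candidates.

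I iterate until the DFA for the unique surviving candidate becomes flat; this marks the end of the head $\pi$ of the eventually optimal order, at which point $\alpha_{\pi,|\pi|}=1$ and any completion of $\pi$ to a UHS order contributes only polynomially many additional charged windows. The 14-element arrangement stated in the theorem realises such a completion: its prefix up to where the index first equals $1$ is forced uniquely by the pruning above (essential uniqueness), while the remaining entries fill in the 2-mers needed to make the domain a UHS. The two 2-mers $03$ and $32$ that are omitted from $\rho$ (equivalently, all 14 listed 2-mers already form a UHS for every $w\ge 2$) are not needed: any window of length at least three avoiding the 14 would consist of overlapping 2-mers $ab,bc\in\{03,23\}$, but $ab\in\{03,23\}$ forces $b=3$, whereas no element of $\{03,23\}$ begins with $3$.

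The main obstacle is bookkeeping rather than conceptual novelty. With $|\Sigma^2|=16$, the naive search space is $16!$, but the $S_4$-orbit reductions and the aggressive application of Lemma \ref{l:comparison} collapse each step to a small handful of cases. Since the DFAs can reach $16$ vertices, closed-form spectra are not available; I rely on numerical eigenvalue computation together with the exact integer verification of Remark \ref{r:compare_indices}. The most delicate moments are those where two candidate extensions produce DFAs of equal dominant eigenvalue: proving $\triangleleft$ there requires either the finer spectral analysis of Lemma \ref{l:PSformulas}(i) (examining the coefficient $\A[\lambda,v]$) or an explicit extension to the next level to break the tie.
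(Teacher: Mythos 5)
Your strategy is the same as the paper's (iterative candidate pruning starting from $\cand_1=\{(01)\}$ via Corollary~\ref{cor:cand1}, symmetry reduction, index comparisons through the canonical DFAs, and tie-breaking via Lemma~\ref{l:PSformulas} and Lemma~\ref{l:comparison}), but as written it is a plan rather than a proof: all the substantive content of the theorem lives in the computations you defer. You never exhibit the successive candidate sets, the actual growth rates, or the coefficient comparisons, and you simply assert that ``the prefix up to where the index first equals $1$ is forced uniquely by the pruning above.'' In the actual execution this is not a formality: several iterations produce genuine ties (three candidates survive after choosing the fifth element, two after the sixth), which are resolved only by comparing leading coefficients of suffix-charged counts and by carrying the comparison one level further; the resulting head is $(01,20,30,10,21,32,12,31)$, and establishing both its optimality and its uniqueness up to renaming letters is exactly the work the theorem requires. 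Without these computations the claim of essential uniqueness is unsupported.

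A second, more structural gap is the endgame. Your tie-breaking toolkit (index comparison, Lemma~\ref{l:PSformulas}, Lemma~\ref{l:comparison}) presupposes simple digraphs with dominant eigenvalue $>1$; but the decisive final comparisons happen precisely when the growth rate has dropped to $1$. In the paper, the last head element is chosen by comparing \emph{polynomial} growth orders of suffix-charged windows ($\Theta(n^2)$ versus $\Theta(n^3)$), and the tail is then justified by explicitly enumerating the live windows (of the forms $0^i2^j3^\ell$ and $1^i3^\ell$) and showing the chosen tail charges only the unavoidable ones. Your proposal stops at ``the remaining entries fill in the 2-mers needed to make the domain a UHS,'' which does not show the tail is optimal, and your exponential-comparison machinery does not apply in this flat regime. (Also, the omitted $2$-mers are $03$ and $23$, not $03$ and $32$; your avoidance argument is correct for $\{03,23\}$, so this is only a slip, but the UHS property of the domain is anyway the easy part.)
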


\subsection{On density lower bounds}
\label{ss:lower}

Given $\rho\in\U_{\sigma,k}$, we can approximate its set of charged windows $\cW_\rho$ by the subsets $\cW_{\rho,i}$, $i=0,\ldots,|\rho|-1$. 
The set $\cW_{\rho,0}$ consists of $\sigma^w$ windows with the prefix $\rho[1]$, implying the trivial lower bound $d_{(\rho,w)}\ge \frac{1}{\sigma^k}$.
Note that $g(\cW_\rho-\cW_{\rho,0})=\alpha_{\rho,1}$ due to Lemma~\ref{l:PSformulas}.
Let $i$ be such that $\alpha_{\rho,1}=\cdots=\alpha_{\rho,i}>\alpha_{\rho,i+1}$.
We 
denote $B(\rho,n)=C_{\cW_{\rho,i}}(n)$.
Since $\frac{B(\rho,w+k)}{\sigma^{w+k}}\le \frac{C_{\cW_{\rho}}(w+k)}{\sigma^{w+k}}=d_{\rho,w}$, the function 
\begin{equation} \label{e:lb}
\beta_1(\sigma,k,w)=\min_{\rho\in\U_{\sigma,k}}\tfrac{B(\rho,w+k)}{\sigma^{w+k}}
\end{equation}
is a density lower bound.
For small $\sigma,k$, computing $\beta_1(\sigma,k,w)$ is feasible, so we compared this bound to the minimum density computed by Theorem~\ref{t:exhaustive} (see Fig.~\ref{f:plots}).
The plots show that $\beta_1(\sigma,k,w)$ is very close to the minimum density for big $w$.
Thus, obtaining a formula for $\beta_1(\sigma,k,w)$ seems an important task.
We partially solve it with the following theorem.

\begin{restatable}{theorem}{lbound} \label{t:lower}
    Let $\sigma, k\ge2$ be fixed, $\sigma+k>4$. 
    There exists $w_0$ such that for all $w\ge w_0$ one has $\beta_1(\sigma,k,w)=\frac{1}{\sigma^k}+\frac{c_{\sigma,k}\alpha_{\sigma,k}^{w+k}+O(1)}{\sigma^{w+k}}$,
    where $\alpha_{\sigma,k}$ is the largest root of the polynomial $r^k-\sigma r^{k-1}+1$ and $c_{\sigma,k}$ is a positive constant.
    If $\sigma>2$, $O(1)$ can be replaced by $o(1)$.
\end{restatable}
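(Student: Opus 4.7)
My plan is to identify the $\rho\in\U_{\sigma,k}$ that minimises $B(\rho,w+k)$ for large $w$ and expand it via Lemma~\ref{l:PSformulas}. The equality $C_{\cP_{\rho,1}}(w+k)=\sigma^w$ accounts for the $1/\sigma^k$ term in $\beta_1(\sigma,k,w)$; the residual $B(\rho,w+k)-\sigma^w$ must be compared against $\alpha_{\sigma,k}^{w+k}$. To rule out periodic $\rho[1]$, I note that Lemma~\ref{l:GuOd} gives $\alpha_{\rho,1}>\alpha_{\sigma,k}$ in that case, and picking any $\rho[2]$ with $\alpha_{\rho,2}<\alpha_{\rho,1}$ makes Lemma~\ref{l:PSformulas}(i) applicable and yields $C_{\cS_{\rho,1}}(w+k)=\Omega(\alpha_{\rho,1}^{w+k})$, exponentially larger than the target. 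Hence asymptotic minimisers have $\rho[1]$ period-free with $\alpha_{\rho,1}=\alpha_{\sigma,k}$.

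With $\rho[1]$ period-free (the hypothesis $\sigma+k>4$ excludes only the degenerate $(\sigma,k)=(2,2)$), $\cA_{\rho,1}$ is simple with major component of index $\alpha_{\sigma,k}$. I then choose $\rho[2]$ so that $\alpha_{\rho,2}<\alpha_{\rho,1}$, which forces $i=1$ in the definition of $B$; any larger $i$ only adds further summands $C_{\cS_{\rho,j}}$, $C_{\cP_{\rho,j+1}}$ of growth rate $\alpha_{\sigma,k}^{w+k}$, strictly enlarging $B$. Applying Lemma~\ref{l:PSformulas} to the two remaining summands gives, with $\rho[1]=va$,
\begin{equation*}
B(\rho,w+k) \;=\; \sigma^w + \Bigl(\tfrac{\A[\lambda,v]}{\alpha_{\sigma,k}}+\tfrac{\sum_{x}\A[\lambda.\rho[2],x]}{\alpha_{\sigma,k}^{k}}\Bigr)\,\alpha_{\sigma,k}^{w+k} + \tilde O(|\gamma|^{w+k}),
\end{equation*}
where $\A$ is the limit matrix of $\cA_{\rho,1}$ and $\gamma$ its second-largest eigenvalue. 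For non-exceptional $\rho[1]$ both fractions are strictly positive by Lemma~\ref{l:PSformulas}; for the $\sigma=2$ exceptional $k$-mers $0^{k-1}1, 01^{k-1}$, $\A[\lambda,v]=0$ because $v$ sits outside the major component of $\cA_{\rho,1}$, but one can still place $\lambda.\rho[2]$ inside the major component so that the other fraction is positive. Defining $c_{\sigma,k}$ as the minimum of the parenthesised expression over the finite set of admissible $(\rho[1],\rho[2])$ yields a positive constant and the claimed leading term.

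It remains to bound the error $\tilde O(|\gamma|^{w+k})$. The characteristic polynomial of $\cA_{\rho,1}$, up to factors $r$ or $r-1$ arising from non-reachable or trivial strong components, divides $r^k-\sigma r^{k-1}+1$, so it suffices to bound the non-Perron roots of this polynomial. Evaluating $f(1)=2-\sigma$ (which vanishes exactly when $\sigma=2$) together with a short estimate using $\prod_i r_i=\pm 1$ gives $|\gamma|\le 1$ always and $|\gamma|<1$ when $\sigma\ge 3$, so $\tilde O(|\gamma|^{w+k})=O(1)$ in general and $o(1)$ for $\sigma>2$, matching the theorem. The principal obstacles I foresee are: (i) verifying uniformly over $\rho$ that the DFA $\cA_{\rho,1}$ permits Lemma~\ref{l:PSformulas}, with the $\sigma=2$ exceptional cases deserving particular care since the entry $\A[\lambda,v]$ vanishes and the actual minimiser may live there; (ii) formalising the root bound on $r^k-\sigma r^{k-1}+1$ that underpins the $O(1)$ versus $o(1)$ dichotomy.
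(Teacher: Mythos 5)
Your overall strategy coincides with the paper's (reduce to a period-free $\rho[1]$ via Lemma~\ref{l:GuOd}, truncate $B$ at $i=1$, expand the two remaining summands with Lemma~\ref{l:PSformulas}, identify $\alpha_{\sigma,k}$ as the dominant root of $r^k-\sigma r^{k-1}+1$, and control the error through the other roots), but two steps do not go through as written. First, your treatment of the $\sigma=2$ exceptional $k$-mers is based on a false premise: for $\rho[1]\in\{0^{k-1}1,01^{k-1}\}$ the entry $\A[\lambda,v]$ is \emph{not} zero (the paper computes it as $\mu\alpha/(\alpha-1)>0$; by Corollary~\ref{cor:cand1}(ii) the suffix count $C_{\cS_{\rho,1}}$ is the same for every period-free $\rho[1]$). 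The genuine difficulty there is the prefix term: when $\rho[1]=0^{k-1}1$ and $\lambda.\rho[2]=0^{k-1}$ (i.e.\ $\rho[2]\in\{0^k,10^{k-1}\}$), the coefficient of $C_{\cP_{\rho,2}}$ vanishes, $\alpha_{\rho,2}=\alpha_{\rho,1}$, and hence $i\ge 2$ in the definition of $B(\rho,n)$. Your remark that ``any larger $i$ only adds further summands, strictly enlarging $B$'' compares $B(\rho)$ with its own truncation, not with the admissible minimum: such a $\rho$ ties the admissible minimum at leading order (the coefficient of $C_{\cS_{\rho,2}}$ equals the minimal prefix coefficient $\mu/\alpha$), and its lower-order terms are walk counts in $\cA_{\rho,2},\cA_{\rho,3},\ldots$, whose spectra you have not controlled — so the claimed $O(1)$ error for $\sigma=2$ is not secured. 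The paper closes exactly this case by discarding these orders via Lemma~\ref{l:0001}, whose proof shows that any UHS order with such a prefix must later contribute another prefix term with a strictly positive coefficient in $\alpha^n$.

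Second, the assertion that the characteristic polynomial of $\cA_{\rho,1}$ ``up to factors $r$ or $r-1$ divides $r^k-\sigma r^{k-1}+1$'' is precisely what has to be proved, and your sketch does not prove it. The paper's route is: (a) localize the non-Perron roots of $\chi(r)=r^k-\sigma r^{k-1}+1$ by Rouch\'e's theorem (all of modulus $<1$ for $\sigma>2$; for $\sigma=2$ one simple root at $1$ and the rest of modulus $<1$); (b) deduce that $\chi$ (resp.\ $\bar\chi(r)=\chi(r)/(r-1)$) is the \emph{unique} monic integer polynomial of its degree with root $\alpha$, whence the characteristic polynomial of $\cA_{\bar\rho,1}$ is $\chi(r)$, or for $\sigma=2$ equals $\bar\chi(r)(r-t)$ with $t\in\{-1,0,1\}$ — note the possible factor $r+1$, which your ``$r$ or $r-1$'' misses (harmless for $O(1)$, but your divisibility statement is not literally what holds). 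Your proposed justification of the root bound — the product of the roots being $\pm1$ together with evaluating at $r=1$ — is insufficient: it only constrains the product of the moduli and permits, a priori, a second root of modulus larger than $1$ compensated by small ones. Without (a) and (b) neither the $O(1)$/$o(1)$ dichotomy nor the reduction of the error to roots of $\chi$ is established, so these parts need the Rouch\'e argument (or an equivalent localization) rather than the shortcut you indicate.
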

\begin{proof}
    (Sketch; see Appendix~\ref{ss:lboundproof} for the full proof.)
    We first observe that $\alpha_{\rho,2}<\alpha_{\rho,1}$ for all $\rho$ except for the case where $\rho[1]=0^{k-1}1$ and $\sigma=2$.
    But in this case the minimum in \eqref{e:lb} cannot be reached on $\rho$ because of Lemma~\ref{l:0001}, so we can assume  $B(\rho,n)=\sigma^w+C_{\cS_{\rho,1}}(n)+C_{\cP_{\rho,2}}(n)$.

    Let $\alpha=\min_{\rho\in\U_{\sigma,k}} \alpha_{\rho,1}$.
    Starting from some $w=w_0$, the minimum in \eqref{e:lb} is reached on some $\rho$ with $\alpha_{\rho,1}=\alpha$. 
    We use Lemma~\ref{l:GuOd} and take a particular $k$-mer $u$ to find the characteristic polynomial $\chi(r)=r^k-\sigma r^{k-1}+1$ of the DFA $\cA_u$, with the root $\alpha$.
    We then prove, by Rouch\'e's theorem, that all other zeroes of $\chi(r)$ have absolute values $<1$ (plus a simple zero $r=1$ in the case $\sigma=2$).
    From that, we derive that if the minimum in \eqref{e:lb} for some $w\ge w_0$ is reached on $\bar\rho$, then $\cA_{\bar\rho,1}$ has the characteristic polynomial $\chi(r)$ (or its variation $\bar\chi(r)=\frac{r-c}{r-1}\chi(r)$, where $c\in\{-1,0,1\}$, if $\sigma=2$).
    Then by Lemma~\ref{l:regular} both $C_{\cS_{\bar\rho,1}}(n)$ and $C_{\cP_{\bar\rho,2}}(n)$ satisfy the linear recurrence with the characteristic polynomial $\chi(r)$ or $\bar\chi(r)$.
    Due to the location of the roots of $\chi(r)$, the solutions of this recurrence look like the numerator in the statement of the theorem.
    Finally, Lemma~\ref{l:PSformulas}(i) guarantees that $c_{\sigma,k}>0$.
\end{proof}

\section{Results of Computer Search} \label{s:experiments}

We implemented the search algorithm of Theorem~\ref{t:exhaustive}, with all additional tricks described in Section~\ref{s:exhaust}, and ran it for small $(\sigma,k)$ pairs over big ranges of $w$.
(See Appendix~\ref{ss:scripts} for the scripts.)
Our goal was to get a complete picture of the minimum density and the evolution of minimum density minimizers for the $(\sigma,k)$ pairs that are feasible for our search algorithm.
One of the important points was to find, for each $(\sigma,k)$ pair, the smallest window size $w_\infty(\sigma,k)$ for which the eventually optimal order for $(\sigma,k)$ reaches the minimum density. 
The case $\sigma=k=2$ needs no search, as $w_\infty(2,2)=2$  by Theorem~\ref{t:22w_opt}.
The case $\sigma=2,k=3$ can be  brute-forced with no optimization, as the search space is small.
Here, one UHS order is optimal for all $w\in[2..8]$, and the UHS order from Theorem~\ref{t:23w_opt} is optimal since $w_\infty(2,3)=8$.
Respectively, we focused on the nontrivial $(\sigma,k)$ pairs: $(2,4), (2,5)$, and $(4,2)$.

1. In each case, we computed the minimum density of a minimizer for all $w\in[2..3\sigma^k]$ and compared it to the average density and to the lower bounds (Fig.~\ref{f:plots}).
The plots show \emph{density factors}, which are densities normalized by the window size: $df_{(\rho,w)}=(w+1)d_{(\rho,w)}$.
We point out three features:
\begin{itemize}
    \item plots look similar; note the same vertical range (in absolute numbers) and the same horizontal range (in multiples of $\sigma^k$) of all plots;
    \item the gap between minimal and average density becomes smaller on the increase of $k$ or $\sigma$;
    \item the lower bound \eqref{e:lb} is very good for big $w$.
\end{itemize}
\begin{figure}[!htb]
    \centering
    \includegraphics[scale=0.34]{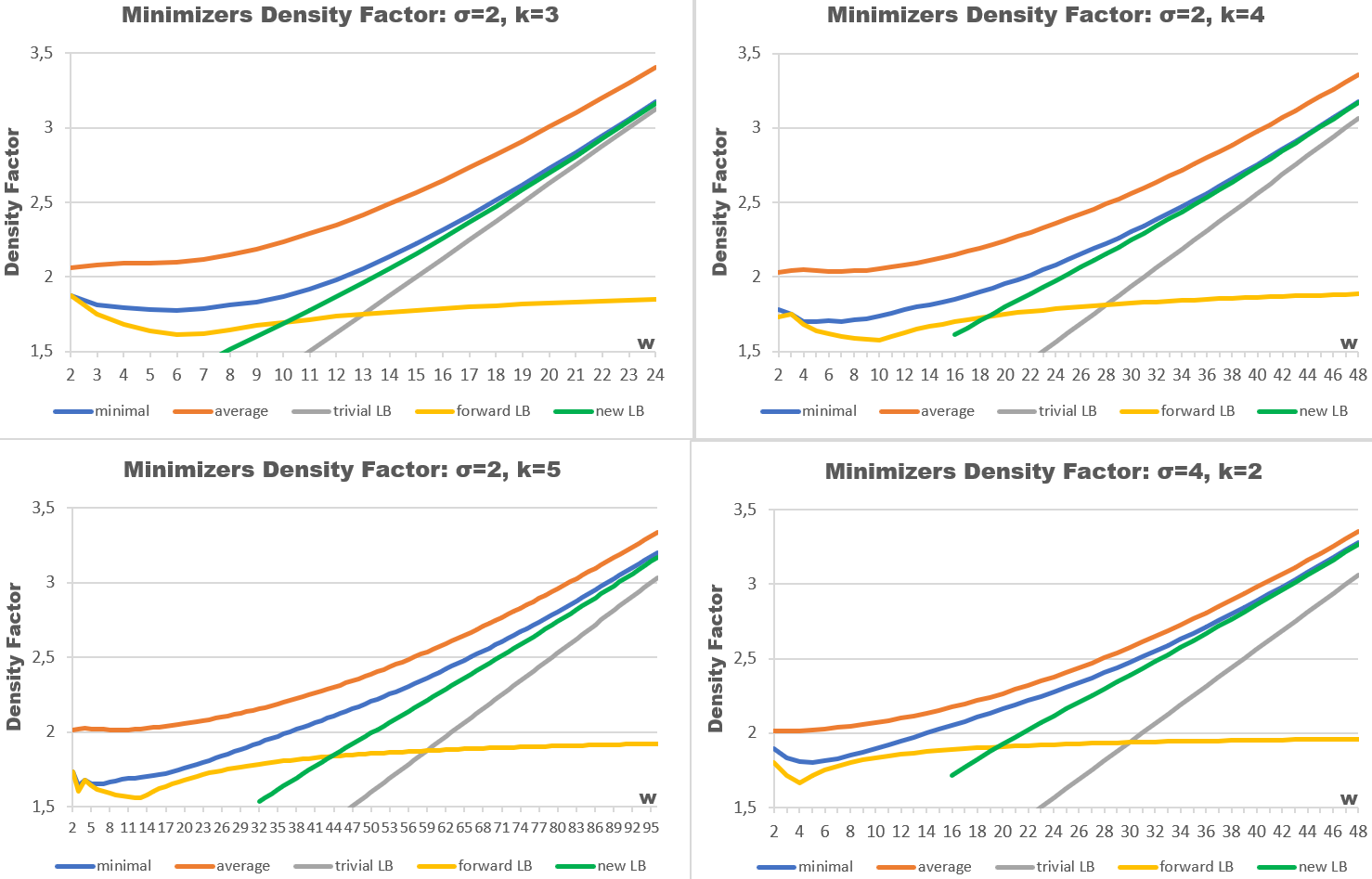}
    \caption{Density factors of minimizers over big ranges of $w$. The minimum (blue) and average (orange) densities are compared to lower bounds: trivial $\frac{1}{\sigma^k}$ (grey), forward schemes bound \cite{KKMLT24lower} (yellow), and the bound $\beta_1$ (see \eqref{e:lb}; green). All values are normalized by the factor of $w+1$.}
    \label{f:plots}
\end{figure}

2. The search described above revealed $w_\infty(4,2)=25$.
For the cases $\sigma=2,k=4,5$, we extended the computations to larger window sizes and found $w_\infty(2,4)=56$, $w_\infty(2,5)=262$.
From \cite[Prop.\,3]{GoSh25} we know that the \textbf{average} density is $\frac{1+o(1)}{\sigma^k}$, i.e., close to the trivial lower bound, whenever $w=\frac{\sigma}{\sigma-1}N(\ln N+h(N))$, where $N=\sigma^k$ and $h$ is any infinitely growing function.
We observe that $w_\infty(2,5)>2\cdot2^5\cdot \ln 2^5\approx 222$.
If this is not a unique case but a general rule, then the eventually optimal orders have purely theoretical interest.

3. The optimal orders found for all studied $(\sigma,k)$ pairs are listed with their densities in Appendix~\ref{ss:listoptimal}. 
The evolution of optimal UHS orders with the growth of $w$ can be summarized as follows.
The range $[2..w_\infty(4,2){-}1]$ is partitioned into 7 smaller ranges of sizes from 1 to 6 with different optimal UHS orders for $\sigma=4,k=2$; all these orders have different lexmin heads.
For $\sigma=2,k=4$, there are 5 such ranges: $[2], [3..6],[6..12],[13..26]$, and $[27..55]$.
Note that the range size grows as $w$ increases and two ranges share an endpoint. 
Within the range $[6..12]$, there is an oscillation with period 2: for the same head, one tail gives the minimal density for odd $w$, and a different one for even $w$.
Finally, for $\sigma=2,k=5$, we identified 18 ranges with different lexmin heads (if for some $w$ optimal orders with different lexmin heads exist, we chose the one decreasing the number of ranges).
The largest range is $[136..223]$.
Inside the ranges, several tail effects were spotted.
There are oscillations (periods from 2 to 6 over 2 to 4 different tails), a range split (two tails give minimum densities over two subranges of the range $[12..17]$), and an irregular behaviour: there is a weird example of a UHS order that is optimal for $w=3,4,9$, but not for intermediate values.
The overall conclusion is that studying heads of orders is important to understand the structure of optimal orders for different $(\sigma,k,w)$ tuples, while the tail effects are marginal.

4. Somewhat counterintuitive, the most resource-consuming search cases are those of small $w$.
For big $w$, the use of a good upper bound (see Section~\ref{s:exhaust}) drastically shrinks the set of processed subsets of $\Sigma^k$, saving both time and space.
In the resource-critical case $\sigma=2,k=5$, the window size $w=2$, and only it, appeared infeasible for a laptop with 8 Gb RAM.
However, in this case some orders constructed by the greedy algorithm of \cite{GTKOS25greedymini} are provably optimal as they hit the lower bound of \cite{KKMLT24lower}, so we just took one of them.

\section{Discussion an Future Work}

In this paper, we made a step to the systematic study of minimum density of minimizers and optimal orders reaching this density.
For the first time, several infinite-size cases are covered: for several $(\sigma,k)$ pairs we presented minimum densities and optimal orders for all $w$.
Due to a novel approach to minimizers through regular languages with finite antidictionaries, we were able to ``reach the infinity'' in the window size.

There is a plenty of natural questions for further study.
First, does there exist an algorithm computing the minimum density in single-exponential time? Can any hardness result be proved for this problem?
Second, what are the lower and upper bounds on the minimum window size $w_\infty(\sigma,k)$, where the optimal order is eventually optimal?
Third, which $(\sigma,k)$ pairs are feasible for the computer search of the eventually optimal order? Are there cases where the asymptotic optimality cannot be proved (i.e., Lemma~\ref{l:comparison} is insufficient to compare orders)?
Fourth, can the regular languages approach be used to efficiently obtain low-density minimizers for practically important $(k,w)$ pairs?
Fifth, can Theorem~\ref{t:lower} on the lower bound be strengthened by providing an explicit value $w_0$ and an asymptotic estimate to the constant $c_{\sigma,k}$?

\bibliography{bib}

\appendix

\section{Omitted Proofs and Details} \label{ss:proofs}

\subsection{To Section~\ref{s:exhaust}}

\densitylemma*
\begin{proof}
    
Two different approaches described below result in the algorithms counting the windows charged by $\pi$ due to the $k$-mer $u=\pi[i]$ in time $O(\sigma^w)$ and $O(w\sigma^k)$, respectively.
Note that 
\begin{itemize}
    \item the windows prefix-charged by $\pi$ due to $u$ are exactly those having the $k$-prefix $u$ and all other $k$-mers of rank at least $i$;
    \item  the windows suffix-charged by $\pi$ due to $u$ are exactly those having the $k$-suffix $u$ and all other $k$-mers of rank at least $i{+}1$.
\end{itemize}

\noindent\textbf{DFS Algorithm.}
Consider the trie $\cT$ of all $(w{+}k)$-windows with the $k$-prefix $u$: it consists of the path from the root to $u$ and a complete $\sigma$-ary tree of depth $w$, rooted at the vertex $u$. 
Each leaf of $\cT$ corresponds to a window; to count prefix-charged windows with the prefix $u$, we run a recursive DFS on the complete subtree of $\cT$.
Visiting a vertex $v$, we check the rank $r$ of its $k$-suffix, continuing the search if $r\ge i$ and skipping the subtree of $v$ if $r<i$ (as all leaves in this subtree are not prefix-charged).
If $v$ is a leaf and $r\ge i$, then $v$ is prefix-charged due to $u$, so we count it.
By the end of the search, we get the count of windows prefix-charged due to $u$.
For the suffix-charged windows, we consider the trie of reversals of all windows with the suffix $u$ and process it in a similar way.
The only difference is that the condition to skip the subtree becomes $r\le i$ instead of $r<i$.

Given the $k$-suffix of a vertex $v$, the $k$-suffixes of its children can be computed in $O(1)$ arithmetic operations (as $\sigma$ is a constant).
Then the total number of operations in each of two DFSs is proportional to the size of the tree, i.e., to $\sigma^w$.

\noindent\textbf{DP Algorithm}.
Recall that \emph{order-$k$ deBrujin graph over $\Sigma$} is a directed $\sigma$-regular graph, having all $\sigma$-ary $k$-mers as nodes and all pairs $(au,ub)$, where $u\in\Sigma^{k-1}$, $a,b\in\Sigma$, as edges.
If an edge $(au,ub)$ is labeled by $b$, the graph becomes a deterministic finite automaton $\cB$.
We view $\cB$ as a \emph{transition table} with rows indexed by $k$-mers and columns indexed by letters; the entry $\cB[u,a]$ contains the successor of $u$ by the letter $a$.

We complete the arrangement $\pi$ to a linear order $\rho$, assigning the remaining ranks lexicographically.
Then we replace all elements and all row indices in $\cB$ with their $\rho$-ranks, and sort the rows. 
The resulting table is referred to as $\cB_\rho$.

To count $(w+k)$-windows charged due to $u=\pi[i]$, we proceed by dynamic programming.
Let $\pref$ be a two-dimensional table such that $\pref[r,j]$ is the number of strings of length $k+j$ having the $k$-prefix $u$, the $k$-suffix of rank $r$, and all other $k$-mers of rank at least $i$.
Then the number of $(w+k)$-windows prefix-charged due to $u$ is $\sum_{r \ge i} \pref[r,w]$.
Similarly, let $\suff$ be a two-dimensional table such that $\suff[r,j]$ is the number of strings of length $k+j$ having the $k$-prefix of rank $r$, the $k$-suffix $u$ and all other $k$-mers of rank greater than $i$.
Then the number of $(w+k)$-windows suffix-charged due to $u$ is $\suff[i,w]$.

Note that $\pref[r,0]=[r=i]$, $\suff[r,0]=1$, and the DP rules for $\pref$ and $\suff$ are almost the same: $\pref[r,j+1]=\sum_\ell \pref[\ell,j]$, $\suff[r,j+1]=\sum_\ell \suff[\ell,j]$, where the summation for $\pref$ (resp., $\suff$) is over all ranks $\ell\ge i$ (resp., $\ell>i$) such that the $k$-mer of rank $r$ is a successor (resp., a predecessor) of the $k$-mer of rank $\ell$.
Both rules can be easily computed using the transition table $\cB_\rho$ to propagate the counts from the $j$'th column to the $(j+1)$th column along the edges of the deBrujin graph.

The computation requires $O(\sigma^k)$ time to build $\cB_\rho$ and $O(\sigma^k)$ time to compute one column of the $\pref$ and $\suff$ tables, to the total of $O(w\sigma^k)$.
\end{proof}

\average*

\begin{proof}
    By definitions, $\cR_\sigma(k,w)=\frac{1}{\sigma^{k+w}(\sigma^k)!}\sum_\rho \ch(\rho)=\frac{1}{\sigma^{k+w}(\sigma^k)!}\sum_\rho\sum_{t=1}^{\sigma^k} \ch(\rho,t)$, where $\rho$ runs over all orders on $\Sigma^k$.
    We swap the summation signs and compute the array $\sums[1..\sigma^k]$ such that $\sums[t]=\sum_\rho \ch(\rho,t)$, which is sufficient to obtain $\cR_\sigma(k,w)$.
    The algorithm computing $\sums$ uses the following property: by Lemma~\ref{l:locality}, $\ch(\rho,t)$ depends only on $\rho[t]$ and the domain of $\rho[1..t{-}1]$.
    It proceeds in phases and uses the algorithm from Lemma~\ref{l:ch} computing each number $\ch(\rho,i)$ in $O(w\sigma^k)$ time; this algorithm uses $O(\sigma^k)$ space.

    At the first phase, the algorithm computes $\ch((u),1)$ for each $u\in\Sigma^k$ and adds the results to $\sums[1]$.
    During the $t$'th phase, $t\ge 2$, the algorithm processes every $(t-1)$-element subset $U$ of $\Sigma^k$.
    It takes an arbitrary arrangement $\pi$ with the domain $U$, computes $\ch(\pi\cdot(u),t)$ for each $u\notin U$, and adds $\ch(\pi\cdot(u),t){\cdot}(t-1)!$ to $\sums[t]$.
    Since $\ch(\rho,t)=\ch(\pi,t)$ whenever $\rho[1..t{-}1]$ and $\pi[1..t{-}1]$ have the same domain and $\rho[t]=\pi[t]$, we have $\sums[t]=\sum_\rho \ch(\rho,t)$ after processing all $(t-1)$-element sets $U$. 

    The algorithm computes $2^{\sigma^k}$ values $\ch(\rho,t)$ and spends $O(\sigma^k)$ space both inside each of these computations and to store the array $\sums$.
    The theorem follows.
\end{proof}

\subsection{To Section~\ref{ss:growth}}

\limitmatrix*

\begin{proof}
    The matrix $\alpha^{-1}A$ has the strictly dominant eigenvalue 1, so its Jordan form can be written as 
    \arraycolsep=2pt
    $J=\left[\begin{array}{cccc}
         1& 0&\cdots& 0  \\
         \cline{2-4}
         \begin{matrix}
             0\\[-1.5mm] {\vdots} \\ 0
         \end{matrix}& 
         \multicolumn{3}{|c|}{J_0}\\
         \cline{2-4}
    \end{array}\right] $,
    where $J_0^n\xrightarrow[n\to\infty]{}0$.
    If $\alpha^{-1}A=TJT^{-1}$ for the transition matrix $T$, then the matrix $(\alpha^{-1}A)^n=TJ^nT^{-1}$ approaches $\A=\vec{x}\vec{y}^\top$, where $\vec{x}$ is the first column of $T$ and $\vec{y}^\top$ is the first row of $T^{-1}$.
    As $\A$ has rank 1, its columns are multiples of $\vec{x}$ and its rows are multiples of $\vec{y}^\top$.
    Since $(\alpha^{-1}A)\A=\A$, $\vec{x}$ is a column eigenvector of the matrix $\alpha^{-1}A$, belonging to the eigenvalue 1; hence, $\vec{x}$ belongs to the eigenvalue $\alpha$ of $A$.
    Similarly, the equality $\A(\alpha^{-1}A)=\A$ implies that $\vec{y}^\top$ is a row eigenvector of $A$, belonging to $\alpha$.
    Statement (i) is proved.

    The graph $G$ can be labeled as a DFA with the initial state $u$ and the only final state $v$.
    Hence, $W_{uv}(n)$ is the growth function of some regular language, and thus has the form \eqref{e:SaSo}.
    Since $W_{uv}(n)=A^n[u,v]$, we have $\frac{W_{uv}(n)}{\alpha^n}\xrightarrow[n\to\infty]{}\A[u,v]$, yielding statement (ii).
\end{proof}

\subsection{To Section~\ref{ss:regcharged}}
\begin{remark} \label{r:compare_indices}
    The growth rate of a regular language is a zero of a polynomial, so in general it cannot be found exactly, but can be approximated within any prescribed error range.
    Such approximate computation is sufficient to compare growth rates of two regular languages (or, more general, the indices of two digraphs) for $<,>,=$, using the following trick (see, e.g., \cite{Shur10imm}).
    Given characteristic polynomials $f(x)$ and $g(x)$ of two strongly connected digraphs, we represent them as $f(x)=f_1(x)h(x)$, $g(x)=g_1(x)h(x)$, where $h(x)$ is the greatest common divisor of $f(x)$ and $g(x)$.
    The maximum positive zeroes of $f(x)$ and $g(x)$ are simple, since the digraphs are strongly connected.
    If the maximum zero is common to $f(x)$ and $g(x)$, then the maximum zero of $h(x)$ is strictly greater than those of both $f_1(x)$ and $g_1(x)$.
    If the maximum zeroes of $f(x)$ and $g(x)$ are distinct, the maximum zeroes of $f_1(x)$ and $g_1(x)$ are also distinct, and at least one of them is strictly greater than the maximum zero of $h(x)$.
    Approximating the maximum zeroes of $f_1(x),g_1(x)$, and $h(x)$, we can distinguish between the above cases and thus compare the maximum zeroes of $f(x)$ and $g(x)$.
    If the graphs are not strongly connected, then we first split each of them into components and find the component with the largest index using the above procedure.
\end{remark}

\subsection{To Section~\ref{ss:optimal}}
\asymptotictwo*

\begin{proof}
    We first compute $d_{(\rho,w)}$.
    The rank-1 $k$-mer $01$ charges $2^w$ windows as a prefix and also $w+1$ windows $1^i0^{w-i}01$as a suffix, where $i\in[0..w]$.
    The remaining windows are of the form $1^i0^{w+2-i}$, where $i\in[0..w+2]$.
    Exactly four of them are charged: $10^{w+1}$ and $1^{w+1}0$ due to $10$, $0^{w+2}$ due to $00$ and $1^{w+2}$ due to $11$.
    Thus, $d_{(\rho,w)}=\frac{2^w+w+5}{2^{w+2}}$.
    Note that the same density can be achieved by assigning rank 2 to $00$: here $00$ charges the windows $1^w00$ and $0^{w+2}$, and both windows that avoid $\{10,00\}$ ($1^{w+1}0$ and $1^{w+2}$) are charged by any UHS order.
    Finally, if we assign rank 2 to $11$, then we charge all $(w{+}1)$ windows beginning with $11$, plus the window $0^{w+2}$, to the total of $w+2\ge4$.

    By symmetry, it remains to show that $d_{(\rho',w)}\ge d_{(\rho,w)}$ for every $\rho'\in\U_{2,2}$ with $\rho'[1]=00$.
    We denote the $n$'th Fibonacci number by $\Phi_n$ (with $\Phi_0=1$ and $\Phi_1=2$).
    It is well known that $\Phi_n$ is the number of binary $n$-strings avoiding $00$ (or $11$).
    The $k$-mer $00$ charges $2^w$ windows as a prefix and all windows $x100$, where $x\in\{0,1\}^{w-1}$ avoids $00$, as a suffix.
    The number of such strings $x$ is $\Phi_{w-1}$.
    The remaining windows are exactly those avoiding $00$.
    If $01$ has rank 2, it charges all windows of the form $01x$ and also the window $1^w01$; the remaining windows $1^{w+2}$ and $1^{w+1}0$ are both charged, to the total of $\Phi_w+3$ charged windows after rank 1.
    Similarly, we get $\Phi_w+2$ (resp., $\Phi_{w-1}+3$) charged windows if $11$ (resp., $10$) has rank 2; in the latter case, the window charged due to its prefix $10$ must have the form $101x$, so the number of such windows is $\Phi_{w-1}$.
    Finally, as $\Phi_w>\Phi_{w-1}\ge w$, we have at least $2^w+2\Phi_{w-1}+3\ge 2^w+2w+3\ge 2^w+w+5$ charged windows. 
    Hence $d_{(\rho',w)}\ge d_{(\rho,w)}$, and the theorem is proved.   
\end{proof}

\subsubsection{Auxiliary tools to build eventually optimal minimizers} \label{sss:search}

Let $\pi$ be an arrangement of $\Sigma^k$ with the domain $U$.
A set $K\subset\Sigma^k\setminus U$ is \emph{isolated} (w.r.t. $\pi$) if there is some $n_0\in\mathbb{N}$ such that every $n$-window with $n>n_0$, live w.r.t. $\pi$, contains either all or none of its $k$-mers in $K$;
the set of $n$-windows consisting of $k$-mers in the isolated set $K$ is also called isolated.
The following lemma is immediate.
\begin{lemma} \label{l:isolated}
    Let $\pi$ be an arrangement of $\Sigma^k$, $\rho_1=\pi\cdot\pi_1$ and $\rho_2=\pi\cdot\pi_2$ be two orders, and let $K$ be an isolated set for $\pi$, with the constant $n_0$.
    If, for every $u_1,u_2\in K$, the order of their $\rho_1$-ranks is the same as the order of their $\rho_2$-ranks, then $d_{\rho_1,w}=d_{\rho_2,w}$ for all $w>n_0-k$.
\end{lemma}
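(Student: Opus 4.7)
Since the density of a minimizer equals the fraction of charged $(w+k)$-windows, the plan is to show that for every window $v$ of length $n := w+k > n_0$, $v$ is charged by $(\rho_1,w)$ if and only if it is charged by $(\rho_2,w)$. Because charging is determined entirely by which $k$-mer of $v$ carries the minimum rank (and whether that $k$-mer appears as the prefix or as the unique suffix of $v$), it suffices to prove the pointwise statement that the $\rho_1$-minimum $k$-mer of $v$ coincides with its $\rho_2$-minimum.

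The argument splits into cases on $v$. If $v$ is not live w.r.t.~$\pi$, then $v$ contains some $k$-mer in the domain $U$ of $\pi$. Since both $\rho_1$ and $\rho_2$ begin with $\pi$, every element of $U$ has smaller rank than every non-$U$ element under either order, so both minima lie in $U$ and are equal to the $\pi$-minimum among the $U$-k-mers of $v$; in particular they coincide. If instead $v$ is live, then $n > n_0$ together with the isolation of $K$ forces the $k$-mers of $v$ to lie either all in $K$ or all outside $K$. In the first of these subcases both minima lie in $K$, and the hypothesis that $\rho_1$ and $\rho_2$ induce the same linear order on $K$ shows at once that they agree on the minimum of any nonempty subset of $K$, hence on the minimum of the $k$-mers of $v$.

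The remaining subcase—live $v$ whose $k$-mers all lie in $\Sigma^k\setminus(U\cup K)$—is the delicate one: the stated hypothesis does not constrain the order on $\Sigma^k\setminus(U\cup K)$, so a priori the two minima could be different $k$-mers with different charging. The plan for this subcase is to exploit the intended application (see Appendix~\ref{sss:search}), where $\rho_1$ and $\rho_2$ differ only in the interleaving of $K$-elements with the remaining non-$U$ $k$-mers while keeping the subsequence of non-$K$ non-$U$ elements unchanged; recording this as part of the hypothesis makes the restrictions of $\rho_1, \rho_2$ to $\Sigma^k\setminus(U\cup K)$ coincide, so the minima agree in this subcase too. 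Combining all three cases, the sets of charged $n$-windows under $(\rho_1,w)$ and $(\rho_2,w)$ are identical, yielding $d_{\rho_1,w} = d_{\rho_2,w}$.

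The main obstacle is precisely this last subcase. Two routes look viable: either strengthen the hypothesis to also fix the relative order on $\Sigma^k\setminus(U\cup K)$ (which matches how the lemma is used to narrow the candidate search), or, more ambitiously, build a charging-preserving bijection between the Type~B windows under $\rho_1$ and under $\rho_2$ using only the isolation of $K$.
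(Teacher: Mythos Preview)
Your case analysis is the natural reading of ``immediate'', and your first two cases are exactly right. More importantly, you have put your finger on a genuine imprecision in the paper: the lemma as stated is \emph{not} true, and your third case is where it fails. Take the paper's own example $\pi=\pi_7=(0011,0001,1100,0100)$ with the isolated set $K=\{1000,0000\}$. The only live windows with all $k$-mers in $K$ are $0^n$ and $10^{n-1}$, while there remain exponentially many live windows avoiding $K$ entirely. Two orders $\rho_1=\pi\cdot(1110,\ldots)$ and $\rho_2=\pi\cdot(0010,\ldots)$ can place $1000,0000$ identically (say last, in the same relative order) and thus satisfy the stated hypothesis, yet the paper's whole Iteration~5 argument shows they have different densities. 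So the hypothesis ``same relative order on $K$'' is simply too weak to force $d_{\rho_1,w}=d_{\rho_2,w}$.

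Your first proposed fix---additionally require that $\rho_1,\rho_2$ induce the same relative order on $\Sigma^k\setminus(U\cup K)$---is the correct repair, and it makes your third case go through pointwise with no further work. It also matches the only way the lemma is actually used: in the rule~$(\star)$ one defers elements of $K$ while keeping the sequence of non-$K$ choices fixed, so the two orders being compared automatically agree on $\Sigma^k\setminus(U\cup K)$. The paper's phrasing of ``trivial permutation'' (preserving only the $K$-order) inherits the same imprecision. Your second route (a charging-preserving bijection using isolation alone) cannot work, because the counterexample above shows no such bijection exists in general; don't pursue it.
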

We call an isolated set $K$ \emph{small} if the number of windows having all $k$-mers in $K$ grows polynomially in $n$.
Let $\rho=\pi\cdot\pi'$ be an order and $K$ be a small isolated set w.r.t. $\pi$. A permutation $h$ of $\pi'$ is \emph{trivial} if it preserves the order of $k$-mers from $K$.
By Lemma~\ref{l:isolated}, $\pi\cdot h(\pi')$ has the same density as $\rho$ for all sufficiently big $w$.
This allows us not to choose the next $k$-mer from a small isolated set while there are other choices.
(As choosing such a $k$-mer affects only a small set of live windows, it does not bring us closer to finishing the head of a UHS order.)

Further, we call a $k$-mer $u$ the \emph{worst choice} for $\pi$, if it charges every window in $W$, in which it occurs.
Then an order with the prefix $\pi$ and \emph{not} maximal rank of $u$ cannot have smaller density than an order with the prefix $\pi$ and maximal rank for $u$.
Summarizing all the above, we formulate the following rule of reducing the search with no risk of losing optimality:
\begin{itemize}
    \item[$(\star)$] Never extend an arrangement $\pi$ to $\pi\cdot(u)$ if, w.r.t. $\pi$, $u$ is either (i) a worst choice, or (ii) an element of a small isolated set if other choices are available.
\end{itemize}

\corGO*

\begin{proof}
    Let $\pi[1]$ have a period.
    By Lemma~\ref{l:GuOd}, $\alpha_{\pi,1}=\alpha_{\rho,1}+\varepsilon$ for some $\varepsilon>0$.
    By Lemma~\ref{l:PSformulas}(i), $C_{\cS_{\pi,1}}(n)=\Omega(\alpha_{\pi,1}^n)$. 
    Then $C_{\cW_{\pi,1}}(n)-C_{\cW_{\rho,0}}(n)\ge C_{\cS_{\pi,1}}(n)=\Omega((\alpha_{\rho,1}+\varepsilon)^n)$.
    Taking $i=0,j=1$ in Lemma~\ref{l:comparison}, we get $\rho\triangleleft\pi$, proving (i).

    Now let $\pi[1]=u$ have no periods.
    Then all occurrences of $u$ in a string are disjoint.
    Hence if a string $v$ avoids $u$, then $u$ occurs in $vu$ only as a suffix.
    The map $v\to vu$ is a bijection of $\cL_{\pi,1}$ onto $\cS_{\pi,1}$, yielding $C_{\cS_{\pi,1}}(n)=C_{\cL_{\pi,1}}(n-k)$ for all $n\ge k$.
    Similarly, $C_{\cS_{\rho,1}}(n)=C_{\cL_{\rho,1}}(n-k)$ for all $n\ge k$.
    As $C_{\cL_{\rho,1}}(n-k)=C_{\cL_{\pi,1}}(n-k)$ by Lemma~\ref{l:GuOd}, we get statement (ii).
\end{proof}

\exceptional*

\begin{proof}
    Let $\pi=(0^{k-1}1,u)$, $\rho=(01^{k-1},v)$, $\alpha=\alpha_{\pi,1}=\alpha_{\rho,1}$, and let $A_0,\A_0$ denote the adjacency matrix and the limit matrix of the DFA $\cA_{\pi,1}$ (the vertices are ordered by length, i.e., left to right in Fig.~\ref{f:aut01}).
    Similarly, we write $A_1,\A_1$ for such matrices of $\cA_{\rho,1}$.
    We have
    $A_0=\begin{bmatrix}
        1&1&0&\cdots&0&0\\
        1&0&1&\cdots&0&0\\
        \vdots&\vdots&\vdots&\ddots&\vdots&\vdots\\
        1&0&0&\cdots&0&1\\
        0&0&0&\cdots&0&1\\
    \end{bmatrix}$,
    $A_1=\begin{bmatrix}
        1&1&0&\cdots&0&0\\
        0&1&1&\cdots&0&0\\
        \vdots&\vdots&\vdots&\ddots&\vdots&\vdots\\
        0&1&0&\cdots&0&1\\
        0&1&0&\cdots&0&0\\
    \end{bmatrix}$,
    and compute $\chi_{A_0}(r)=|rI-A_0|$, expanding along the first column of $rI-A_0$: $\chi_{A_0}(r)=(r^{k-1}-r^{k-2}-\cdots-r -1)(r-1)$.
    Hence $\alpha$ is the largest real number satisfying $\alpha^{k-1}=\alpha^{k-2}+\cdots+\alpha+1$; in particular, $\alpha=\sum_{i=0}^{k-2}\alpha^{-i}$.
    Then we compute  the column eigenvector $\vec{x}_0$ and the  row eigenvector $\vec{y}_0^\top$ of $A_0$, belonging to $\alpha$: $\vec{x}_0^\top=(\alpha,1+\frac{1}{\alpha}+\cdots \frac{1}{\alpha^{k-3}},\ldots,1+\frac{1}{\alpha},1,0)$, $\vec{y}_0^\top=(\alpha^{k-2},\alpha^{k-3},\ldots,\alpha,1,\frac{1}{\alpha-1})$. 
    By Lemma~\ref{l:limit}(i), $\A_0=\mu \vec{x}_0\vec{y}_0^\top$ for some constant $\mu>0$.
    Similarly, we get the same characteristic polynomial for $A_1$, compute the eigenvectors $\vec{x}_1^\top=(\frac{\alpha}{\alpha-1},\alpha,1+\frac{1}{\alpha}+\cdots \frac{1}{\alpha^{k-3}},\ldots,1+\frac{1}{\alpha},1)$, $\vec{y}_1^\top=(0,\alpha^{k-2},\alpha^{k-3},\ldots,\alpha,1)$, and write $\A_1=\mu' \vec{x}_1\vec{y}_1^\top$. 
    Let us show $\mu'=\mu$.

    Since $\cA_{\pi,1},\cA_{\rho,1}$ are simple DFAs, combining Corollary~\ref{cor:cand1}(ii) with Lemma~\ref{l:PSformulas}(i), we get $\A_0[\lambda,0^{k-1}]=\A_1[\lambda,01^{k-2}]$ (last elements of the first rows are equal).
    Note that the last elements of the first rows of the matrices $\vec{x}_0\vec{y}_0^\top$ and $\vec{x}_1\vec{y}_1^\top$ are also equal (to $\frac{\alpha}{\alpha-1}$).
    As $\mu^2(\vec{x}_0\vec{y}_0^\top)^2=\A_0^2=\A_0$, $\mu'^2(\vec{x}_1\vec{y}_1^\top)^2=\A_1^2=\A_1$, we have $\mu'^2=\mu^2$, and hence $\mu'=\mu$.

    In $\cA_{\rho,1}$, one has $\lambda.v=01^{k-2}$ (the last vertex in Fig.~\ref{f:aut01}b).
    Applying Lemma~\ref{l:PSformulas}(ii), we get $C_{\cP_{\rho,2}}(n)=\frac{\mu\vec{x}_1[k]}{\alpha^k}||\vec{y}_1||_1\cdot\alpha^n+\tilde{O}(\gamma^n)$, where $||\cdot||_1$ is the $l_1$-norm (sum of coordinates) of a vector and $|\gamma|<\alpha$. 
    We also note that $\alpha_{\rho,2}<\alpha$ since the major component of $\cA_{\rho,1}$ contains walks labelled by $\alpha$.
    Now let $\rho'=(01^{k-1},u)$, where $u\ne a01^{k-2}$.
    Then the vertex $\lambda.u$ in $\cA_{\rho',1}=\cA_{\rho,1}$ differs from $01^{k-2}$.
    As the last coordinate of $\vec{x}_1$ is the unique smallest one, $C_{\cP_{\rho',2}}(n)$ has a larger coefficient in $\alpha^n$ than $C_{\cP_{\rho,2}}(n)$.
    Since $\cS_{\rho',1}=\cS_{\rho,1}$, we apply Lemma~\ref{l:comparison} to $\rho'$ and $\rho$ with $i=j=1$ and $\varepsilon=\alpha-\alpha_{\rho,2}$ to conclude $\rho\triangleleft\rho'$.

    Similar to the above, we compute $C_{\cP_{\pi,2}}(n)$.
    First assume $\lambda.u\ne 0^{k-1}$ in $\cA_{\pi,1}$ (Fig.~\ref{f:aut01}a).
    As the smallest nonzero coordinate of $\vec{x}_0$ is $\vec{x}_0[k-1]$, we get $C_{\cP_{\pi,2}}(n)\ge \frac{\mu\vec{x}_0[k-1]}{\alpha^k}||\vec{y}_0||_1\cdot\alpha^n+\tilde{O}(\gamma'^n)$, where $|\gamma'|<\alpha$.
    We have $\vec{x}_0[k-1]\cdot||\vec{y}_0||_1=||\vec{y}_0||_1>\||\vec{y}_1||_1=\vec{x}_1[k]\cdot||\vec{y}_1||_1$.
    Since $C_{\cS_{\pi,1}}(n)=C_{\cS_{\rho,1}}(n)$ by Corollary~\ref{cor:cand1}(ii), we apply Lemma~\ref{l:comparison} to $\pi$ and $\rho$ with $i=j=1$ and $\varepsilon=\alpha-\alpha_{\rho,2}$, obtaining $\rho\triangleleft\pi$.

    Now assume $\lambda.u= 0^{k-1}$, i.e., $u\in\{0^k,10^{k-1}\}$. 
    By Lemma~\ref{l:PSformulas}(ii), $C_{\cP_{\pi,2}}(n)$ has zero coefficient in $\alpha^n$, and thus $\alpha_{\pi,2}=\alpha$.
    Consider the strings from $\cS_{\pi,2}$.
    Let $u=10^{k-1}$.
    In $\cA_{\pi,1}$ (Fig.~\ref{f:aut01},a), the $n$-strings having $u$ as a unique suffix are accepted by walks consisting of a $(\lambda,0^{k-2})$-walk of length $n-1$ followed by the edge $(0^{k-2},0^{k-1})$.
    The number of such walks is $\A_0[\lambda,0^{k-2}]\cdot\alpha^{n-1}+\tilde{O}(\gamma^n)$ by Lemma~\ref{l:limit}(ii), where $|\gamma|<\alpha$.
    Then the coefficient in $\alpha^n$ in the formula for $C_{\cS_{\pi,2}}(n)$ is $\frac{\mu}{\alpha}\vec{x}_0[1]\cdot\vec{y}[k-1]=\mu$.
    Since $\vec{x}_1[k]=1$ and $||\vec{y}_1||_1=\alpha^{k-1}$, the function $C_{\cP_{\rho,2}}(n)$ has the coefficient $\frac{\mu}{\alpha}<\mu$ in $\alpha^n$.
    Since $C_{\cS_{\pi,1}}(n)=C_{\cS_{\rho,1}}(n)$ by Corollary~\ref{cor:cand1}(ii), we apply Lemma~\ref{l:comparison} to $\pi$ and $\rho$ with $i=1$, $j=2$ and $\varepsilon=\alpha-\alpha_{\rho,2}$ to get $\rho\triangleleft\pi$.
    
    Finally, let $u=0^k$.
    In $\cA_{\pi,1}$, the $n$-strings having $u$ as a unique suffix are accepted by walks consisting of a $(\lambda,0^{k-2})$-walk of length $n-2$ followed by the edge $(0^{k-2},0^{k-1})$ and the loop at $0^{k-1}$.
    The number of such walks is asymptotically $\alpha$ times less than in the case $u=10^{k-1}$, resulting in the coefficient $\frac{\mu}{\alpha}$ in $\alpha^n$ in the formula for $C_{\cS_{\pi,2}}(n)$, which is the same as we have for $C_{\cP_{\rho,2}}(n)$.
    However, for every UHS order $\pi'$ with the prefix $\pi$ there exists $j>2$ such that $\alpha_{\pi',j}<\alpha_{\pi',j-1}=\alpha$.
    By Lemma~\ref{l:PSformulas}(ii), $C_{\cP_{\pi',j}}(n)$ has a positive coefficient in $\alpha^n$. 
    Therefore, we can apply Lemma~\ref{l:comparison} to $\pi'$ and $\rho$ with $i=1$, the obtained $j$, and $\varepsilon=\alpha-\alpha_{\rho,2}$ to conclude $\rho\triangleleft\pi'$.
    Since $\pi'$ is arbitrary, we get $\rho\triangleleft\pi$ by definition.
\end{proof}

\subsubsection{Main Theorems} \label{sss:main}

\asymptoticthree*

\begin{proof}
By Corollary~\ref{cor:cand1}, we take $\cand_1=\{(001),(011)\}$.
For the list $\cand_2$, we consider the extensions of the elements of $\cand_1$.
By Lemma~\ref{l:0001}, it suffices to consider the arrangements $\pi=(011,001)$ and $\pi'=(011,101)$; see Fig.~\ref{f:aut_k3} for the automata $\cA_{\pi,2}$ and $\cA_{\pi',2}$.
We observe that $\cA_{\pi,2}$ is flat, while $\cA_{\pi',2}$ is simple, with the major component on the vertices $0,01,10$.
Then $\alpha_{\pi,2}=1<\alpha_{\pi',2}\approx 1.4656<\alpha_{\pi',1}\approx 1.6180$.
By Lemma~\ref{l:PSformulas}(i), $C_{\cS_{\pi',2}}(n)=\Omega(\alpha_{\pi',2}^n)$.
We apply Lemma~\ref{l:comparison} to $\pi',\pi$ with $i=1,j=2,\varepsilon=\alpha_{\pi',2}-1$, obtaining $\pi\triangleleft\pi'$.
Accordingly, we take $\cand_2=\{\pi\}$.
In particular, every eventually optimal lexmin UHS order has the head $\pi$, so we proved essential uniqueness of such an order.
It remains to study the tail.

The strings from $\cL_{\pi,2}$ have the form $1^i(01)^j0^\ell$ for some $i,j,\ell\ge0$ (see Fig.~\ref{f:aut_k3}a).
The $3$-mer $101=\rho[3]$ charges $\lfloor\frac{n+1}{2}\rfloor$ windows of length $n$: the windows $1(01)^j0^\ell$, $j=1,\ldots,\lfloor\frac{n-1}{2}\rfloor$ as a prefix and $1^{n-2}01$ as a suffix.
The remaining $n$-windows are $010^{n-2}$ and $1^i0^{n-i}$, $i=0,\ldots,n$.
Among them, $0^n$, $1^{n-1}0$, and $1^n$ are charged by any order.
Moreover, it is easy to see that any choice of the rank-4 $3$-mer charges at least one remaining window except those three.
Every remaining window contains $\rho[4]$, $\rho[5]$, or $\rho[6]$, and exactly four remaining windows are charged by $\rho$: $0^n$ and $1^{n-3}0^3$ due to $000$, then $1^{n-1}0$ due to $110$, and finally $1^n$ due to $111$. 
In total, $\lfloor\frac{n+1}{2}\rfloor+4$ windows are charged by $\rho$ due to its tail $3$-mers, and this is the minimum possible for the fixed choice of $\rho[3]=101$.

Consider the alternative choices the $3$-mer of rank 3.
Choosing $111$ leads to charging $\Omega(n^2)$ windows with this prefix.
Assigning rank 3 to any of remaining $3$-mers: $000,010,100$, or $110$ charges either $\lfloor\frac{n+1}{2}\rfloor$ or $\lfloor\frac{n+2}{2}\rfloor$ windows due to this $3$-mer, and at least four remaining windows should be charged by any arrangement of remaining $3$-mers.
All cases are very similar, so we consider just one: let $000$ have rank 3.
Then $000$ charges the windows $1^i(01)^j0^3$, $j=0,\ldots,\lfloor\frac{n-3}{2}\rfloor$, plus the window $0^n$, to the total of $\lfloor\frac{n+1}{2}\rfloor$.
The remaining windows have the form $1^i(01)^j00, 1^i(01)^j0$, or $1^i(01)^j$, and among them $1^n$, $1^{n-1}0$, and one of the strings $1010\cdots$ and $0101\cdots$ are always charged.
In addition, if rank 4 is assigned to $010$ (resp., $101,110$), then $1^{n-3}010$ (resp., $1^{n-2}01$, $11010\cdots$) is charged, while assigning rank 4 to $100$ or $111$ results in $\Omega(n)$ additional charged windows.
Thus, the total number of windows charged due to tail $3$-mers is at least $\lfloor\frac{n+1}{2}\rfloor+4$, which is the result achieved by $\rho$.

We conclude that no UHS order with the prefix $\pi$ charges less $k$-mers than $\rho$.
Therefore, $\rho$ is indeed eventually optimal.
\end{proof}

\asymptotic*

\begin{proof}
    This proof uses the same tools as the proof of Theorem~\ref{t:23w_opt}, so we keep it concise.
    We proceed in iterations, building $\cand_i$ at $i$'th iteration.
    For $\pi\in\cand_i$, we write $\alpha_i=g(\cL_{\pi,i})$  (it is always the same number for all elements of $\cand_i$).\\
    \emph{Iteration 1}. We use Corollary~\ref{cor:cand1} to set $\cand_1=\{0001,0011,0111\}$.\\
    \emph{Iteration 2}. We compare the coefficients in $\alpha_1^n$ over all functions $C_{\cP_{\pi,2}}(n)$ (Lemma~\ref{l:PSformulas}(ii)), where $\pi$ has a prefix in $\cand_1$.
    The arrangements giving zero coefficient are excluded by Lemma~\ref{l:0001}; among the rest, $\pi_1=(0111, 0011)$, $\pi_2=(0111,1011)$, $\pi_3=(0011,0001)$, and $\pi_4=(0011,1001)$ give the minimal coefficient, so we exclude all other arrangements using Lemma~\ref{l:comparison}.
    Next, we compute $\alpha_2=\alpha_{\pi_1,2}=\alpha_{\pi_3,2}\approx 1.7221$, $\alpha_{\pi_2,2}\approx 1.7549$ and $\alpha_{\pi_4,2}\approx 1.8393$, thus excluding $\pi_2$ and $\pi_4$ using Lemma~\ref{l:comparison}.
    Finally we compute the coefficients in $\alpha_2^n$ for $C_{\cS_{\pi_1,2}}(n)$ and $C_{\cS_{\pi_3,2}}(n)$ (Lemma~\ref{l:PSformulas}(i)); they are equal. 
    Respectively, we set $\cand_2=\{\pi_1,\pi_3\}$ as we cannot prefer one to the other.\\
    \emph{Iteration 3}. 
    We proceed by the same scheme.
    The minimal nonzero coefficient in $\alpha_2^n$ among the functions $C_{\cP_{\pi,3}}(n)$, where $\pi$ has a prefix in $\cand_2$, is reached by $\pi_5=\pi_3\cdot(0100)$, $\pi_6=\pi_3\cdot(1100)$.
    We exclude every arrangement $\pi$ giving zero coefficient, using the same argument as in Lemma~\ref{l:0001}: $C_{\cS_{\pi,3}}(n)$ has a large coefficient in $\alpha_2^n$.
    Next, we compute $\alpha_3=\alpha_{\pi_6,3}\approx 1.6663<\alpha_{\pi_5,2}\approx 1.6736$, exclude $\pi_5$ by Lemma~\ref{l:comparison}, and set $\cand_3=\{\pi_6\}$.\\
    \emph{Iteration 4}. 
    The minimal nonzero coefficient in $\alpha_3^n$ among the functions $C_{\cP_{\pi,4}}(n)$, where $\pi$ has the prefix $\pi_6$, is reached solely by $\pi_7=\pi_6\cdot(0100)$; the arrangements giving zero coefficient, are excluded by the same argument as above.
    We compute $\alpha_4=\alpha_{\pi_7,4}\approx 1.6180$ and set $\cand_4=\{\pi_7\}$.\\
    \emph{Iteration 5}. 
    We first employ the rule $(\star)$, excluding from consideration the isolated group $\{1000,0000\}$ and the worst choice $1001$ (see Fig.~\ref{f:aut_k4},a).
    Among the other arrangements $\pi$ with the prefix $\pi_7$, the minimal coefficient in $\alpha_4^n$ in the function $C_{\cP_{\pi,5}}(n)$ is reached by $\pi_8=\pi_7\cdot(0010)$, $\pi_9=\pi_7\cdot(1010)$, $\pi_{10}=\pi_7\cdot(0110)$, and $\pi_{11}=\pi_7\cdot(1110)$.
    As $\alpha_5=\alpha_{\pi_{11},5}\approx1.3247$, $\alpha_{\pi_8,5}\approx1.6180$, $\alpha_{\pi_9,5}\approx1.4656$, $\alpha_{\pi_{10},5}\approx1.5129$, we set $\cand_5=\{\pi_{11}\}$.\\
    \emph{Iteration 6}. 
    We have the same isolated group and worst choice as at Iteration~5.
    The minimal nonzero coefficient in $\alpha_5^n$ is reached by the arrangement $\pi_{12}=\pi_{11}\cdot(1011)$.
    If $\pi\in\{\pi_{11}\cdot(0111),\pi_{11}\cdot(1111)\}$, it gives zero coefficient in $\alpha_5^n$ in $C_{\cP_{\pi,6}}(n)$, but a large coefficient in $\alpha_5^n$ in $C_{\cS_{\pi,6}}(n)$, so we exclude $\pi$.
    Hence we set $\cand_6=\{\pi_{12}\}$ and compute $\alpha_6=1$. 
    Thus, $\pi_{12}$ is the only lexmin candidate to the good head of an optimal order.\\
    \emph{Iteration 7}. 
    There are just 10 live windows in three isolated groups (Fig.~\ref{f:aut_k4},b): $\{10^{n-1},0^n\}$, $\{01^{n-1},1^n\}$, $\{(10)^{\frac{n}{2}},1(10)^{\frac{n-1}{2}},01(10)^{\frac{n-2}{2}},(01)^{\frac{n}{2}},0(01)^{\frac{n-1}{2}},10(01)^{\frac{n-2}{2}}\}$.
    Choosing the tail $\{0000,0101,1111\}$ (in any order), we charge just one window from a group, which is the minimum possible, as $0^n,1^n$, and one of $(10)^{\frac{n}{2}},(01)^{\frac{n}{2}}$ must be charged.
    Thus we finally obtained $\rho$ as an eventually optimal UHS order; uniqueness is proved at Iteration~6.
\end{proof}
\begin{figure}[!htb]
    \centering
    \includegraphics[scale=0.82, trim = 45 642 160 32.5, clip]{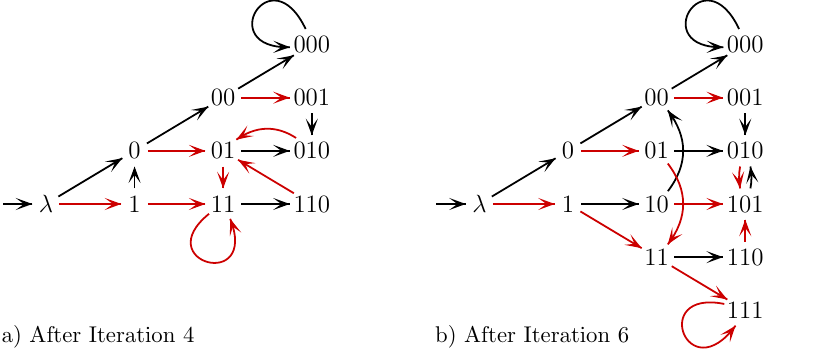}
    \caption{Canonical DFAs avoiding the current optimal arrangements $\pi_7$ and $\pi_{12}$ (Theorem~\ref{t:24w_opt}). Black (red) edges are labelled by 0 (resp., by 1).}
    \label{f:aut_k4}
\end{figure}

\asymptoticfive*

\begin{proof}
    We proceed in iterations, using the same arguments as in Theorem~\ref{t:24w_opt}. \\
    \emph{Iteration 1}. We use Corollary~\ref{cor:cand1} to set $\cand_1=\{00001,00011,00101,00111,01011,01111\}$.\\
    \emph{Iteration 2}. We compare the coefficients in $\alpha_1^n$ over the functions $C_{\cP_{\pi,2}}(n)$, where $\pi$ has a prefix in $\cand_1$, excluding by Lemma~\ref{l:0001} the arrangements giving zero coefficient. 
    Among the rest, the minimum is reached by
    $\pi_1=(01011, 00101)$ and $\pi_2=(01011,10101)$.
    We have $\alpha_2=\alpha_{\pi_1,2}\approx 1.8977<\alpha_{\pi_2,2}\approx 1.9380$ and respectively set $\cand_2=\{\pi_1\}$.\\
    \emph{Iteration 3}. 
    Among the coefficients in $\alpha_2^n$ over the functions $C_{\cP_{\pi,2}}(n)$ with $\pi$ having the prefix $\pi_1$, there are no zeroes and the minimum is reached solely at $\pi_3=\pi_1\cdot(10101)$. 
    Thus we set $\cand_3=\{\pi_3\}$ and compute $\alpha_3\approx1.8832$.\\
    \emph{Iteration 4}. 
    We observe that $01010$ is a worst choice: it appears in live windows only as a prefix.
    Among the coefficients in $\alpha_3^n$ over the functions $C_{\cP_{\pi,4}}(n)$ with $\pi$ having the prefix $\pi_3$, there are no zeroes and the minimum is reached at $\pi_4=\pi_3\cdot(00010)$, $\pi_5=\pi_3\cdot(10010)$, and $\pi_6=\pi_3\cdot(11010)$.
    We have $\alpha_4=\alpha_{\pi_4,4}=\alpha_{\pi_6,4}\approx1.8393<\alpha_{\pi_5,4}\approx1.8460$.
    Next we compute the coefficients in $\alpha_4^n$ in $C_{\cS_{\pi_4,4}}(n)$ and $C_{\cS_{\pi_6,4}}(n)$; the coefficient for $\pi_4$ is smaller, but we cannot drop $\pi_6$ yet, as there will be more coefficients in $\alpha_4^n$ at the next iteration.
    So we set $\cand_4=\{\pi_4,\pi_6\}$.\\
    \emph{Iteration 5}. 
    The coefficient in $\alpha_4^n$ over the functions $C_{\cP_{\pi,5}}(n)$ with $\pi$ having the prefix in $\cand_4$ reaches its minimum at $\pi_7=\pi_4\cdot(10010)$ and $\pi_8=\pi_4\cdot(11010)$.
    As we earlier established that $C_{\cS_{\pi_4,4}}(n)$ is asymptotically smaller than $C_{\cS_{\pi_6,4}}(n)$, we drop all arrangements with the prefix $\pi_6$.
    We compute $\alpha_5=\alpha_{\pi_8,5}\approx1.7902<\alpha_{\pi_7,5}\approx1.8051$ and respectively set $\cand_5=\{\pi_8\}$.\\
    \emph{Iteration 6}. 
    Among the coefficients in $\alpha_5^n$ over the functions $C_{\cP_{\pi,5}}(n)$ with $\pi$ having the prefix $\pi_8$, there are no zeroes and the minimum is reached solely at $\pi_9=\pi_8\cdot(10010)$. 
    Thus we set $\cand_6=\{\pi_9\}$ and compute $\alpha_6\approx1.7539$.\\
    \emph{Iteration 7}. 
    The minimum coefficient in $\alpha_6^n$ over the functions $C_{\cP_{\pi,6}}(n)$ where $\pi$ has the prefix $\pi_9$ is reached by arrangements $\pi_{10}-\pi_{15}$, obtained from $\pi_9$ by appending $00001,01001,01101,10001,11001$, and $11101$, respectively.
    We get $\alpha_7=\alpha_{\pi_{14},7}=\alpha_{\pi_{15},7}\approx1.6736$, with the other growth rates being larger. 
    In addition, $C_{\cS_{\pi_{14},7}}(n)=C_{\cS_{\pi_{15},7}}(n)$. 
    Hence we set $\cand_7=\{\pi_{14},\pi_{15}\}$.\\
    \emph{Iteration 8}. 
    The coefficient in $\alpha_7^n$ over the functions $C_{\cP_{\pi,8}}(n)$ with $\pi$ having the prefix in $\cand_7$ reaches its minimum at $\pi_{16}=\pi_{14}\cdot(01100)$, $\pi_{17}=\pi_{14}\cdot(11100)$, $\pi_{18}=\pi_{15}\cdot(01110)$ and $\pi_{19}=\pi_{15}\cdot(11110)$.
    We get $\alpha_8=\alpha_{\pi_{17},8}=\alpha_{\pi_{19},8}\approx1.5701<\alpha_{\pi_{16},8}=\alpha_{\pi{18},8}\approx1.6180$.
    The coefficients in $\alpha_8^n$ in $C_{\cS_{\pi_{17},7}}(n)$ and $C_{\cS_{\pi_{19},7}}(n)$ are equal, so we set $\cand_8=\{\pi_{17},\pi_{19}\}$.\\
    \emph{Iteration 9}. 
    The minimal nonzero coefficient in $\alpha_8^n$ is reached by the arrangements $\pi_{20}=\pi_{17}\cdot(01110)$ and $\pi_{21}=\pi_{17}\cdot(11110)$.
    If $\pi\in\{\pi_{19}\cdot(01111),\pi_{19}\cdot(11111)\}$, it gives zero coefficient in $\alpha_8^n$ in $C_{\cP_{\pi,9}}(n)$, but a large coefficient in $\alpha_8^n$ in $C_{\cS_{\pi,9}}(n)$, so we exclude $\pi$.
    We get $\alpha_9=\alpha_{\pi_{21},9}\approx1.4656<\alpha_{\pi_{20},9}\approx1.5214$ and respectively set $\cand_9=\{\pi_{21}\}$.\\
    \emph{Iteration 10}. 
    The minimal nonzero coefficient in $\alpha_9^n$ is reached by the arrangements $\pi_{22}=\pi_{21}\cdot(00111)$ and $\pi_{23}=\pi_{21}\cdot(10111)$.
    If $\pi\in\{\pi_{21}\cdot(01111),\pi_{21}\cdot(11111)\}$, it gives zero coefficient in $\alpha_9^n$ in $C_{\cP_{\pi,9}}(n)$, but a large coefficient in $\alpha_9^n$ in $C_{\cS_{\pi,9}}(n)$, so we exclude $\pi$.
    We get $\alpha_{10}=\alpha_{\pi_{22},10}\approx1.4313<\alpha_{\pi_{23},10}\approx1.4369$ and respectively set $\cand_{10}=\{\pi_{22}\}$.\\    
    \emph{Iteration 11}.
    The minimal nonzero coefficient in $\alpha_{10}^n$ is reached solely by the arrangement $\pi_{24}=\pi_{22}\cdot(10111)$.
    If $\pi\in\{\pi_{22}\cdot(01111),\pi_{22}\cdot(11111)\}$, it gives zero coefficient in $\alpha_{10}^n$ in $C_{\cP_{\pi,10}}(n)$, but a large coefficient in $\alpha_{10}^n$ in $C_{\cS_{\pi,10}}(n)$, so we exclude $\pi$.
    Thus we set $\cand_{11}=\{\pi_{24}\}$ and get $\alpha_{11}\approx1.4035$.\\    
    \emph{Iteration 12}.
    We use $(\star)$: $\{01111,11111\}$ is an isolated set and $11101$ is a worst choice, appearing in live windows only as a prefix.
    Among the coefficients in $\alpha_{11}^n$ over the functions $C_{\cP_{\pi,11}}(n)$ with $\pi$ having the prefix $\pi_{24}$, there are no zeroes and the minimum is reached at $\pi_{25}=\pi_{24}\cdot(00001)$, $\pi_{26}=\pi_{24}\cdot(01101)$, and $\pi_{27}=\pi_{24}\cdot(10001)$.
    We have $\alpha_{12}=\alpha_{\pi_{25},12}\approx1.1939
    <\alpha_{\pi_{26},12}\approx1.3247<\alpha_{\pi_{27},12}\approx1.3590$, so we set $\cand_{12}=\{\pi_{25}\}$.\\
    \emph{Iteration 13}.     
    The minimal nonzero coefficient in $\alpha_{12}^n$ is reached solely by the arrangement $\pi_{27}=\pi_{25}\cdot(01100)$.
    If $\pi\in\{\pi_{25}\cdot(00000),\pi_{25}\cdot(10000)\}$, it gives zero coefficient in $\alpha_{12}^n$ in $C_{\cP_{\pi,12}}(n)$, but a large coefficient in $\alpha_{12}^n$ in $C_{\cS_{\pi,12}}(n)$, so we exclude $\pi$.
    Thus we set $\cand_{13}=\{\pi_{27}\}$ and get $\alpha_{13}=1$, finishing the construction of the head.\\       
    \emph{Iteration 14}. 
    There are constant number of live windows.
    Among them, $0^n,1^n$, and at least one of $(011)^{\frac{n}{3}},(101)^{\frac{n}{3}},(110)^{\frac{n}{3}}$ are charged by any order. (The other live windows are obtained from the listed ones by replacing a constant number of initial letters.)
    Since $\rho$ charges, due to its tail, exactly the windows $(110)^{\frac{n}{3}}$, $0^n$, and $1^n$, its tail is optimal.
    The optimality and uniqueness of the head follows from uniqueness of the arrangement in $\cand_{13}$.
    The theorem is proved.
\end{proof}

\asymptoticdna*

\begin{proof}
    The proof is along the lines of Theorems~\ref{t:24w_opt}, \ref{t:25w_opt}.\\
    \emph{Iteration 1}. As we construct a lexmin order, Corollary~\ref{cor:cand1} implies the choice $\cand_1=\{01\}$.\\
    \emph{Iteration 2}. 
    The minimum coefficient in $\alpha_1^n$ over the functions $C_{\cP_{\pi,2}}(n)$, where $\pi[1]=01$, is reached by $\pi_1=(01,00)$, $\pi_2=(01,10)$, and $\pi_3=(01,20)$, ignoring the arrangement $(01,30)$ as non-lexmin.
    We get $\alpha_2=\alpha_{\pi_3,2}\approx3.5115<\alpha_{\pi_1,2}=\alpha_{\pi_2,2}\approx3.5616$ and set $\cand_2=\{\pi_3\}$.\\
    \emph{Iteration 3}. 
    The minimum coefficient in $\alpha_2^n$ is reached by $\pi_4=\pi_3\cdot(00)$, $\pi_5=\pi_3\cdot(10)$, and $\pi_6=\pi_3\cdot(30)$.
    We have $\alpha_3=\alpha_{\pi_6,3}\approx3.2695<\alpha_{\pi_5,3}\approx3.3028<\alpha_{\pi_4,3}\approx3.3830$ and set $\cand_3=\{\pi_6\}$.\\
    \emph{Iteration 4}. 
    The minimum coefficient in $\alpha_3^n$ is reached by $\pi_7=\pi_6\cdot(00)$ and $\pi_8=\pi_6\cdot(10)$.
    We get $\alpha_4=\alpha_{\pi_8,4}=3<\alpha_{\pi_7,4}\approx3.1958$ and set $\cand_3=\{\pi_8\}$.\\
    \emph{Iteration 5}. 
    We use $(\star)$: $00$ is a worst choice as it occurs only as a prefix of a window.
    Choosing any other 2-mer $u$, we get $C_{\cP_{\pi,5}}(n)=3^{n-2}$ for $\pi=\pi_8\cdot(u)$, so we look at $\alpha_{\pi,5}$.
    It equals $3$ for $\pi\in\{02,03\}$, $\approx2.7321$ for $\pi\in\{11,22,33\}$, and $\alpha_5\approx2.6180$ otherwise.
    By the lexmin condition, we have three options to obtain this minimal growth rate: $\pi_9=\pi_8\cdot(12)$, $\pi_{10}=\pi_8\cdot(21)$, and $\pi_{11}=\pi_8\cdot(23)$.
    We set $\cand_5=\{\pi_9,\pi_{10},\pi_{11}\}$ and observe that $C_{\cS_{\pi_{10},5}}(n)=C_{\cS_{\pi_{11},5}}(n)$, while $C_{\cS_{\pi_9,5}}(n)$ has a larger coefficient in $\alpha_5^n$.\\
    \emph{Iteration 6}. 
    The minimal coefficient in $\alpha_5^n$ is the same for the arrangements beginning with all elements of $\cand_5$; due to the larger coefficient of $C_{\cS_{\pi_9,5}}(n)$ in $\alpha_5^n$, we drop the arrangements starting with $\pi_9$.
    To get the minimum coefficient in $\alpha_5^n$, the arrangements $\pi_{10},\pi_{11}$ should be extended by some $u=a2$, where $a\in\{0,1,2,3\}$.
    We define $\pi_{12}=\pi_{10}\cdot(32)$, $\pi_{13}=\pi_{11}\cdot(12)$, as they provide the growth rate $\alpha_6=\alpha_{\pi_{12},6}=\alpha_{\pi_{13},6}\approx2.3247$, which is smaller than for the extensions by other 2-mers.
    We set $\cand_5=\{\pi_{12},\pi_{13}\}$ and observe that $C_{\cS_{\pi_{13},5}}(n)$ has larger coefficient in $\alpha_6^n$ than $C_{\cS_{\pi_{12},5}}(n)$.\\
    \emph{Iteration 7}. 
    The minimal coefficient in $\alpha_6^n$ is the same for the arrangements beginning with both elements of $\cand_6$; due to the larger coefficient of $C_{\cS_{\pi_{13},6}}(n)$ in $\alpha_6^n$, we drop the arrangements starting with $\pi_{13}$.
    To get the minimum coefficient, $\pi_{12}$ should be extended by $u=a2$ for some letter $a$.
    Comparing the growth rates for various arrangements $\pi_{12}\cdot(u)$, we get the minimum $\alpha_7=2$ for $\pi_{14}=\pi_{12}\cdot(32)$ and set $\cand_7=\{\pi_{14}\}$.\\
    \emph{Iteration 8}.  
    The minimal coefficient in $\alpha_7^n$ is reached by all extensions of $\pi_{14}$ with the 2-mers ending in 1 or 3. 
    Among them, only $\pi_{15}=\pi_{14}\cdot(31)$ and $\pi_{16}=\pi_{14}\cdot(13)$ provide the minimum possible growth rate $\alpha_8=1$.
    Next we note that $\cS_{\pi_{15},8}$ consists of windows of the form $0^i2^j3^\ell 1$ and $1^i3^\ell 1$, implying $C_{\cS_{\pi_{15},8}}(n)=\Theta(n^2)$; similarly, $\cS_{\pi_{16},8}$ consists of windows of the form $0^i2^j3^\ell1^m 3$, implying $C_{\cS_{\pi_{16},8}}(n)=\Theta(n^3)$.
    Respectively, we set $\cand_8=\{\pi_{15}\}$.\\
    \emph{Iteration 9}. 
    There is a number of possible tails that can be added to the head $\pi_{15}$ to get an optimal arrangement.
    Given that the set of live windows is $\{0^i2^j3^\ell\mid i+j+\ell=n\}\cup\{1^i3^\ell\mid i+\ell=n\}$, all these tails can be found by the reasoning as in Theorem~\ref{t:23w_opt} or by a short computer search.
    One of them completes $\pi_{15}$ to $\rho$, which is essentially unique by definition.
\end{proof}

\subsection{To Section~\ref{ss:lower}} \label{ss:lboundproof}

\lbound*

\begin{proof}
    Let $\rho\in\U_{\sigma,k}$. 
    Then $\cA_{\rho,1}$ is strongly connected, except for $\rho[1]\in\{0^{k-1}1,01^{k-1}\}$ with $\sigma=2$ (Fig.~\ref{f:aut01}).
    In the strongly connected case, $\alpha_{\rho,2}<\alpha_{\rho,1}$, as $\rho[2]$ can be read in the major (and only) component.
    Now let $E=\{\rho\mid \rho[1]\in\{0^{k-1}1,01^{k-1}\}$, and let $E_1\subset E$ consist of UHS orders with the prefix $(01^{k-1},a01^{k-2})$, where $a\in\Sigma$.
    By Lemma~\ref{l:0001}, every element of $E'$ is asymptotically better than all elements of $E-E'$, so we can discard all $\rho\in E-E'$ from the computation of $\beta_1(\sigma,k,w)$.
    For any $\rho\in E'$ we have $\alpha_{\rho,2}<\alpha_{\rho,1}$, as the vertex $\lambda.a01^{k-2}$ is in the major component of $\cA_{\rho,1}$ (Fig.~\ref{f:aut01}b).
    So we can assume $i=1$ in the definition of $B(\rho,n)$.
    Hence, $B(\rho,n)=\sigma^w+C_{\cS_{\rho,1}}(n)+C_{\cP_{\rho,2}}(n)$.
    By Lemma~\ref{l:regular}, both $C_{\cS_{\rho,1}}(n)$ and $C_{\cP_{\rho,2}}(n)$ are numbers of walks between certain vertices of $\cA_{\rho,1}$ and thus satisfy the linear recurrence having the same characteristic polynomial as $\cA_{\rho,1}$.
    Note that this polynomial is of degree $k$, because $\cA_{\rho,1}$ has $k$ vertices for any $\rho$.
    
    Let $\alpha=\min_{\rho\in\U_{\sigma,k}} \alpha_{\rho,1}$.
    Starting from some $w=w_0$, the minimum in \eqref{e:lb} is reached on some $\rho$ with $\alpha_{\rho,1}=\alpha$. 
    By Lemma~\ref{l:GuOd}, this equality holds if and only if $u=\rho[1]$ has no periods.
    To find $\alpha$, we can take arbitrary $k$-mer $u$ with no periods; we
    take $u=0^{k-1}1$ and build the canonical DFA $\cA_u$ for the language $L_u$ avoiding $u$. (See Fig.~\ref{f:aut01}a for the case $\sigma=2$; if $\sigma>2$, $(\sigma-2)$ edges from any vertex to $\lambda$ should be added.)
    Then $\alpha=g(L_u)$ is the largest eigenvalue of the matrix $A$ of $\cA_u$.
    We compute its characteristic polynomial $\chi(r)=|rI-A|=r^k-\sigma r^{k-1}+1$, expanding along the first column of $rI-A$.

    Now we localize all other zeroes of the polynomial $\chi(r)$ on the complex plane, using the classical Rouch\'e theorem.
    The theorem states that if two functions $f_1(r),f_2(r)$ are analytic inside some region $K$ in the complex plane, the contour $\partial K$ of $K$ is a simple closed curve, and $|f_1(r)|>|f_2(r)|$ for all $r\in\partial K$, then $f_1(r)$ and $f_1(r)+f_2(r)$ have the same number of zeroes (counted with multiplicities) inside $K$.

    Let $\sigma>2$ and let $K$ be the unit disk centered at the origin.
    Hence, $\partial K=\{r\in\mathbb{C}\mid |r|=1\}$.
    Taking $f_1(r)=-\sigma r^{k-1}$, $f_2(r)=r^k+1$, we have, for every $r\in\partial K$, $|f_1(r)|=\sigma>2\ge |f_2(r)|$.
    Since $f_1(r)$ has $k-1$ zeroes inside $K$, by Rouch\'e's theorem $\chi(r)=f_1(r)+f_2(r)$ also has $k-1$ zeroes inside $K$ (i.e., with absolute value $<1$).

    We claim that $\chi(r)$ is the unique monic integer polynomial of degree $k$ with the zero $\alpha$.
    Aiming at a contradiction, assume that $f(r)$ is another such polynomial; then $h(r)=\gcd(\chi(r),f(r))$ has degree $<k$ and the zero $\alpha$.
    Then $\frac{\chi(r)}{h(r)}$ is an integer polynomial with all roots being non-zero (as $\chi(0)\ne 0$) and of absolute value $<1$, which is impossible.
    Hence we conclude that an arbitrary choice of the $k$-mer $u$ with no periods would result in the same characteristic polynomial $\chi(r)$.
    Thus, if the minimum in \eqref{e:lb} for some $w\ge w_0$ is reached on $\bar\rho$, then $\cA_{\bar\rho,1}$ has the characteristic polynomial $\chi(r)$.

    Now consider the case $\sigma=2$.
    For $\varepsilon>0$, let $K_\varepsilon$ be the disk of radius $1+\varepsilon$, centred at the origin, so that $\partial K_\varepsilon=\{r\in\mathbb{C}\mid |r|=1+\varepsilon\}$.
    For $f_1(r)=-2 r^{k-1}$, $f_2(r)=r^k+1$, and every $r\in\partial K_\varepsilon$, we have $|f_1(r)|=2(1+\varepsilon)^{k-1}$ and $|f_2(r)|\le (1+\varepsilon)^k+1$.
    Then there exists some $\varepsilon_0(k)>0$ such that $|f_1(r)|>|f_2(r)|$ on $\partial K_\varepsilon$ for all $\varepsilon<\varepsilon_0(k)$.
    Since $f_1(r)$ has $k-1$ zeroes inside $K_\varepsilon$,  by Rouch\'e's theorem $\chi(r)=f_1(r)+f_2(r)$ also has $k-1$ zeroes in $K_\varepsilon$.
    As $\varepsilon$ can be arbitrarily small, $\chi(r)$ has $k-1$ zeroes of absolute value $\le1$.
    Moreover, $\sigma=2$ implies $\chi(r)=(r-1)(r^{k-1}-r^{k-2}-\ldots-r-1)$, so 1 is a simple zero of $\chi(r)$ because $\sigma=2$ implies $k>2$ by the condition of the theorem.
    Next, if $|r|=1$ and $r\ne 1$, then $|f_1(r)|=2$ and $|f_2(r)|<2$, implying $\chi(r)\ne 0$.
    Hence, $\chi(r)$ has $k-2$ zeroes of absolute value less than 1.

    By the same argument as above, $\bar\chi(r)=r^{k-1}-r^{k-2}-\ldots-r-1$ is the unique monic integer polynomial of degree $k-1$ with the zero $\alpha$. 
    Thus, if the minimum in \eqref{e:lb} for some $w\ge w_0$ is reached on $\bar\rho$, then the characteristic polynomial of $\cA_{\bar\rho,1}$ equals $\hat\chi(r)=\bar\chi(r)(r-t)$ for a constant $t$.
    As $\bar\chi(r)$ is monic, $\hat\chi(r)$ is integer only if $t$ is integer; as $|t|\le\alpha<2$, we get $t\in\{-1,0,1\}$.

    We have finally established that if the minimum in \eqref{e:lb} for some $w\ge w_0$ is reached on $\bar\rho$, then both functions $C_{\cS_{\bar\rho,1}}(n)$ and $C_{\cP_{\bar\rho,2}}(n)$ satisfy the linear recurrence with the characteristic polynomial $\chi(r)$ or $\bar\chi(r)$, which has a simple zero $\alpha>1$, 1 or 0 simple zeroes of absolute value $1$, and all other zeroes of absolute value $<1$.
    Hence every solution to this recurrence can be written as $c\alpha^n+O(1)$ (or $c\alpha^n+o(1)$ if the characteristic polynomial has no zeroes of absolute value 1 as in the case $\sigma>2$), where $c$ is a non-negative constant.
    It remains to note that $\alpha$ equals $\alpha_{\sigma,k}$ from the conditions of the theorem, and the function $C_{\cS_{\bar\rho,1}}(n)$ has a positive coefficient before $\alpha^n$ by Lemma~\ref{l:PSformulas}(i).
    The theorem is proved.
\end{proof}

\section{Scripts} \label{ss:scripts}

Below are the scripts implementing in plain Python the algorithm of Theorem~\ref{t:exhaustive} with all improvements listed in Section~\ref{s:exhaust}.
The scripts are given for the binary alphabet and can be straightforwardly adapted for bigger alphabets.
The script \verb|exhaust_w(B,k,w)| returns the domain of the first found UHS order charging less than \texttt{B} windows, or reports that \texttt{B} is optimal.
The script \verb|optuhsorder(hs,B,k,w)| returns the first found UHS order with the domain \texttt{hs}, charging less than \texttt{B} windows.
Given $k$ and $w$, the first script can be used repeatedly to obtain, via binary search starting from some upper bound, the minimum possible number of charged windows.
After that, a UHS order with this number of charged windows is recovered by the second script.

We also provide an auxiliary script \verb|density(minw,maxw,k,order)| to compute density: it reports the number of windows charged by the UHS order \texttt{order} for all $w$ from \texttt{minw} to \texttt{maxw}.
Here, \texttt{order} is a 0-based array of ranks: \texttt{order[i]} is the rank in the input UHS order of the $k$-mer representing the number \texttt{i} in binary; the $k$-mers from outside of the UHS order get the rank $2^k$.

\begin{verbatim}
def exhaust_w(B,k,w): # B = upper bound on the number of charged windows
    ## based on transition tables, fast for big w, doesn't call anything ##
    gc=[2**w]*2**(k-1)  #initialize with the number of pref-charged windows
    table = [[0,0]]*2**k  #transition table of the deBrujin automaton
    for i in range(2**k):
        a = (i<<1) & (2**k - 1) # 0-successing kmer for i
        table[i] = [a, a+1] #creating transition table (on kmers!) 
                            #for initial computation of suf-charged windows
    for suff in range(2**(k-1)): #counting suf-charged windows for rank-0 kmer
        old=[1]*2**k
        new=[0]*2**k
        for i in range(w):
            old[suff] = 0 # discard windows having i not as a suffix
            for ii in range(2**k):
                new[table[ii][0]] += old[ii]
                new[table[ii][1]] += old[ii]
                old[ii] = 0
            old, new = new, old
        gc[suff] += old[suff]
    oldlist = [(2**i,gc[i]+2) for i in range(1,2**(k-1))]
    oldlist.append((1,gc[0]+1))
    newdict = dict()
    oldsize = 2**(k-1)
    for rank in range(1, 2**k):   #main cycle
        for j in range(oldsize):
            mask, value = oldlist.pop()
            c0,c1 = rank,0
            order = [0] * 2**k
            for kmer in range(2**k):  #create an order compatible with mask
                if (mask>>kmer) & 1:
                    order[kmer] = c1
                    c1 += 1
                else:
                    order[kmer] = c0
                    c0 +=1
            for i in range(2**k):
                a = (i<<1) & (2**k - 1) # 0-successing kmer for i
                table[order[i]] = [order[a], order[a+1]] #transition table
                                                         #(on ranks!) for order
            uhs = 2**k  # a variable checking whether mask defines a UHS
            for kmer in range(2**k):
                if (mask>>kmer) & 1 == 0:    #the bit of kmer is not set in the mask
                    score = value
                    newmask = mask + 2**kmer
                    if kmer == 0 or kmer == 2**k-1:
                        score -=1 # removing correction for 0..0 or 1..1
                    old=[0]*2**k
                    new=[0]*2**k
                    old[table[order[kmer]][0]]=1 #initialization for pref
                    old[table[order[kmer]][1]]=1 #initialization for pref
                    for i in range(1,w):    #DP for pref-gc
                        for ii in range(rank, 2**k):
                            new[table[ii][0]] += old[ii]
                            new[table[ii][1]] += old[ii]
                            old[ii] = 0
                        old, new = new, old
                    score += sum(old[rank:])
                    if score < B:
                        for i in range(rank):
                            new[i] = 0 #cleaning
                            old[i] = 0 #initialization for suff
                        for i in range(rank,2**k):
                            old[i] = 1 #initialization for suff
                        for i in range(w):
                            old[order[kmer]] = 0 # discard windows with kmer  as non-suffix
                            for ii in range(rank, 2**k):
                                new[table[ii][0]] += old[ii]
                                new[table[ii][1]] += old[ii]
                                old[ii] = 0
                            old, new = new, old
                        score += old[order[kmer]]
                        if score < B:
                            vall = newdict.get(newmask)
                            if vall == None or score < vall:   #new or improved set
                                newdict[newmask] = score  #add or update newdict entry
                            if score==value: #adding kmer to mask brings no new GCs
                                uhs -= 1
                else:   # the bit of kmer is set in the mask
                    uhs -= 1
            if uhs==0:  # adding any k-mer to the mask charges no windows
                return mask,value # first found UHS order beating the upper bound
        print(rank, 'dictsize:', getsizeof(newdict))
        oldlist = list(newdict.items())
        oldsize = len(oldlist)
        if oldsize == 0:
            return rank, 'upper bound is optimal'
        newdict.clear()
        oldlist.sort(key=lambda oldlist: oldlist[1])
        print(oldsize, 'listsize:', getsizeof(oldlist), oldlist[0], oldlist[oldsize-1])
    return oldlist[0]     # the optimal score; this point is never reached
\end{verbatim}

\begin{verbatim}
def optuhsorder(hs,B,k,w): # hs = UHS domain (mask), B = bound on charged windows
    ## explicitly stores best arrangements for subsets, doesn't call anything ##
    size = hs.bit_count()
    gc=[2**w]*2**(k-1)  #initialize with the number of prefix GCs
    table = [[0,0]]*2**k  #transition table of the deBrujin automaton
    for i in range(2**k):
        a = (i<<1) & (2**k - 1) # 0-successing kmer for i
        table[i] = [a, a+1] #creating transition table (on kmers!) 
                            #for initial computation of suf-charged windows
    for suff in range(2**(k-1)): #counting suf-charged windows for rank-0 kmer
        old=[1]*2**k
        new=[0]*2**k
        for i in range(w):
            old[suff] = 0 # discard windows having i not as a suffix
            for ii in range(2**k):
                new[table[ii][0]] += old[ii]
                new[table[ii][1]] += old[ii]
                old[ii] = 0
            old, new = new, old
        gc[suff] += old[suff]
    oldlist = [(2**i,(gc[i]+1+(i>0),i)) for i in range(2**(k-1)) if (hs>>i) & 1 ==1] 
    newdict = dict()
    oldsize = len(oldlist)
    for rank in range(1, size):   #main cycle
        for j in range(oldsize):
            mask, value = oldlist.pop()
            c0,c1 = rank,0
            order = [0] * 2**k
            for kmer in range(2**k):  #create an order compatible with mask
                if (mask>>kmer) & 1:
                    order[kmer] = c1
                    c1 += 1
                else:
                    order[kmer] = c0
                    c0 +=1
            for i in range(2**k):
                a = (i<<1) & (2**k - 1) # 0-successing kmer for i
                table[order[i]] = [order[a], order[a+1]] #transition table 
                                                         #(on ranks) for order
            for kmer in range(2**k):
                if (~mask & hs)>>kmer & 1: #kmer's bit is set in hs but not in mask
                    score = value[0]
                    newmask = mask + 2**kmer
                    if kmer == 0 or kmer == 2**k-1:
                        score -=1 # removing correction for 0..0 or 1..1
                    old=[0]*2**k
                    new=[0]*2**k
                    old[table[order[kmer]][0]]=1 #initialization for pref
                    old[table[order[kmer]][1]]=1 #initialization for pref
                    for i in range(1,w):    # DP for pref-gc
                        for ii in range(rank, 2**k):
                            new[table[ii][0]] += old[ii]
                            new[table[ii][1]] += old[ii]
                            old[ii] = 0
                        old, new = new, old
                    score += sum(old[rank:])
                    if score < B:
                        for i in range(rank):
                            new[i] = 0 #cleaning
                            old[i] = 0 #initialization for suff
                        for i in range(rank,2**k):
                            old[i] = 1 #initialization for suff
                        for i in range(w):
                            old[order[kmer]] = 0 #discard windows with kmer as non-suffix
                            for ii in range(rank, 2**k):
                                new[table[ii][0]] += old[ii]
                                new[table[ii][1]] += old[ii]
                                old[ii] = 0
                            old, new = new, old
                        score += old[order[kmer]]
                        if score < B:
                            vall = newdict.get(newmask)
                            if vall == None or score < vall[0]:   #new or improved set
                                newdict[newmask] = (score, value[1]*2**k+kmer)  
                                #add or update newdict entry
        print(rank, 'dictsize:', getsizeof(newdict))
        oldlist = list(newdict.items())
        oldsize = len(oldlist)
        newdict.clear()
        oldlist.sort(key=lambda oldlist: oldlist[1])
        print(oldsize, 'listsize:', getsizeof(oldlist), oldlist[0], oldlist[oldsize-1])
    return oldlist[0]     # the optimal order
\end{verbatim}

\begin{verbatim}
def density(minw,maxw,k,order): #computes charged windows for w in [minw..maxw]
    table = [[0,0]]*2**k #transition table of the deBrujin automaton
    size = 2**k - order.count(2**k)
    c0 = size
    for kmer in range(2**k):  #completing order 
        if order[kmer]==2**k:
            order[kmer] = c0
            c0 +=1
    for i in range(2**k):
        a = (i<<1) & (2**k-1) #0-successing kmer for i
        table[order[i]] = [order[a], order[a+1]] #transition table
                                                 #(on ranks) for order
    total = [0]*maxw
    for rank in range(size):
        old=[0]*2**k
        new=[0]*2**k
        old[rank]=1 #initialization for pref
        for i in range(maxw):   #DP for pref-gc
            for ii in range(rank,2**k):
                for iii in range(2):
                    new[table[ii][iii]] += old[ii]
                old[ii] = 0
            old, new = new, old
            if i>= minw-1:
                total[i] += sum(old[rank:])
        for i in range(rank):
            new[i] = 0 #cleaning
            old[i] = 0 #initialization for suff
        for i in range(rank,2**k):
            old[i] = 1 #initialization for suff
        for i in range(maxw):
            old[rank] = 0 #discard windows with kmer of current rank as non-suffix
            for ii in range(rank,2**k):
                for iii in range(2):
                    new[table[ii][iii]] += old[ii]
                old[ii] = 0
            old, new = new, old
            if i>= minw-1:
                total[i] += old[rank]
    return total[minw-1:]
\end{verbatim}

\section{Lists of optimal UHS orders} \label{ss:listoptimal}

For reader's convenience, we mark in boldface the lowest $k$-mer distinguishing an order from the order in the previous row.

\begin{tabular}{l|l}
     $w$&Optimal UHS order ($\sigma=2$, $k=3$)\\
     \hline
     2--8& ($011,010,000,110,111$) \\
     8+& ($011,\pmb{001},101,000,110,111$)
\end{tabular}

\medskip\noindent
\begin{tabular}{l|l}
     $w$&Optimal UHS order ($\sigma=2$, $k=4$)\\
     \hline
     2& ($0101, 0001, 0100, 0111, 1101, 1100, 0011, 0000, 1111$) \\
     3--6& $(\pmb{0001},1001,1011,1110,1010,0000,1111,1101)$\\
     6,7,9,11& $(\pmb{0111},0110,0010,0000,1010,1110,1111)$\\
     6,8,10,12& $(0111,0110,0010,0000,\pmb{0101},1110,1111)$\\
     13--26& $(0111,\pmb{0001},0100,1011,1100,1010,1110,0000,1111)$\\
     27--55& $(0111,\pmb{0011},0001,0100,1100,1011,1110,0101,0000,1111)$\\
     56+& $(\pmb{0011},0001,1100,0100,1110,1011,1010,1111,0000)$\\
\end{tabular}

\medskip\noindent
\begin{tabular}{l|l}
     $w$&Optimal UHS order ($\sigma=4$, $k=2$)\\
     \hline
     2--5& ($01,31,21,32,30,10,22,33,00,12,10,11$) \\
     6--11& $(01,\pmb{32},31,01,30,21,00,33,22,12,10,11)$\\
     12--15& $(01,\pmb{31},21,32,01,30,00,33,22,12,10,11)$\\
     16,17& $(01,31,21,32,\pmb{03},23,01,33,22,13,12,10,00,11)$\\
     18& $(01,31,\pmb{23},01,03,13,21,22,32,30,12,10,00,33,11)$\\
     19--21& $(01,31,\pmb{30},10,23,22,32,10,12,33,11,01,00)$\\
     22--24& $(01,\pmb{30},31,10,10,12,32,13,33,22,03,01,11)$\\
     25+& $(01,30,\pmb{20},10,31,23,13,21,33,12,03,22,00,11)$\\
\end{tabular}

\medskip\noindent
\tabcolsep=1pt
\begin{tabular}{l|l}
     $w$&Optimal UHS order ($\sigma=2$, $k=5$)\\
     \hline
     2& \scriptsize{($01010,01000,00010,10010,01011,11010,01001,01100,01110,11011,00011,00110,01111,10011,00000,11000,11110,11111$)} \\
     3,4,9& $(\pmb{00001},10111,10011,10001,10110,10010,11110,01010,11111,00000)$\\
     4,6& $(00001,10111,10011,10001,10110,10010,\pmb{10101},11110,10100,10000,11111,00000)$\\
     4,8& $(00001,10111,10011,10001,10110,10010,10101,11110,\pmb{10000},11111,00000
)$\\
     5& $(00001,10111,10011,10001,10110,10010,\pmb{01010},11110,11010,11111,11000,00000)$\\
     7& $(00001,10111,10011,10001,10110,10010,\pmb{11110},01010,01000,00000,11111)$\\
     10& $(\pmb{01111},01100,10100,01110,00010,01101,00000,01010,10010,11110,11111)$\\
     11& $(01111,01100,10100,01110,00010,01101,00000,\pmb{10101},01001,11110,11111)$\\
     12--13& $(\pmb{01011},00111,00110,00010,10010,11110,11010,01100,00000,10111,10100,11011,10101,11111)$\\
     13--17& $(01011,00111,00110,00010,10010,11110,11010,01100,\pmb{11100},10111,00000,11011,10100,10101,11111)$\\
     18--31& $(01011,00111,00110,00010,11010,\pmb{10010},11110,11100,01100,00000,10100,10111,01010,01101,11111,01111)$\\
     32& $(01011,00111,\pmb{00001},10011,01000,11000,11110,01001,10111,01100,01010,11011,00000,11111)$\\
     33,36& $(01011,00111,00001,10011,01000,11000,11110,\pmb{10100},10010,10111,10101,10110,00000,11111)$\\
     34& $(01011,00111,00001,10011,01000,11000,11110,10100,10010,10111,10101,\pmb{11011},00000,11111)$\\
     35& $(01011,00111,00001,10011,01000,11000,11110,10100,10010,10111,\pmb{01010},11011,00000,11111)$\\
     37,38,42,43& $(\pmb{01111},00101,00011,00100,10100,10111,01100,01011,01101,00000,10011,11110,10101,11111)$\\
     39,40& $(01111,00101,00011,00100,10100,10111,01100,01011,01101,00000,\pmb{11100},11110,10101,11111)$\\
     41& $(01111,00101,00011,00100,10100,10111,01100,01011,01101,00000,\pmb{01110},11110,10101,11111)$\\
     44--48& $(01111,\pmb{01011},00011,00101,00010,10010,10100,01100,11100,10111,11011,11110,10101,00000,11111)$\\
     49--55& $(01111,01011,\pmb{00111},00011,00101,00010,10010,10100,01100,11100,10111,11011,11110,10101,00000,11111)$\\
     56,60,64& \scriptsize{$(01111,01011,00111,00011,00101,\pmb{00001},10010,10100,11000,01100,11100,10111,01101,11110,10101,00000,00100,11111)$}\\
     57,61,65& \scriptsize{$(01111,01011,00111,00011,00101,00001,10010,10100,11000,01100,11100,10111,01101,11110,10101,00000,\pmb{00010},11111)$}\\
     58,62,66& \scriptsize{$(01111,01011,00111,00011,00101,00001,10010,10100,11000,01100,11100,10111,01101,11110,10101,00000,\pmb{10001},11111)$}\\
     59,63& \scriptsize{$(01111,01011,00111,00011,00101,00001,10010,10100,11000,01100,11100,10111,01101,11110,10101,00000,\pmb{01000},11111)$}\\
     67,70,73& \scriptsize{$(01111,\pmb{00111},00011,00001,10011,01000,10100,11000,01100,11100,10111,01011,10110,01010,11110,00100,00000,11111)$}\\
     68,71& \scriptsize{$(01111,00111,00011,00001,10011,01000,10100,11000,01100,11100,10111,01011,\pmb{11011},10101,11110,00100,00000,11111)$}\\
     69,72& \scriptsize{$(01111,00111,00011,00001,10011,01000,10100,11000,01100,11100,10111,01011,\pmb{01101},10101,11110,10010,00000,11111)$}\\
     74--110& \scriptsize{$(01111,00111,00011,00001,\pmb{01000},10100,11000,10111,11010,01101,01011,10011,10010,11110,01010,00000,11111)$}\\
     111--117& \scriptsize{$(01111,00111,00011,00001,01000,\pmb{11000},10100,10111,11010,01101,01011,10011,10010,11110,01010,00000,11111)$}\\
     118,121& \scriptsize{$(\pmb{00111},00011,00001,01000,10100,11000,11100,01100,11010,11110,10111,01010,01011,01101,00100,00000,11111)$}\\
     119,122& \scriptsize{$(00111,00011,00001,01000,10100,11000,11100,01100,11010,11110,10111,\pmb{10101},10011,10110,00100,00000,11111)$}\\
     120& \scriptsize{$(00111,00011,00001,01000,10100,11000,11100,01100,11010,11110,10111,\pmb{00101},10110,00100,10101,00000,11111)$}\\
     123& \scriptsize{$(00111,00011,00001,01000,10100,11000,11100,01100,11010,11110,10111,00101,10110,00100,\pmb{01010},00000,11111)$}\\
     124--135& \scriptsize{$(\pmb{01011},00101,00010,10010,11010,11001,11100,11110,00111,10111,00001,01100,10100,01101,01010,00000,11111)$}\\
     136--223& $(01011,00101,00010,10010,\pmb{10101},11010,11001,11100,11110,00111,10111,00001,01100,11011,00000,11111)$\\
     224-261& $(01011,00101,00010,\pmb{10101},11010,10010,11001,11100,11110,00111,10111,00001,01100,11011,00000,11111)$\\
     262+& $(01011,00101,\pmb{10101},00010,11010,10010,11001,11100,11110,00111,10111,00001,01100,11011,00000,11111)$\\
\end{tabular}

\end{document}